     \newcommand{\Ran}{{\operatorname{Ran}}}
\newcommand{\Ker}{{\operatorname{Ker}}}
     \newcommand{\R}{{\mathbb{R}}}
\newcommand{\w}{{\rm w}}
\newcommand{\e}{{\rm e}}
\renewcommand{\i}{{\rm i}}
\renewcommand{\d}{{\rm d}}
\renewcommand{\Re}{{\rm Re}\,}
\renewcommand{\Im}{{\rm Im}\,}
\newcommand\inp[2][]{#1 \langle #2#1\rangle}
\newcommand\parb[2][]{#1 \big ( #2#1\big )}
\newcommand{\pp}{{\rm pp}}
\newcommand{\mand}{\text{ and }}
\newcommand{\mfor}{\text{ for }}
\newcommand{\D}{\mathcal{ D }}
     \theoremstyle{plain}
     \newtheorem{thm}{Theorem}[section]
     \newtheorem{prop}[thm]{Proposition}
     \newtheorem{lemma}[thm]{Lemma}
      \newtheorem{cor}[thm]{Corollary}
     \theoremstyle{definition}
     \newtheorem{defn}[thm]{Definition}
     \newtheorem{cond}[thm]{Condition}
\newtheorem{conds}[thm]{Conditions}
     \newtheorem{remark}[thm]{Remark}
     \newtheorem{remarks}[thm]{Remarks}
\newtheorem*{remarks*}{Remarks}
\newtheorem*{remark*}{Remark}
\newtheorem*{cond*}{Condition}
     \numberwithin{equation}{section}
\title[Second order perturbation theory]{Second order perturbation theory for embedded eigenvalues}
\author{J. Faupin}
\address[J. Faupin] {Institut de Math{\'e}matiques de Bordeaux, UMR-CNRS 5251 \newline Universit{\'e} de Bordeaux 1, 351 cours de la lib{\'e}ration, 33405 Talence Cedex, France \newline \textit{Partially supported by the Center for Theory in Natural Sciences, Aarhus University}}
 \email{jeremy.faupin@math.u-bordeaux1.fr} 
\author{J.S. M\o ller}
\address[J.S.  M\o ller]{Institut for  Matematiske
Fag \newline
Aarhus Universitet\\ Ny Munkegade, 8000 Aarhus C, 
Denmark}
\email{jacob@imf.au.dk}
\author{E. Skibsted}
\address[E. Skibsted]{Institut for  Matematiske
Fag \newline
Aarhus Universitet\\ Ny Munkegade, 8000 Aarhus C, 
Denmark}
\email{skibsted@imf.au.dk}
\date{\today}
\begin{document}

\begin{abstract}
We study second order perturbation theory for embedded eigenvalues of
an abstract class of self-adjoint operators. Using an extension of the
Mourre theory, under assumptions on the regularity of bound states
with respect to a conjugate operator, we prove upper semicontinuity of
the point spectrum and establish the Fermi Golden Rule criterion. Our
results apply to massless Pauli-Fierz Hamiltonians for arbitrary
coupling.
\end{abstract}

\maketitle
\tableofcontents

\section{Introduction}\label{Introduction}

In this second of a series of papers, we study second order
perturbation theory for embedded eigenvalues of an abstract class of
self-adjoint operators. Perturbation theory for isolated eigenvalues
of finite multiplicity is well-understood, at least if the family of
operators under consideration is analytic in the sense of Kato (see
\cite{Ka,RS}). The question is more subtle when dealing with
unperturbed eigenvalues embedded in the continuous spectrum. A method
to tackle this problem, which we shall not develop here, is based on
analytic deformation techniques and gives rise to a notion of
resonances. It appeared in \cite{AC,BC} and was further extended by
many authors in different contexts (let us mention \cite{Si,RS, JP,BFS}
among many other contributions).  As shown in \cite{AHS}, another way
of studying the behaviour of embedded eigenvalues under perturbation
is based on Mourre's commutator method (\cite{Mo}). We shall develop
this second approach from  an abstract point of view in this paper.

We mainly require two conditions: The first one corresponds to a set of assumptions needed in order to use the Mourre method (see Conditions \ref{cond:perturbation1} below). We shall work with an extension of the Mourre theory which we call \emph{singular Mourre theory}, and which is closely related to the ones developed in \cite{Sk,MS,GGM1}. Singular Mourre theory refers to the situation where the commutator of the Hamiltonian with the chosen ``conjugate operator'' is not controlled by the Hamiltonian itself. The \emph{regular Mourre theory}, studied for instance in \cite{Mo,ABG,HuSp,HS,Ca,CGH}, is a particular case of the theory considered here. A feature of singular Mourre theory is to allow one to derive spectral properties of so-called Pauli-Fierz Hamiltonians. This shall be discussed in Section \ref{section:Pauli-Fierz}.

Our second set of assumptions concerns the regularity of bound states with respect to a conjugate operator (see Conditions \ref{cond:perturbation2b}, \ref{cond:perturbation2c} and \ref{cond:perturbation2} below). Related questions are discussed in details, in an abstract framework, in the companion paper \cite{FMS} (see also \cite{Ca,CGH}).

Our main concerns are to study upper semicontinuity of point spectrum
(Theorem \ref{thm:upper}) and to show that the Fermi Golden Rule
criterion  (Theorem \ref{thm:fermi-golden-rule}) holds. If the Fermi
Golden Rule condition is not fulfilled  we shall still obtain an expansion to second order of perturbed eigenvalues. Before precisely stating our results and comparing them to the literature, we introduce the abstract framework in which we shall work.

\subsection{Assumptions}\label{subsection:assumptions}

Let $\mathcal{H}$ be a complex Hilbert space. Suppose that $H$ and $M$
are self-adjoint operators on $\mathcal{H}$, with $M \ge 0$, and
suppose that a symmetric operator $R$ is given such that $\D(R)
\supseteq \D(H)$. Let $H' := M+R$ defined on
$\D:=\D(M) \cap \D(H)$. Let $\mathcal{G} := \D( |H|^{1/2} ) \cap \D(
M^{1/2} )$ equipped with the intersection topology. Let $A$ be a
closed, densely defined, maximal symmetric operator on $\mathcal{H}$. In particular, introducing
deficiency indices $n_{\mp}=\dim \Ker (A^*\pm \i)$, either $n_+=0$ or
$n_-=0$. If $n_+=0$ (or $n_-=0$) $B:=A$ (or $B:=-A$) generates a $C_0$-semigroup of isometries $\{ W_t \}_{t \ge 0}$ (see \cite{RS,ABG}). We recall that a map $\mathbb{R}^+ \ni t \mapsto W_t \in \mathcal{B}( \mathcal{ H } )$ is called a $C_0$-semigroup if $W_0=I$, $W_tW_s=W_{t+s}$ for $t,s\ge0$, and $w\text{-}\lim_{t\to 0^+} W_t = I$. Here $\mathcal{B}( \mathcal{H} )$ denotes the set of bounded operators on $\mathcal{H}$. The generator $B$ of a $C_0$-semigroup $\{ W_t \}_{t \ge 0}$ is defined by
\begin{equation}
\D (B) := \{ u \in \mathcal{H} , Bu := \lim_{t \to 0^+} ( \i t )^{-1} ( W_t u - u ) \text{ exists} \}.
\end{equation} 
We write $W_t = e^{\i tB}$. 

For any Hilbert spaces $\mathcal{H}_1$ and $\mathcal{H}_2$, we denote by $\mathcal{B}( \mathcal{H}_1 ; \mathcal{H}_2 )$ the set of bounded operators from $\mathcal{H}_1$ to $\mathcal{H}_2$. We use the notation $\langle B \rangle := ( 1 + B^*B )^{1/2}$ for any closed operator $B$. Throughout the paper, $C_j$, $j=1,2,\dots$, will denote positive constants that may differ from one proof to another. Let us recall the following definition from \cite{GGM1}:
\begin{defn}
Let $\{ W_{1,t} \}$, $\{W_{2,t} \}$ be two $C_0$-semigroups on Hilbert spaces $\mathcal{H}_1$, $\mathcal{H}_2$ with generators $A_1$, $A_2$ respectively. A bounded operator $B \in \mathcal{B}( \mathcal{H}_1 ; \mathcal{H}_2) $ is said to be in $\mathrm{C}^1(A_1 ; A_2)$ if
\begin{equation}
\| W_{2,t} B - B W_{1,t} \|_{\mathcal{B}( \mathcal{H}_1 ; \mathcal{H}_2 )} \le C t , \quad 0 \le t \le 1,
\end{equation}
for some positive constant $C$. 
\end{defn}
\begin{remarks}\begin{enumerate}[\quad\normalfont 1)] 
\item \label{item:f1} By \cite[Proposition 2.29]{GGM1}, $B\in \mathcal{B}( \mathcal{H}_1 ;
\mathcal{H}_2) $ is of class $\mathrm{C}^1( A_1 ; A_2 )$ if and only
if the sesquilinear form $_2[B,\i A]_1$ defined on $\D(A_2^*) \times
\D(A_1)$ by $\inp {\phi,  {_2}[B,\i A]_1 \psi } = \i\inp {B^* \phi ,
  A_1 \psi } - \i\inp {A_2^* \phi , B \psi}$ is bounded relatively to
the topology of $\mathcal{H}_2 \times \mathcal{H}_1$. The associated
bounded operator in $\mathcal{B}( \mathcal{H}_1 ; \mathcal{H}_2 )$ is
denoted by $[B,\i A]^0$ and we have
\begin{equation}\label{eq:vir}
[B,\i A]^0 = \mathrm{s}\text{-}\lim_{t\to0^+}  t^{-1} [ B W_{1,t} - W_{2,t} B ].
\end{equation}
\item \label{item:f2}We  recall (see \cite{ABG}) that if $A$ and $B$
  are  self-adjoint operators on a Hilbert space $\mathcal 
H$, $B$ is said to be in $\mathrm{C}^1(A)$ if there exists $z\in
\mathbb{C} \setminus \mathbb{R}$ such that $(B-z)^{-1}\in
\mathrm{C}^1( A ; A )$ (meaning here that $\mathcal 
H_j=\mathcal 
H$  and $A_j=A$, $j=1,2$). In that case in fact $(B-z)^{-1}\in
\mathrm{C}^1( A ; A )$ for all $z\in \rho(B)$ ($\rho(B)$ is the resolvent set of $B$).
\item \label{item:f3} The standard Mourre class, cf. \cite {Mo}, is a
  subset of $\mathrm{C}^1(A)$ given as follows: Notice that for any
  $B\in \mathrm{C}^1(A)$ the
  commutator form $[B,\i A]$ defined on $\D(B)\cap \D(A)$ extends
  uniquely (by continuity) to a bounded form $[B,\i A]^0 $ on
  $\D(B)$. We shall say that $B$ is {\it Mourre-$C^1(A)$} if $[B,\i A]^0
  $ is a $B$-bounded operator on $\mathcal 
H$. The  subclass of {\it Mourre-$C^1(A)$} operators in $\mathrm{C}^1(A)$ is in this paper denoted by $\mathrm{C}^1_{\rm Mo}(A)$.
\end{enumerate}
\end{remarks}

Let us now state our first set of conditions:

\begin{conds}\label{cond:perturbation1}$\quad$

 \begin{enumerate}[\quad\normalfont (1)]
\item \label{item:g1} $H \in \mathrm{C}^1_{\rm Mo}(M)$. 
\item \label{item:g2} There is an interval $I \subseteq \mathbb{R}$ such that for all $\eta \in I$, there exist $c_0>0$, $C_1 \in \mathbb{R}$, $f_\eta \in \mathrm{C}_0^\infty( \mathbb{R})$, $0\le f_\eta \le 1$ and $f_\eta=1$ in a neighbourhood of $\eta$, and a compact operator $K_0$ on $\mathcal{H}$ such that, in the sense of quadratic forms on $\mathcal{D}$,
\begin{equation}\label{eq:Mourre_estimate}
M + R \ge \mathrm{c}_0 I - C_1 f_\eta^\perp(H)^2 \langle H \rangle - K_0,
\end{equation}
where $f_\eta^\perp(H) = 1 - f_\eta(H)$.
\item \label{item:g3}$\mathcal{G}$ is ``\emph{boundedly-stable}'' under $\{ W_t \}$ and $\{ W_t^* \}$ i.e. $W_t \mathcal{G} \subseteq \mathcal{G}$, $W_t^* \mathcal{G} \subseteq \mathcal{G}$, $t  > 0$, and for all $\phi \in \mathcal{G}$,
\begin{equation}
\sup_{0<t<1} \| W_t \phi \|_\mathcal{G} <  \infty, \quad \sup_{0<t<1} \| W_t^* \phi \|_\mathcal{G} < \infty.
\end{equation}
Let $A_\mathcal{G}$
denote the generator of the $C_0$-semigroup $W_t |_\mathcal{G}$ 
(see \linebreak \cite[Lemma 2.33]{GGM1} for justification). Let $A_{\mathcal{G}^*}$ denote the
generator of the $C_0$-semigroup given as  the extension of $W_t$ to $\mathcal{G}^*$. 
\item \label{item:g4}$H\in \mathrm{C}^1(
  A_\mathcal{G};A_{\mathcal{G}^*})$ and the operator $H' = M+R$ satisfies $H' = [H,\i A]^0 \in
  \mathrm{C}^1( A_\mathcal{G};A_{\mathcal{G}^*})$ (see Remark
  \ref{rk:perturbation1} \ref{item:h1}) for justification of notation). We set $H'':=[H',\i A]^0$.
\end{enumerate}
\end{conds}
\begin{remarks}\label{rk:perturbation1} \begin{enumerate}[\quad\normalfont 1)]
\item \label{item:h1} It follows from Condition
  \ref{cond:perturbation1} (\ref{item:g1}) that $\D
  $ is a core for $H$ as well as for $M$,
  cf. \cite{ABG,GG}. This condition is transcribed from \cite{Sk} and is
  stronger than \cite[(M1)]{GGM1}, cf. \cite[Lemma 2.26]{GGM1}. Another
  consequence of Condition
  \ref{cond:perturbation1} (\ref{item:g1}) is the following
  alternative description of the space $\mathcal G$: Let $G$ be the
  Friedrichs extension of the operator $M+\inp{H}$ on
  $\mathcal{D}$. Then $\D(\sqrt G)=\mathcal G$; this follows from
  \cite[Proposition 3.8]{GGM1}. (For the readers convenience  we remark
  that the statement   actually can be
  proved directly by using elementary interpolation,
  cf. \cite[(3.14)]{FMS}.) In particular $\D
  $ is dense in $\mathcal G$,  and we can consider
  $H'$ as a bounded operator in the sense used in  Condition
  \ref{cond:perturbation1} (\ref{item:g4}): $H'\in \mathcal B(\mathcal
  G; \mathcal G^*)$. 
\item \label{item:h2} Suppose  Conditions
  \ref{cond:perturbation1}. Then the following identity holds for all
  $\phi_1 \in \D \cap \D(A^*)$ and  $\phi_2 \in \D \cap \D(A)$:
\begin{equation}\label{eq:5a}
\langle \phi_1 , (M + R ) \phi_2 \rangle = \i \langle H \phi_1 , A \phi_2 \rangle - \i \langle A^* \phi_1 , H \phi_2 \rangle.
\end{equation} This is a consequence of (\ref{eq:vir}). Another
(related) consequence of (\ref{eq:vir}) is the following  version of the
so-called virial theorem: For any eigenstate, $(H-\lambda)\psi=0$, with  $\psi\in \D (M^{1/2})$
\begin{equation}
  \label{eq:5b}
  \langle \psi , (M + R ) \psi \rangle = 0.
\end{equation}
\item \label{item:h3} The conditions of the regular Mourre theory
  considered for instance in \cite{Mo,ABG,HuSp,HS,Ca,CGH} constitute a particular
  case of Conditions \ref{cond:perturbation1} assuming that $M=0$. In
  \cite{Mo,ABG,HS,Ca,CGH}, the conjugate operator $A$ is supposed to
  be self-adjoint, whereas in \cite{HuSp} the weaker assumption that
  $A$ is the generator of a $C_0$-semigroup of isometries is required.
  Notice that in the case where $M=0$ and $A$ is
  self-adjoint  Condition
  \ref{cond:perturbation1} (\ref{item:g3}) appears  replaced by the stronger
  condition: 
$\sup_{|t|<1} \|\e^{\i t A} \phi\|_{\D (H)} <  \infty$ for any
$\phi\in \D (H)$. By \cite[Lemma 10.2.1]{HP} if $\D(H)$ is a separable Hilbert space
 the latter condition 
 is a consequence of the fact that $e^{ \i t A } \D( H ) \subseteq \D( H )$ for all $t \in \mathbb{R}$.
 It should also be noticed that the
  boundedness of $H''$ with respect to $H$ is often required in the
  regular Mourre theory. Condition \ref{cond:perturbation1} (\ref{item:g4}) leads
  to the weaker assumption that $\langle H \rangle^{-1/2} H'' \langle H
  \rangle^{-1/2}$ is bounded.
  \item \label{item:h5} The idea of splitting the formal commutator $\i
  [H,A]$ into an $H$-unbounded piece, $M$, and a $H$-bounded piece,
  $R$, appeared first in  \cite{Sk}. As it was shown in \cite{Sk}, and
  later in \cite{GGM}, this extension of the Mourre theory allows one
  to study spectral properties of $N$-body systems coupled to bosonic
  fields (see also \cite{MS} for the use of related assumptions in a
  different context). This will be discussed more precisely in the
  next section. 
\item \label{item:h4} We notice that Conditions
  \ref{cond:perturbation1} (with $K_0=0$ in (\ref{item:g2})) are stronger than Hypotheses (M1)--(M5) used in \cite{GGM1} (the operator $H'$ in \cite{GGM1} is supposed to be closed; it corresponds to the closure of the operator $H'$ considered in this paper). Therefore, in particular, the results proved in \cite{GGM1} hold under Conditions \nolinebreak \ref{cond:perturbation1}.
\end{enumerate}
\end{remarks}

Throughout the discussion below we impose (mostly tacitly) Conditions
  \ref{cond:perturbation1}.
We introduce the following  classes of operators (to be considered as
classes of ``perturbations''): 
\begin{defn}
  \label{defn:results-under-least} We say that a symmetric operator
  $V$ with $\mathcal{D}(V)\supseteq\mathcal{D}(H)$,  $\epsilon$-bounded  relatively to $H$, is in $\mathcal V_1$ if $V\in \mathrm{C}^1(
  A_\mathcal{G};A_{\mathcal{G}^*})$ and $V' := [ V , \i A ]^0$ is given as an $H$-bounded
  operator. 
For any $V \in \mathcal V_1$, we set 
\begin{align}
  &\| V \|_1 := \| V (H-\i)^{-1} \|+
  \| V' (H-\i)^{-1}\|.
\end{align}
\end{defn}

It follows from the Kato-Rellich Theorem that for any  $V\in \mathcal
V_1$ the operator $H+V$ is self-adjoint with $\D(H+V) = \D(H)$. 

\begin{defn}\label{def:W2}
We say that $V\in \mathcal{V}_1$  is in $\mathcal V_2$ if $V'\in \mathrm{C}^1(
  A_\mathcal{G};A_{\mathcal{G}^*})$,
  and we set
\begin{align}
  \label{eq:5bis}
  &\| V \|_2 := \| V \|_1+ \| V'' \|_{\mathcal B (\mathcal G; \mathcal G^*)},
\end{align} 
where $V'' := [ V' , \i A ]^0$.
\end{defn}

Our main assumptions on the unperturbed eigenstates are stated in Condition \ref{cond:perturbation2b} and in its stronger version Condition \ref{cond:perturbation2c}.

\begin{cond}\label{cond:perturbation2b}
If $\lambda \in I$ is an eigenvalue of $H$, any eigenstate $\psi$ associated to $\lambda$, $H \psi = \lambda \psi$, satisfies $\psi \in \D(A) \cap \D(M)$. 
\end{cond}
\begin{remark} \label{remarks:domain} Under Condition
  \ref{cond:perturbation2b} and with $\psi$ given as there, one
  verifies using \eqref{eq:vir} and  the fact that
  $\D$ is dense in $\D(H)$  that  $\psi \in \D (HA):=\{\phi\in
  \D (A)|\,A\phi\in \D (H)\}$, cf. Remark
  \ref{rk:perturbation1} \ref{item:h2}).
\end{remark}
\begin{cond}\label{cond:perturbation2c}
If $\lambda \in I$ is an eigenvalue of $H$, any eigenstate $\psi$ associated to $\lambda$, $H \psi = \lambda \psi$, satisfies $\psi \in \D(A^2) \cap \D(M)$. 
\end{cond}

The (possibly existing) perturbed eigenstates may fulfil the following condition:

\begin{cond}\label{cond:perturbation2} For any compact
  interval $J \subseteq I$ there exist 
  $\gamma>0$ and a subset $\mathcal{B}_{1,\gamma}$ of the ball centered at 0 with radius $\gamma$ in  $\mathcal{V}_1$,
  \begin{equation}
  \mathcal{B}_{1,\gamma} \subseteq \{ V \in \mathcal V_1 , \| V \|_1 \le \gamma \},
  \end{equation}
  such that $0 \in \mathcal{B}_{1,\gamma}$, $\mathcal{B}_{1,\gamma}$ is star-shaped and symmetric with respect to 0, and the following holds: There exists $C>0$
  such that, if $V\in \mathcal{B}_{1,\gamma}$ and $(H+V-\lambda) \psi=0$
  with $\lambda\in J$, then $\psi \in \D(A) \cap \D(M)$ and
  \begin{equation}
    \label{eq:12i}
    \|A\psi\| \leq C \|\psi\|.
  \end{equation}
\end{cond}

The following  two conditions are needed for our version of
the so-called Fermi Golden Rule criterion. The first condition is a
technical addition to Conditions
  \ref{cond:perturbation1}:

\begin{cond}\label{cond:tech} 
$\D (M^{1/2})\cap \D (H)\cap \D(A^*)$ is dense in $\D(A^*)$.
\end{cond}
\begin{remarks}\label{remarks:ff} 
\begin{enumerate}[\quad\normalfont 1)]
\item \label{item:p1} Suppose the following modification of the part
  of Condition \ref{cond:perturbation1} (\ref{item:g3}) concerning
  the adjoint semigroup: $\D$ is bounde\-dly-stable under $\{ W_t^* \}$ i.e. $W_t^* \mathcal{D} \subseteq \mathcal{D}$, $t  > 0$, and for all $\phi \in \mathcal{D}$,
\begin{equation}\label{eq:5D}
\sup_{0<t<1} \| W_t^* \phi \|_\mathcal{D} < \infty.
\end{equation} Then $\D \cap \D(A^*)$ is dense in $\D(A^*)$,
cf. \cite[Remark 2.35]{GGM1}. This statement is of course stronger than Condition
  \ref{cond:tech}.
\item \label{item:p2} In our applications  Condition
  \ref{cond:tech}  can be avoided upon changing the definition of
  $\mathcal{V}_1$. Explicitly this modification is given by imposing  in Definition \ref{defn:results-under-least}
  the following additional  condition (replacing $\epsilon$-boundedness with respect to $H$): $V$ is
  $\inp{H}^{1/2}$-bounded. (See  Remark 
\ref{rk:Fermi} \ref{item:fermi1}).)
\end{enumerate}
\end{remarks}

Our second  condition is the so-called Fermi Golden Rule condition. 

\begin{cond}
  \label{cond:fermi-golden-rule} Suppose Conditions
  \ref{cond:perturbation2b} and 
  \ref{cond:tech}. Suppose $\lambda\in \sigma_\pp( H)$ and
  let $P$ denote the eigenprojection $P=E_H(\{\lambda\})$ and $\bar P=I-P$. For given
  $V \in \mathcal V_1$  there exists  $c>0$ such that 
  \begin{equation}
    \label{eq:18}
    P V \Im \parb{ (H-\lambda-\i 0)^{-1}\bar P} V P\geq c P.
  \end{equation}
\end{cond}

We shall see in Section \ref{Reduced limiting absorption principle at
  an eigenvalue} that the left-hand-side of \eqref{eq:18} defines a
bounded operator for any $V \in \mathcal V_1$ (see  Remark 
\ref{rk:Fermi} \ref{item:fermi1}) for  details). This point might be  surprising for
the reader due to the low
degree of regularity imposed  by Condition
  \ref{cond:perturbation2b} (for example $P$ may not map into
  $\D(A^2)$ under the stated conditions, see the end of the next
  subsection for a further discussion).

\subsection{Main results}\label{subsection:main_results}

We have the following result on  upper semicontinuity of the point spectrum of $H$, showing, in other words, that the total multiplicity of the perturbed eigenvalues near an unperturbed one, $\lambda$, cannot exceed the multiplicity of $\lambda$.
\begin{thm}
  \label{thm:upper} Assume that Conditions \ref{cond:perturbation1}
  and Condition \ref{cond:perturbation2} hold. Let $\lambda\in I$ and $J\subseteq I$ be a compact interval
  including $\lambda$ such that $\sigma_\pp( H)\cap J = \{\lambda\}$. Fix $\gamma > 0$ and $\mathcal B_{1,\gamma}$ as in Condition \ref{cond:perturbation2}. There exists $0 < \gamma' \le \gamma$ 
  such that if $V\in \mathcal{B}_{1,\gamma}$ and $\| V \|_1\leq \gamma'$, the total multiplicity of the
  eigenvalues of $H+V$ in $J$ is at most $\dim \Ker (H-\lambda)$.
\end{thm}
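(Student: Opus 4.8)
The plan is to reduce the statement to a finite-dimensional eigenvalue-counting argument by constructing, for each small perturbation $V \in \mathcal{B}_{1,\gamma}$, a family of spectral subspaces whose dimension is controlled by $\dim\Ker(H-\lambda)$. First I would set $P := E_H(\{\lambda\})$, $d := \dim\Ker(H-\lambda) = \operatorname{rank} P$ (finite under the stated hypotheses, since the Mourre estimate \eqref{eq:Mourre_estimate} combined with Condition \ref{cond:perturbation1} (\ref{item:g2}) forces finite multiplicity of embedded eigenvalues in $I$), and let $J$ be the compact interval with $\sigma_\pp(H)\cap J=\{\lambda\}$. The key analytic input is the singular Mourre theory available under Conditions \ref{cond:perturbation1}: by Remark \ref{rk:perturbation1} \ref{item:h4}, the results of \cite{GGM1} apply, so on $J$ (shrunk if necessary so that $\mathrm{c}_0$ dominates the $f_\eta^\perp$ and $K_0$ terms away from $\lambda$) one has a limiting absorption principle for $H$ on $J\setminus\{\lambda\}$ and, crucially, a \emph{reduced} limiting absorption principle at $\lambda$ for the operator $\bar P H \bar P$ with $\bar P = I-P$ — this is exactly what Section \ref{Reduced limiting absorption principle at an eigenvalue} is set up to provide. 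The perturbation stability in $\mathcal{V}_1$ (with the bound $\|V\|_1\le\gamma$) should make these estimates uniform in $V\in\mathcal{B}_{1,\gamma}$ for $\gamma$ small.

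Next I would run a Feshbach/Grushin-type reduction. Write $H+V-z = \begin{pmatrix} P(H+V-z)P & PVP^\perp \\ P^\perp V P & P^\perp(H+V-z)P^\perp\end{pmatrix}$ relative to $\mathcal{H}=P\mathcal{H}\oplus\bar P\mathcal{H}$, using $PHP^\perp = 0$. For $z$ near $\lambda$ with $\Im z \neq 0$, the lower-right block $\bar P(H+V-z)\bar P$ is invertible on $\bar P\mathcal{H}$ with norm control coming from the reduced LAP at $\lambda$ plus a Neumann series in $V$ (here $\|V\|_1\le\gamma'$ small is used); call its inverse $\bar R(z)$. The Feshbach operator is then $F_V(z) := P(H+V-\lambda)P - PV\bar P\,\bar R(z)\,\bar P V P$, a family of $d\times d$ matrices (identifying $P\mathcal{H}\cong\C^d$), analytic in $z$ for $\Im z\neq 0$ near $\lambda$, and the Schur complement formula gives that, for such $z$, $\dim\Ker(H+V-z) = \dim\Ker F_V(z)$, and more usefully that eigenvalues of $H+V$ in a small neighbourhood of $\lambda$ correspond to real $z$ at which $F_V(z)$ (extended by the limiting values $\bar R(z)\to\bar R(z\mp\i0)$, which exist and are Hölder by the reduced LAP) is non-invertible. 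Since $F_V(z)$ acts on a $d$-dimensional space, it has at most $d$ eigenvalues counted with multiplicity for each $z$; one then argues that the solutions of $\det F_V(z)=0$ in $J$, counted with multiplicity, number at most $d$, e.g. by an argument-principle / Rouché comparison with the unperturbed $V=0$ case where $F_0(z)=P(H-\lambda)P=0$ is degenerate but the perturbed determinant is a genuine (nonzero, by the reduced LAP giving invertibility of $\bar R$) analytic perturbation.

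The subtle point — and the main obstacle — is the low regularity imposed by the hypotheses: Condition \ref{cond:perturbation2} only guarantees perturbed eigenstates $\psi$ with $(H+V-\lambda)\psi=0$ satisfy $\psi\in\D(A)\cap\D(M)$ with $\|A\psi\|\le C\|\psi\|$, and $P$ need not map into $\D(A^2)$. So I cannot freely apply the conjugate operator twice, and the reduced LAP at $\lambda$ must be established using only $C^1$-type regularity of $H$ and $\bar P H\bar P$ with respect to $A_\mathcal{G}, A_{\mathcal{G}^*}$ — this is where the full strength of Conditions \ref{cond:perturbation1} (\ref{item:g3})--(\ref{item:g4}), and the identification $\D(\sqrt G)=\mathcal{G}$ from Remark \ref{rk:perturbation1} \ref{item:h1}), is needed, together with a positive commutator estimate for $\bar P H\bar P$ near $\lambda$ obtained by projecting \eqref{eq:Mourre_estimate} with $\bar P$ and absorbing the $PVP^\perp$ cross terms via the a priori bound \eqref{eq:12i}. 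Concretely: for $V\in\mathcal{B}_{1,\gamma}$ one uses \eqref{eq:12i} to show that any eigenstate of $H+V$ in $J$ near $\lambda$ lies in a fixed graph-norm ball, then the virial-type identity \eqref{eq:5b} (adapted to $H+V$, using $V'$) together with the strict Mourre bound away from $\lambda$ forces such eigenvalues to cluster only at $\lambda$; finally a compactness/continuity argument in $V$ shows the total multiplicity in $J$ stays $\le d$ for $\|V\|_1\le\gamma'$. I would present the reduced LAP and the a priori bound as the two technical lemmas to be invoked from later sections, and keep the Feshbach counting argument self-contained.
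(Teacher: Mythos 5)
Your route (Feshbach reduction plus a zero--counting of $\det F_V$) is genuinely different from the paper's, and it has a gap that I do not see how to close under the stated hypotheses. The counting step relies on an argument--principle/Rouch\'e comparison, which requires $z\mapsto F_V(z)$ to be analytic in a complex neighbourhood of $\lambda$. But $\lambda$ is an \emph{embedded} eigenvalue: the reduced resolvent $\bar R(z)$ only admits boundary values $\bar R(\mu\pm\i 0)$ that are H\"older continuous of order $s-1/2$ on the real axis (Theorem \ref{thm:reduc-limit-absorpt2}); there is no analytic continuation across the essential spectrum without analytic deformation, which is precisely the technique this paper avoids. A H\"older--continuous $d\times d$ family on a real interval can be singular at infinitely many points, so ``at most $d$ zeros of $\det F_V$'' does not follow. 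The usual rescue --- monotonicity $\tfrac{\d}{\d\mu}F_V(\mu)\le -I$ via $-I-PV\bar R(\mu)^2VP$ --- is unavailable on the essential spectrum, where $\bar R(\mu\pm\i0)^2$ is not a negative bounded operator. A secondary problem: inverting $\bar P(H+V-z)\bar P$ by a Neumann series uniformly up to the real axis requires $V$ to be bounded between the weighted spaces $\D(\langle A\rangle^{s})$ and $\D(\langle A\rangle^{s})^*$, which $\|V\|_1\le\gamma'$ does not provide; and the equivalence ``$\lambda_V\in\sigma_\pp(H+V)$ iff $F_V(\lambda_V\mp\i0)$ singular'' itself needs proof at this level of regularity.

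The paper's proof is much more elementary and uses neither the Feshbach map nor any limiting absorption principle. Following \cite{AHS}, for each $\eta\in J$ one sets $\bar H=H+\alpha_J E_H(\{\eta\})$ and derives from \eqref{eq:Mourre_estimate} a strictly positive commutator estimate for $\bar H+V$ on a small interval around $\eta$ (the compact error is absorbed because $\eta\notin\sigma_\pp(\bar H)$). If $H+V$ had eigenvalues of total multiplicity exceeding $\dim\Ker(H-\eta)$ in $(\eta-\beta_\eta,\eta+\beta_\eta)$, one picks a normalized linear combination $\psi$ of the perturbed eigenvectors with $E_H(\{\eta\})\psi=0$; Condition \ref{cond:perturbation2} puts $\psi\in\D\cap\D(A)$ with $\|A\psi\|\le C$, and the virial identity \eqref{eq:5a} turns the commutator expectation into $\i\langle(\bar H+V-\eta)\psi,A\psi\rangle-\i\langle A\psi,(\bar H+V-\eta)\psi\rangle$, bounded by $2\beta_\eta\|A\psi\|$. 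This yields $3^{-1}c_0\le\beta_\eta(2\|A\psi\|+C\beta_\eta)$, a contradiction for $\beta_\eta$ small; a covering argument over $J$ finishes. Your closing paragraph gestures at exactly these ingredients (the a priori bound \eqref{eq:12i} and the virial identity), but as written they only play a supporting role to a Feshbach scheme that does not go through; they are in fact the whole proof.
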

Notice that the appearing quantity  $\dim \Ker (H-\lambda)$ is
finite. This is in fact a
consequence of Conditions \ref{cond:perturbation1} and Condition
\ref{cond:perturbation2b}, cf. Remark \ref{rk:perturbation1}
\ref{item:h2}).
 We remark that Theorem \ref{thm:upper} is an abstract version of
\cite[Theorem 2.5]{AHS} where  upper semicontinuity of the point
spectrum of $N$-body Schr{\"o}dinger operators is established. The
proof is essentially the same.

In the case where $H$ does not have eigenvalues in $J$, we do not need Condition \ref{cond:perturbation2} to establish upper semicontinuity of point spectrum. More precisely, we will prove that $\sigma_\pp( H + \sigma V) \cap J = \emptyset$ for $|\sigma|$ small enough under the condition that $V \in \mathcal{V}_2$ (see Corollary \ref{cor:stability}). If it is only required that $V \in \mathcal{V}_1$, the result still holds true provided we assume in addition that any eigenstate of $H+\sigma V$ belongs to $\D( M^{ 1/2 } )$ (see Corollary \ref{cor:stability2}).

One might suspect that there is a similar semistability result as the one stated in Theorem \ref{thm:upper} given upon replacing Condition \ref{cond:perturbation2} by Condition \ref{cond:perturbation2c} (assuming now smallness of $\|V\|_2$). Although there is a formal argument, Conditions \ref{cond:perturbation1} are insufficient for a rigorous proof. Nevertheless the analogous assertion is true in the special case where $H$ does not have eigenvalues in the interval $J$, cf.  Corollary \ref{cor:stability}. Notice also that another special case, although treated under additional conditions, is part of Theorem \ref{thm:fermi-golden-rule} stated below.

For any  $V \in \mathcal V_1$ and $\sigma\in \R$ we set $H_\sigma := H + \sigma V$. A 
main result of this paper is the following assertion on absence of
eigenvalues of $H_\sigma$ for small non-vanishing $|\sigma|$ and  for 
a $V$ fulfilling (\ref{eq:18}):
\begin{thm}
  \label{thm:fermi-golden-rule} Assume that Conditions
  \ref{cond:perturbation1}, Condition \ref{cond:perturbation2b}  and   Condition
  \ref{cond:tech} hold. Assume that   Condition
  \ref{cond:fermi-golden-rule} holds for some $V \in \mathcal V_1$. Let $J \subseteq I$ be any  compact interval such that $\sigma_\pp( H)\cap J= \{\lambda\}$.  Suppose one of the
  following two conditions:
\begin{enumerate}[\quad\normalfont i)]
\item \label{item:a} Condition \ref{cond:perturbation2c} and $V \in \mathcal V_2$.
\item \label{item:b} Condition \ref{cond:perturbation2} and $V \in \mathcal B_{1,\gamma}$.
\end{enumerate}
There exists $\sigma_0>0$ such that for all $\sigma \in ]-\sigma_0,\sigma_0[\;\setminus\{0\}$,
  \begin{equation}
    \label{eq:21}
  \sigma_\pp( H_\sigma)\cap J= \emptyset. 
  \end{equation}
\end{thm}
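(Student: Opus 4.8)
The plan is to establish the theorem through a reduced limiting absorption principle (LAP) at the eigenvalue $\lambda$, localized to the eigenspace $P = E_H(\{\lambda\})$, and then combine it with the Fermi Golden Rule positivity \eqref{eq:18} to exclude eigenvalues of $H_\sigma$ in $J$ for small $\sigma \neq 0$. First I would use the singular Mourre estimate from Condition \ref{cond:perturbation1} (\ref{item:g2}) together with the regularity Conditions \ref{cond:perturbation2b}/\ref{cond:perturbation2c} (resp.\ \ref{cond:perturbation2}) to derive, on the complementary spectral subspace $\bar P \mathcal H$, a genuine LAP for $H$ (and for $H_\sigma$ with the deformed conjugate operator), giving uniform bounds on $\langle A \rangle^{-s} (H - z)^{-1} \bar P \langle A \rangle^{-s}$ near $\lambda$, and in particular the existence of the boundary value $\Im((H-\lambda-\i 0)^{-1}\bar P)$ appearing in \eqref{eq:18}; this is also where one checks (as promised after Condition \ref{cond:fermi-golden-rule}) that the left-hand side of \eqref{eq:18} is a bounded operator despite $P$ not mapping into $\mathcal D(A^2)$.

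Next I would set up the Feshbach–Schur (Grushin) reduction: writing $H_\sigma - z = (H - z) + \sigma V$ and decomposing with respect to $P \oplus \bar P$, the existence of an eigenvalue of $H_\sigma$ in $J$ near $\lambda$ is equivalent to $0$ lying in the spectrum of the reduced (Feshbach) operator
\begin{equation*}
F_\sigma(z) = P(H_\sigma - z)P - \sigma^2 P V \bar P (\bar P(H_\sigma - z)\bar P)^{-1} \bar P V P
\end{equation*}
acting on the finite-dimensional space $P\mathcal H$ (finite-dimensionality coming from Remark \ref{rk:perturbation1} \ref{item:h2})). Expanding to second order, $F_\sigma(\lambda + \mathcal{o}(1))$ behaves like $-\sigma^2 P V \bar P (\bar P(H-\lambda-\i 0)\bar P)^{-1} \bar P V P + (\text{lower order})$, whose imaginary part is, to leading order, $-\sigma^2 P V \Im((H-\lambda-\i0)^{-1}\bar P) V P \le -c\sigma^2 P$ by \eqref{eq:18}. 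I would then argue that for a hypothetical real eigenvalue $E$ of $H_\sigma$ in $J$ with eigenstate $\psi_\sigma$, the vector $P\psi_\sigma$ is nonzero (by the regularity hypotheses i) or ii) forcing $\bar P\psi_\sigma$ to be controlled, via Condition \ref{cond:perturbation2} or \ref{cond:perturbation2c}, so that $\psi_\sigma$ cannot concentrate entirely in $\bar P\mathcal H$), and it must satisfy $\Im\langle P\psi_\sigma, F_\sigma(E) P\psi_\sigma\rangle = 0$; but the strict negativity of $\Im F_\sigma$ of order $\sigma^2$, weighed against error terms that are $\mathcal{o}(\sigma^2)$ uniformly (using $\|V\|_1$, resp.\ $\|V\|_2$, smallness and the uniform LAP bounds), yields a contradiction for $0 < |\sigma| < \sigma_0$.

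The main obstacle, and the technically heaviest step, is the uniform reduced LAP on $\bar P\mathcal H$ and the control of the Feshbach error terms under the \emph{minimal} regularity of Condition \ref{cond:perturbation2b} in case ii): since eigenstates are only known to lie in $\mathcal D(A)\cap\mathcal D(M)$ (not $\mathcal D(A^2)$), one cannot naively commute $A$ twice, and the second-order expansion must be carried out at the level of quadratic forms on $\mathcal G$ and $\mathcal G^*$, using that $H' = M+R \in \mathrm C^1(A_\mathcal G; A_{\mathcal G^*})$ and $H'' = \langle H\rangle^{1/2}(\cdots)\langle H\rangle^{1/2}$-bounded, together with the deformation of $A$ by the perturbation and the a priori bound \eqref{eq:12i}. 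Handling the two cases i) and ii) in parallel — one giving the expansion from eigenstate regularity in $\mathcal D(A^2)$, the other from the uniform bound $\|A\psi\|\le C\|\psi\|$ on the perturbed side — while keeping all remainders genuinely $\mathcal{o}(\sigma^2)$ uniformly on $J$, is where the bulk of the work lies; the positivity argument itself, once the reduction is in place, is short.
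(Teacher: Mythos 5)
Your architecture for case \emph{i)} matches the paper's: a reduced LAP for the deformed operator $\bar H_\sigma = H + \alpha_J P + \sigma V$ (Theorem \ref{thm:reduc-limit-absorpt3}), a Feshbach-type block decomposition of the eigenvalue equation, a second-order expansion of $\lambda_\sigma$ with a remainder $O(|\sigma|^{5/2})$ (Theorem \ref{thm:general}), and then taking imaginary parts against \eqref{eq:18}. That part is sound.

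There is, however, a genuine gap in your treatment of case \emph{ii)}. You claim that the Feshbach error terms can be made $o(\sigma^2)$ \emph{uniformly} in $\sigma$, using ``$\|V\|_1$ smallness and the uniform LAP bounds.'' Under Condition \ref{cond:perturbation2} and $V \in \mathcal B_{1,\gamma}$ there is no reduced LAP for the perturbed block $\bar P(H_\sigma - z)\bar P$: Theorem \ref{thm:reduc-limit-absorpt3} requires Condition \ref{cond:perturbation2c} and $V \in \mathcal V_2$, and the paper explicitly states in Remark \ref{rk:general} that no second-order expansion of the eigenvalue is available in case \emph{ii)}. Eigenstates are only in $\D(A)$, not $\D(A^2)$, so one cannot expand twice in the conjugate operator with uniform control. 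Your uniform-in-$\sigma$ estimate strategy therefore cannot close.

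The paper's way around this is qualitatively different and is the point you are missing: a compactness argument along a sequence. By Corollary \ref{cor:upper-semic-pointii}, if \eqref{eq:21} fails then there is a sequence $\sigma_n \to 0$, $\sigma_n \neq 0$, with normalized eigenstates $\psi_n$ of $H_{\sigma_n}$ and eigenvalues $\lambda_n \to \lambda$ such that $\psi_n \to \psi_\infty \in \Ran(P)$ in norm. One then manipulates the two Feshbach equations \eqref{eq:22} \emph{along this sequence}: the error terms only need to vanish as $n \to \infty$, not uniformly. The crucial technical input, which replaces the uniform second-order remainder, is the interpolation estimate $\|\langle A\rangle^s \bar P P_n\| \to 0$ for $1/2 < s < 1$, obtained from the a priori bound $\|A\psi_n\| \le C$ of \eqref{eq:12i} combined with $\|\bar P P_n\| \to 0$. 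Together with the reduced LAP for the \emph{unperturbed} operator $H$ (Theorem \ref{thm:reduc-limit-absorpt2}, valid under the weaker Condition \ref{cond:perturbation2b}), this lets one pass to the limit in the imaginary part and land on the identity \eqref{eq:25}, which contradicts \eqref{eq:18}. So the sequential argument is not a cosmetic repackaging; it is what makes case \emph{ii)} work where your uniform estimate cannot.
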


This type of theorem is usually referred to as the  Fermi Golden Rule
criterion (or in short just Fermi Golden Rule).
In the framework of regular Mourre theory (that is in particular if $M=0$, see Remark
\ref{rk:perturbation1} \ref{item:h3}) above), if $A$ is self-adjoint, Fermi Golden Rule  is well-known. It was first proved in \cite{AHS} for $N$-body Schr{\"o}dinger operators, under an assumption of the type $V \in \mathcal V_2$ and using exponential bounds for eigenstates (yielding in particular an analogue of Condition \ref{cond:perturbation2c}). In \cite{HS}, Theorem \ref{thm:fermi-golden-rule} is proved in an abstract setting assuming Condition \ref{cond:perturbation2c} and the  $H$-boundedness of $V''$. In \cite{Ca,CGH}, still in the framework of regular Mourre theory and with $A$ self-adjoint, it is shown that an assumption of the type $H \in \mathrm{C}^4(A)$ implies Condition \ref{cond:perturbation2c}. A similar result also appears in \cite{GJ} under slightly weaker (``local'') assumptions, still requiring, however, the boundedness of four commutators.

Theorem \ref{thm:fermi-golden-rule} improves the previous results for
the following two reasons: First, as mentioned above, Conditions
\ref{cond:perturbation1} do not require that $A$ be self-adjoint
neither that the formal commutator $\i [H,A]$ be $H$-bounded, which
can be important in applications (see in particular Section
\ref{section:Pauli-Fierz} on Pauli-Fierz Hamiltonians). Second, we
prove that the Fermi Golden Rule criterion also holds under Condition
\ref{cond:perturbation2} and the hypothesis $V \in \mathcal B_{1,\gamma}$ (that
is under condition \ref{item:b}) of Theorem
\ref{thm:fermi-golden-rule}), which to our knowledge  constitutes a
new result even in the framework of regular Mourre theory. Let us
emphasize that Condition \ref{cond:perturbation2} does not contain the
assumption that the eigenstates are in the domain of $A^2$, but only
in the domain of $A$. The price we have to pay lies in the fact that
Condition \ref{cond:perturbation2} involves information on the
possibly existing perturbed eigenstates, which  in concrete models might (at a first
glance) seem rather difficult to obtain.

Nevertheless in a separate paper, \cite{FMS} we provide
abstract hypotheses
under which Condition \ref{cond:perturbation2} is indeed satisfied. As a
consequence, we obtain that Theorem \ref{thm:fermi-golden-rule} applies
for a class of Quantum Field Theory models provided that the Hamiltonian only
has two bounded commutators with $A$ (defined in a suitable sense), see Section
\ref{section:Pauli-Fierz}. We emphasize that from an abstract point of
view, working with $C^2(A)$ conditions, in fact verifying Condition
\ref{cond:perturbation2} is {\it doable} while Condition
\ref{cond:perturbation2c} might be {\it false}, see \cite[Example 1.4]{FMS} for a counterexample.

Recently Rasmussen together with one us (\cite{MR}) studied the essential energy-momentum spectrum of the translation invariant massive Nelson Hamiltonian $H$. In particular the authors construct, for a given total momentum $P$
and non-threshold energy $E$, a conjugate operator $A$
with respect to which the fiber Hamiltonian $H(P)$ satisfies a Mourre estimate, locally uniformly in $E$ and $P$.
From the point of view of the present paper this model is of interest because $H(P)$ is of class $\mathrm{C}^2(A)$ but (presumably) not of class $\mathrm{C}^3(A)$.
This means that, even though the context of \cite{MR} is regular Mourre theory, the improvements of this paper and its companion \cite{FMS} are both essential
to conclude anything about the structure of embedded non-threshold eigenvalue bands.

We shall use different methods to prove Theorem
\ref{thm:fermi-golden-rule}  depending on whether we assume
\ref{item:a}) or \ref{item:b}). In the first case, we shall obtain
an expansion to second order of any possibly existing perturbed
eigenvalue near the unperturbed one $\lambda$. In the second case, \ref{item:b}), this will also be done under the further hypothesis $\dim \mathrm{Ran}(P) = 1$, but we shall proceed differently if the unperturbed eigenvalue is degenerate. In both cases, a key ingredient of the proof consists in obtaining a ``reduced Limiting Absorption Principle'' at an eigenvalue (see Theorems \ref{thm:reduc-limit-absorpt2} and \ref{thm:reduc-limit-absorpt3} below).

The paper is organized as follows: In the next section, we consider Pauli-Fierz Hamiltonians which constitute our main example of a model satisfying the abstract conditions stated above. Section \ref{Reduced limiting absorption principle at an eigenvalue} concerns reduced Limiting Absorption Principles at an eigenvalue $\lambda$ of $H$. In Section \ref{Upper semicontinuity of point spectrum}, we study upper semicontinuity of point spectrum and prove Theorem \ref{thm:upper}. Finally in Section \ref{Second order perturbation theory}, we study second order perturbation theory assuming either Condition \ref{cond:perturbation2c} or Condition \ref{cond:perturbation2}, and we prove Theorem \ref{thm:fermi-golden-rule}.


\section{Application to the spectral theory of Pauli-Fierz models}\label{section:Pauli-Fierz}

\subsection{Massless Pauli-Fierz Hamiltonians} \label{subsection:def-Pauli-Fierz}

The main example we have in mind fitting into the framework of Section \ref{Introduction} consists of an abstract class of Quantum Field Theory models, sometimes called massless Pauli-Fierz models (see for instance \cite{DG,DJ,GGM,FMS}). The latter describe a ``small'' quantum system linearly coupled to a massless quantized radiation field. The corresponding Hamiltonians $H^{\mathrm{PF}}_v$ acts on the Hilbert space $\mathcal{H}_{\mathrm{PF}} := \mathcal{K} \otimes \Gamma (\mathfrak{h} )$, where $\mathcal{K}$ is the Hilbert space for the small quantum system, and $\Gamma( \mathfrak{h} )$ is the symmetric Fock space over $\mathfrak{h} := \mathrm{L}^2( \mathbb{R}^d , {\rm d}k )$. The latter describes a field of massless scalar bosons and is defined by
\begin{equation}\label{eq:def_Fock}
\Gamma( \mathfrak{h} ) := \mathbb{C} \oplus \bigoplus_{n=1}^{+\infty} \otimes^n_s \mathfrak{h},
\end{equation}
where $\otimes^n_s$ denotes the symmetric $n$th tensor product of $\mathfrak{h}$. The operator $H^{\mathrm{PF}}_v$ depends on the form factor $v$ and is written as
\begin{equation}
H^{\mathrm{PF}}_v := K \otimes \mathds{1}_{\Gamma( \mathfrak{h} )} + \mathds{1}_\mathcal{K} \otimes {\rm d} \Gamma( |k| ) + \phi(v),
\end{equation}
where $K$ is a bounded below operator on $\mathcal{K}$ describing the dynamics of the small system, ${\rm d}\Gamma(|k|)$ is the second quantization of the operator of multiplication by $|k|$, and $\phi(v) := ( a^*(v) + a(v) )/ \sqrt{2}$. The form factor $v$ is a linear operator from $\mathcal{K}$ to $\mathcal{K} \otimes\mathfrak{h}$, and $a^*(v)$, $a(v)$ are the usual creation and annihilation operators associated with $v$ (see \cite{BD,GGM}). For convenience, we assume that
\begin{equation}
K \ge 0.
\end{equation}
The hypotheses we make are slightly stronger than the ones considered in \cite{GGM}. The first one, Hypothesis ${\bf (H0)}$, is related to the fact that the small system is assumed to be confined:
\begin{itemize}
\item[{\bf (H0)}] $( K + 1 )^{-1}$ is compact on $\mathcal{K}$.
\end{itemize}
Let $0 \le \tau < 1/2$ be fixed. Let $\mathcal{O}_\tau \subseteq \mathcal{B} ( \D ( K^\tau ) ; \mathcal{K} \otimes \mathfrak{h} )$ be the set of operators which extend by continuity from $\D( K^\tau )$ to an element of $\mathcal{B} ( \mathcal{K} ; \D( K^\tau )^* \otimes \mathfrak{h} )$, that is
\begin{align}
\mathcal{O}_\tau & := \big \{ v \in \mathcal{B} ( \D ( K^\tau ) ; \mathcal{K} \otimes \mathfrak{h} ) , \notag \\
& \exists C > 0 , \forall \psi \in \D( K^\tau ) , \big \| [ ( K + 1 )^{-\tau} \otimes \mathds{1}_{ \mathfrak{h} } ] v  \psi \big \|_{ \mathcal{K} \otimes \mathfrak{h} } \le C \| \psi \|_{ \mathcal{K} } \big \}. \label{eq:def_Otau}
\end{align}
Our first assumption on the form factor is the following:
\begin{itemize}
\item[{\bf (I1)}] $v$ and $[ \mathds{1}_{ \mathcal{K} } \otimes |k|^{-1/2} ] v$ belong to $\mathcal{O}_\tau$.
\end{itemize}
It follows from \cite[Proposition 4.6]{GGM} that, if  $[ \mathds{1}_{ \mathcal{K} } \otimes |k|^{-1/2} ] v \in \mathcal{O}_\tau$, then $H^{\mathrm{PF}}_v$ is self-adjoint with domain
\begin{equation}
\D(H^{\mathrm{PF}}_v) = \D(H^{\mathrm{PF}}_0) = \D( K ) \otimes \Gamma( \mathfrak{h} ) \cap \mathcal{K} \otimes \D( \d \Gamma ( |k| ) ).
\end{equation}

We consider the unitary operator 
\begin{equation}
T : \mathrm{L}^2( \mathbb{R}^d ) \to \mathrm{L}^2( \mathbb{R}^+ ) \otimes \mathrm{L}^2( S^{d-1} ) =: \tilde{ \mathfrak{h} }
\end{equation}
defined by $(Tu)( \omega , \theta ) = \omega^{ (d-1) / 2 } u ( \omega \theta )$. Lifting it to the full Hilbert space $\mathcal{H}_{ \mathrm{PF} }$ by setting $\mathcal{T} := \mathds{1}_{ \mathcal{K} } \otimes \Gamma( T )$ (recall that $\Gamma( T )$ is defined by its restriction to the $n$-bosons Hilbert space as $\Gamma( T ) |_{ \otimes^s_n \mathfrak{h} } = T \otimes \dots \otimes T$ for $n \ge 1$, and $\Gamma( T ) |_{ \mathbb{C} } = \mathds{1}_{ \mathbb{C} }$ for $n=0$), we get a unitary map
\begin{equation}
\mathcal{T} : \mathcal{H}_{ \mathrm{PF} } \to \tilde{ \mathcal{H} }_{ \mathrm{PF} } := \mathcal{K} \otimes \Gamma( \tilde{\mathfrak{h}} ).
\end{equation}
This allows us to write the Hamiltonian in polar coordinates in the following way:
\begin{equation}
\tilde{H}^{\mathrm{PF}}_{ v } :=  \mathcal{T} H^{\mathrm{PF}}_v \mathcal{T}^{-1} = K \otimes \mathds{1}_{\Gamma( \tilde{\mathfrak{h}} )} + \mathds{1}_{\mathcal{K}} \otimes \d \Gamma ( \omega ) + \phi (\tilde v ),
\end{equation}
on $\tilde{\mathcal{H}}_{\mathrm{PF}}$, where 
\begin{equation}
\tilde v := [ \mathds{1}_{ \mathcal{K} } \otimes T ] v
\end{equation}
is a linear operator from $\mathcal K$ to  $\mathcal{K} \otimes  \tilde{\mathfrak{h}}$, and $\d \Gamma( \omega )$ denotes the second quantization of the operator of multiplication by $\omega \in \mathbb{R}^+$.

Let us consider a function $d \in \mathrm{C}^\infty( ( 0,\infty ) )$ satisfying $d'( \omega ) < 0$, $|d'( \omega )| \le C \omega^{-1} d( \omega )$ for some positive constant $C$, $d( \omega ) = 1$ if $\omega \ge 1$, and $\lim_{\omega \rightarrow 0} d( \omega ) = + \infty$ (see Figure \ref{function_d}).

%
%
\begin{figure}[htbp]
\centering
\resizebox{0.7\textwidth}{!}{\includegraphics{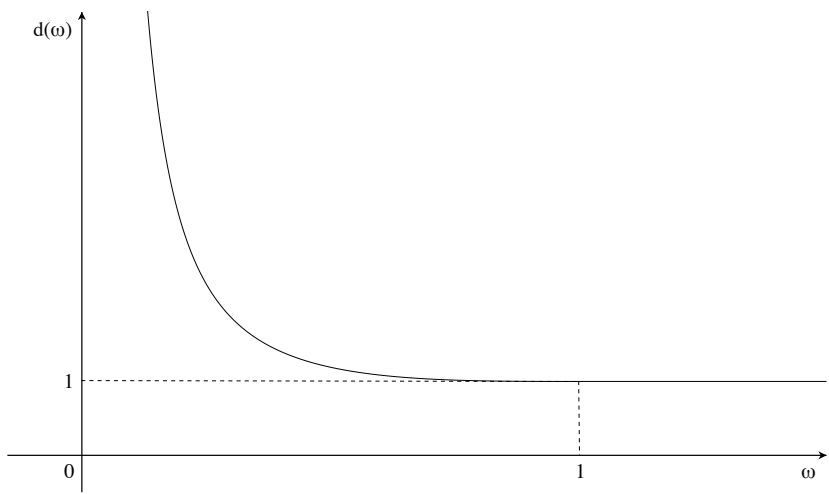}}
\caption{\textbf{The map $\omega \mapsto d(\omega)$}} \label{function_d}
\end{figure}

%
%

Let
\begin{equation} \label{eq:def_tilde_Otau}
\tilde{\mathcal{O}_\tau} := [ \mathds{1}_{ \mathcal{K} } \otimes T ] \mathcal{O}_\tau.
\end{equation}
The following further assumptions on the interaction are made:
\begin{itemize}
\item[{\bf (I2)}] The following holds:
\begin{align*}
& \big [ \mathds{1}_{ \mathcal{K} } \otimes ( 1+\omega^{-1/2} )\omega^{-1} d(\omega) \otimes \mathds{1}_{ \mathrm{L}^2( S^{d-1} ) } \big ]  \tilde v \in \tilde{\mathcal{O}_\tau} \\
& \big [ \mathds{1}_{ \mathcal{K} } \otimes ( 1+\omega^{-1/2} ) d(\omega) \partial_\omega \otimes \mathds{1}_{ \mathrm{L}^2( S^{d-1} ) } \big ] \tilde v \in \tilde{\mathcal{O}_\tau},
\end{align*}
\end{itemize}
\begin{itemize}
\item[{\bf (I3)}] $\big [ \mathds{1}_{ \mathcal{K} } \otimes \partial^2_\omega \otimes \mathds{1}_{ \mathrm{L}^2( S^{d-1} ) } \big ] \tilde v \in \mathcal{B} ( \D ( K^{ \frac{1}{2} } ) ; \mathcal{K} \otimes \tilde{ \mathfrak{h} } )$.
\end{itemize}

Let us recall the definition of the conjugate operator used in \cite{GGM}. Let $\chi \in \mathrm{C}_0^\infty( [0 , \infty ) )$ be such that $\chi( \omega )=0$ if $\omega \ge 1$ and $\chi( \omega )=1$ if $\omega \le 1/2$. For $0<\delta\le 1/2$, the function $m_\delta \in \mathrm{C}^\infty( [0,\infty) )$ is defined by
\begin{equation}
m_\delta( \omega ) = \chi( \frac{ \omega }{ \delta } ) d( \delta ) + (1-\chi)( \frac{ \omega }{ \delta } ) d( \omega ),
\end{equation}
(see Figure \ref{function_m}). 
 
%
%
\begin{figure}[htbp]
\centering
\resizebox{0.7\textwidth}{!}{\includegraphics{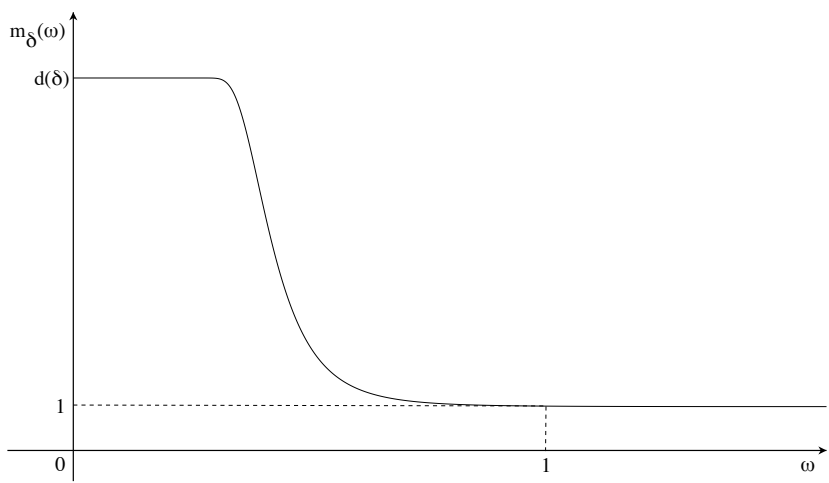}}
\caption{\textbf{The map $\omega \mapsto m_\delta(\omega)$}} \label{function_m}
\end{figure}
%
%

Consider the following operator $\tilde a_\delta$ acting on $\tilde{\mathfrak{h}}$:
\begin{equation}
\tilde a_\delta := \i m_\delta( \omega ) \frac{ \partial }{ \partial \omega } + \frac{ \i }{ 2} \frac{ \d m_\delta}{ \d \omega }( \omega ), \quad \D( \tilde a_\delta ) = \mathrm{H}_0^1( \mathbb{R}^+ ) \otimes \mathrm{L}^2( S^{ d-1}),
\end{equation}
where $H_0^1( \mathbb{R}^+ )$ denotes the closure of $\mathrm{C}_0^\infty( \mathbb{R}^+ )$ in $\mathrm{H}^1( \mathbb{R}^+ )$ and $\mathrm{C}_0^\infty( \mathbb{R}^+ )$ is the set of smooth compactly supported functions on $\mathbb{R}^+$. Then the operator $\tilde A_\delta$ on $\tilde{\mathcal{H}}_{\mathrm{PF}}$ is defined by $\tilde A_\delta := \mathds{1}_{\mathcal{K}} \otimes {\rm d}\Gamma( \tilde a_\delta )$. It is proved in \cite{GGM} that $\tilde A_\delta$ is closed, densely defined and maximal symmetric. 

Let $M_\delta := \mathds{1}_{\mathcal{K}} \otimes {\rm d}\Gamma (m_\delta)$ and $R_\delta( \tilde v ) := - \phi( \i \tilde a_\delta \tilde v )$. Then $M_\delta$ is self-adjoint, $M_\delta \ge 0$, and if $v$ satisfies Hypotheses ${\bf (I1)}$ and ${\bf (I2)}$, then, by \cite[Lemma 6.4 $i)$]{GGM}, $R_\delta( \tilde v )$ is symmetric and $\tilde{H}^{\mathrm{PF}}_{ v }$-bounded.

\subsection{Checking the abstract assumptions}\label{subsection:checking}

In this subsection, we verify that, on the Hilbert space $\mathcal{H} = \tilde{\mathcal{H}}_{\mathrm{PF}}$, the operators $H=\tilde{H}^{\mathrm{PF}}_{ v }$, $M=M_\delta$, $R=R_\delta( \tilde v )$, $A=\tilde A_\delta$ fulfil Conditions \ref{cond:perturbation1}, \ref{cond:perturbation2} and \ref{cond:tech} stated in Section \ref{Introduction} (provided that $v$ satisfies, in particular, the hypotheses stated above). The following lemma shows that Condition \ref{cond:perturbation1} (\ref{item:g1}) is satisfied.
\begin{lemma}\label{lm:HinC1mo}
Assume that $v$ satisfies Hypothesis ${\bf (I1)}$. Then for all $\delta>0$,
\begin{equation}
\tilde{H}^{\mathrm{PF}}_{v} \in \mathrm{C}^1_{\mathrm{Mo}}(M_\delta).
\end{equation}
\end{lemma}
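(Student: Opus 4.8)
The plan is to verify directly the characterization of the Mourre-$\mathrm{C}^1$ class: I must show that $H:=\tilde H^{\mathrm{PF}}_v$ is of class $\mathrm{C}^1(M_\delta)$ and that the boundary commutator $[H,\i M_\delta]^0$ is $H$-bounded (indeed $M_\delta+R_\delta(\tilde v)$). Since $M_\delta=\mathds 1_{\mathcal K}\otimes\d\Gamma(m_\delta)$ generates the unitary group $e^{\i t M_\delta}$, it suffices, by Remark \ref{item:f2}) and the virial-type identity \eqref{eq:5a}, to compute the form $[H,\i M_\delta]$ on a suitable core and show it extends to an $H$-bounded operator. First I would take $\D=\D(K)\otimes\Gamma(\mathfrak h)\cap\mathcal K\otimes\D(\d\Gamma(\omega))$ — equivalently the polar-coordinate domain — as the core on which to compute, using that this is a core for $H$ by \cite[Proposition 4.6]{GGM}.

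The key computation splits $H$ into three pieces. The term $K\otimes\mathds 1$ commutes with $M_\delta$ since both act on disjoint tensor factors, so contributes nothing. For the free field term $\mathds 1_{\mathcal K}\otimes\d\Gamma(\omega)$, the commutator on Fock space is $\mathds 1_{\mathcal K}\otimes\d\Gamma([\omega,\i m_\delta(\omega)])=0$, because $\omega$ and $m_\delta(\omega)$ are both multiplication operators in the single-boson variable and hence commute; this is where Hypothesis ${\bf (I1)}$ enters only through self-adjointness/domain issues. The only surviving term is the interaction $\phi(\tilde v)$, for which $[\phi(\tilde v),\i M_\delta]=[\phi(\tilde v),\i\,\mathds 1_{\mathcal K}\otimes\d\Gamma(m_\delta)]$; using the standard pull-through-type commutation rule $[\d\Gamma(m_\delta),a^\#(\tilde v)]=\mp a^\#(m_\delta\tilde v)$ on appropriate vectors, one gets formally $[\phi(\tilde v),\i M_\delta]=-\phi(\i m_\delta\tilde v)$. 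I would then identify, directly from the definitions of $\tilde a_\delta$ and $m_\delta$, that in fact $M_\delta+R_\delta(\tilde v)=M_\delta-\phi(\i\tilde a_\delta\tilde v)$ equals $\mathds 1_{\mathcal K}\otimes\d\Gamma(m_\delta)-\phi(\i m_\delta\partial_\omega\tilde v+\tfrac{\i}{2}(\partial_\omega m_\delta)\tilde v)$, i.e. the full formal commutator $\i[H,M_\delta]$; the field-energy part accounts for $M_\delta$ and the interaction part for $R_\delta(\tilde v)$. By \cite[Lemma 6.4 $i)$]{GGM}, under ${\bf (I1)}$ and ${\bf (I2)}$ the operator $R_\delta(\tilde v)$ is $H$-bounded, and $M_\delta$ is trivially $H$-bounded modulo the usual argument (e.g. $\d\Gamma(m_\delta)\le C\d\Gamma(\omega)+C$ on the relevant domain since $m_\delta(\omega)\le C(1+\omega^{-1})$... — here one must be careful, as $m_\delta$ is bounded below by $d(\delta)$ but behaves like $d(\omega)$ for small $\omega$). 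Actually the cleaner route is: $[H,\i M_\delta]^0=M_\delta+R_\delta(\tilde v)$ is $H$-bounded because it equals the Mourre commutator already shown $H$-bounded in \cite{GGM}, or because each summand is controlled by $\d\Gamma(\omega)$ (for $M_\delta$, using $m_\delta\le d$ on $[\delta,\infty)$ and boundedness of $d(\delta)N$ by... — I would invoke the estimate from \cite{GGM} rather than redo it).

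To make this rigorous rather than formal, I would close the argument via the criterion of Remark \ref{item:f2}): show $(H-z)^{-1}\in\mathrm{C}^1(M_\delta;M_\delta)$ by verifying $\|e^{\i t M_\delta}(H-z)^{-1}-(H-z)^{-1}e^{\i t M_\delta}\|\le Ct$ for $0\le t\le 1$. Write this difference as $(H-z)^{-1}(e^{-\i t M_\delta}He^{\i t M_\delta}-H)(H-z)^{-1}$ applied after conjugation, and control $e^{-\i tM_\delta}He^{\i tM_\delta}-H=\int_0^t e^{-\i sM_\delta}(\i[H,M_\delta])e^{\i sM_\delta}\,\d s$ in operator norm from $\D(H)$ to $\mathcal H$ using that $\i[H,M_\delta]=M_\delta+R_\delta(\tilde v)$ is $H$-bounded together with the fact that $e^{\i sM_\delta}$ preserves $\D(H)$ with uniformly bounded graph norm for $s\in[0,1]$ — this last stability statement is itself a consequence of $H\in\mathrm C^1(M_\delta)$, so I would instead argue in the reverse order, first establishing the a priori commutator bound on the core $\D$ and then quoting \cite[Proposition 2.29]{GGM1} / the standard Mourre $\mathrm{C}^1$ machinery (as in \cite{ABG,GG}) to upgrade it. The main obstacle is precisely this bookkeeping of domains and cores — ensuring the formal commutator identity $\i[H,M_\delta]=M_\delta+R_\delta(\tilde v)$ holds as an operator identity on a core for $H$, and that $M_\delta+R_\delta(\tilde v)$ is genuinely $H$-bounded (the $M_\delta$ part requires the number-operator/second-quantization estimates for $m_\delta$, which grows like $d(\omega)\to\infty$ as $\omega\to 0$ but is cut off by the $\chi(\omega/\delta)$ to the constant $d(\delta)$). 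Once these are in place the conclusion $H\in\mathrm{C}^1_{\mathrm{Mo}}(M_\delta)$ is immediate.
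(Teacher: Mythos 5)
Your initial formal computation of $[H,\i M_\delta]$ is correct: the term $K\otimes\mathds{1}$ contributes nothing, the free field term $\mathds{1}_{\mathcal K}\otimes\d\Gamma(\omega)$ contributes nothing since $[\omega,m_\delta(\omega)]=0$, and $[\phi(\tilde v),\i M_\delta]=-\phi(\i m_\delta\tilde v)$. Up to this point you have exactly the paper's identity, $[\tilde H^{\mathrm{PF}}_v,\i M_\delta]^0=-\phi(\i m_\delta\tilde v)$ (obtained in the paper via \cite[Corollary 4.13]{GGM}). But you then make a genuine conceptual error: you assert that this commutator equals $M_\delta+R_\delta(\tilde v)=M_\delta-\phi(\i\tilde a_\delta\tilde v)$, calling it ``the full formal commutator $\i[H,M_\delta]$'', and repeat this in the ``cleaner route'' sentence. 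That identity is false: $M_\delta+R_\delta(\tilde v)$ is the commutator with the conjugate operator $\tilde A_\delta=\mathds{1}_{\mathcal K}\otimes\d\Gamma(\tilde a_\delta)$, not with $M_\delta=\mathds{1}_{\mathcal K}\otimes\d\Gamma(m_\delta)$; you have conflated the Mourre commutator (which carries the derivative $\partial_\omega$) with the commutator with the auxiliary operator $M_\delta$ (which carries only multiplication by $m_\delta$).

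This confusion is not harmless for the remainder of your argument. You go on to try to prove $H$-boundedness of $M_\delta+R_\delta(\tilde v)$, and this route would fail: $M_\delta=\d\Gamma(m_\delta)$ is \emph{not} $\tilde H^{\mathrm{PF}}_v$-bounded — the proposed estimate $\d\Gamma(m_\delta)\le C\d\Gamma(\omega)+C$ is simply false, as one sees on $n$-boson states of small individual energy (then $\d\Gamma(m_\delta)\sim n\,d(\delta)$ while $\d\Gamma(\omega)\sim n\epsilon$), and indeed the unboundedness of $M$ with respect to $H$ is precisely what makes this a ``singular'' Mourre problem. You also drag in Hypothesis ${\bf (I2)}$, which the lemma does not assume. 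The correct closing argument is much simpler and only uses ${\bf (I1)}$: since $m_\delta$ is a \emph{bounded} function, $m_\delta\tilde v$ lies in $\tilde{\mathcal O}_\tau$ whenever $\tilde v$ does, so $\phi(\i m_\delta\tilde v)$ is $\tilde H^{\mathrm{PF}}_0$-bounded (hence $\tilde H^{\mathrm{PF}}_v$-bounded with relative bound $0$) by \cite[Proposition 4.6]{GGM}. The $\mathrm C^1(M_\delta)$ regularity itself is obtained in the paper by citing \cite[Lemma 6.4 i)]{GGM}, rather than by redoing the resolvent/semigroup bookkeeping from scratch as you sketch; your worry at the end about circularity in establishing the stability of $\D(H)$ under $e^{\i tM_\delta}$ is real, and quoting the external lemma is the clean way to avoid it.
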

\begin{proof}
The fact that $\tilde{H}^{\mathrm{PF}}_{ v } \in \mathrm{C}^1(M_\delta)$ follows from \cite[Lemma 6.4 $i)$]{GGM}. Moreover, since $m_\delta$ is bounded and $[ \omega , m_\delta ] = 0$, we have that $ [ \tilde{H}^{\mathrm{PF}}_{ v } , \i M_\delta]^0 = - \phi( \i m_\delta \tilde{v} )$ by \cite[Corollary 4.13]{GGM}. Using again that $m_\delta$ is bounded, we then conclude from Hypothesis ${\bf (I1)}$ and \cite[Proposition 4.6]{GGM} that $ [ \tilde{H}^{\mathrm{PF}}_{ v }, \i M_\delta]^0$ is $\tilde{H}^{\mathrm{PF}}_{ 0 }$-bounded, and hence $\tilde{H}^{\mathrm{PF}}_{ v }$-bounded (with relative bound 0).
\end{proof}
Lemma \ref{lm:HinC1mo} together with \cite{GGM} imply:
\begin{prop}\label{prop:pauli1}
Assume Hypothesis ${\bf (H0)}$ and that $v$ satisfies Hypotheses ${\bf (I1)}$, ${\bf (I2)}$ and $\bf{ (I3)}$. Then for all $E_0 \in \mathbb{R}$, there exists $\delta_0>0$ such that for all $0<\delta\le\delta_0$, the operators $H=\tilde{H}^{\mathrm{PF}}_{ v }$, $M=M_\delta$, $R=R_\delta( \tilde v )$, $A=\tilde A_\delta$ fulfil Conditions \ref{cond:perturbation1} with $I = ( -\infty , E_0 )$.
\end{prop}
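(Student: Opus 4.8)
The plan is to verify Conditions~\ref{cond:perturbation1} (\ref{item:g1})--(\ref{item:g4}) one at a time for the quadruple $H=\tilde H^{\mathrm{PF}}_v$, $M=M_\delta$, $R=R_\delta(\tilde v)$, $A=\tilde A_\delta$, collecting the relevant statements from \cite{GGM} and choosing $\delta_0$ small enough so that the Mourre estimate holds on $I=(-\infty,E_0)$. Item (\ref{item:g1}) is exactly Lemma~\ref{lm:HinC1mo}, valid for every $\delta>0$, so nothing further is needed there. For item (\ref{item:g2}) I would invoke the Mourre estimate proved in \cite{GGM} (their main spectral result for massless Pauli-Fierz Hamiltonians): given the energy cutoff $E_0$, there is $\delta_0>0$ such that for $0<\delta\le\delta_0$ and every $\eta<E_0$ one has a strict Mourre estimate with compact error on a neighbourhood of $\eta$; rephrasing this with a function $f_\eta\in\mathrm C_0^\infty(\mathbb R)$ equal to $1$ near $\eta$ and supported in $(-\infty,E_0)$ gives \eqref{eq:Mourre_estimate} with $M+R=[H,\i A]^0$ (the form identity $H'=M+R$ being part of \cite[Lemma~6.4]{GGM}). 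The locally uniform character of the estimate in $\eta$ over $(-\infty,E_0)$ is what lets $I$ be the whole half-line.

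Next, for item (\ref{item:g3}) I would identify $\mathcal G=\D(|H|^{1/2})\cap\D(M^{1/2})$; since $H\ge0$ up to a bounded perturbation and $M_\delta\ge0$, $\mathcal G$ is (up to equivalence of norms) $\D(\tilde H^{\mathrm{PF},1/2}_0)\cap\D(M_\delta^{1/2})$, i.e.\ essentially $\D((\mathds 1_{\mathcal K}\otimes\d\Gamma(\omega+m_\delta))^{1/2})$ intersected with $\D(K^{1/2})\otimes\Gamma$. The semigroup $W_t=\e^{\i t\tilde A_\delta}=\mathds 1_{\mathcal K}\otimes\Gamma(\e^{\i t\tilde a_\delta})$ is a second quantization of the flow generated by the vector field $m_\delta(\omega)\partial_\omega$ on $\mathbb R^+$; because $m_\delta$ and $m_\delta'$ are bounded and $m_\delta$ vanishes nowhere, this flow preserves $\mathrm H^1_0(\mathbb R^+)$ with locally uniformly bounded norm, and one checks that it (together with the adjoint flow) leaves $\D(\omega^{1/2})$ and $\D((\omega+m_\delta)^{1/2})$ boundedly stable on $0<t<1$. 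Lifting via $\d\Gamma$ and tensoring with $\mathds 1_{\mathcal K}$ on the factor where $K$ acts (which commutes with $W_t$) gives $W_t\mathcal G\subseteq\mathcal G$, $W_t^*\mathcal G\subseteq\mathcal G$ and the $\sup_{0<t<1}\|W_t\phi\|_{\mathcal G}<\infty$ bound; the generator $A_{\mathcal G}$ and its dual $A_{\mathcal G^*}$ are then well-defined by \cite[Lemma~2.33]{GGM1}. I expect this item to be the main obstacle, since it requires genuine (if standard) work on the one-particle flow and its action on the form domains, rather than a direct citation.

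Finally, for item (\ref{item:g4}) I would check $H\in\mathrm C^1(A_{\mathcal G};A_{\mathcal G^*})$ and $H'=M+R\in\mathrm C^1(A_{\mathcal G};A_{\mathcal G^*})$ with $H''=[H',\i A]^0$ well-defined, using the computational rules for commutators of field operators and second-quantized operators from \cite[Section~4]{GGM}, specifically \cite[Corollary~4.13]{GGM} and \cite[Proposition~4.6]{GGM}: one finds $[M_\delta,\i\tilde A_\delta]^0=-\mathds 1_{\mathcal K}\otimes\d\Gamma(m_\delta m_\delta')$ (bounded on $\mathcal G$, hence in $\mathrm C^1$ trivially after one more commutator), while $[R_\delta(\tilde v),\i\tilde A_\delta]^0=\phi(\tilde a_\delta^2\tilde v)$ up to lower-order terms, and the hypotheses ${\bf (I2)}$, ${\bf (I3)}$ are precisely what guarantees that $\tilde a_\delta\tilde v$ and $\tilde a_\delta^2\tilde v$ (more precisely the operators obtained from $\tilde v$ by applying the relevant first- and second-order differential operators in $\omega$ weighted by $m_\delta$, $d(\omega)$ and the $\omega^{-1/2}$ factors) lie in the appropriate classes $\tilde{\mathcal O}_\tau$ resp.\ $\mathcal B(\D(K^{1/2});\mathcal K\otimes\tilde{\mathfrak h})$, making $\phi(\tilde a_\delta\tilde v)$ an $\langle H\rangle^{1/2}$-form-bounded and $\phi(\tilde a_\delta^2\tilde v)$ a $\langle H\rangle^{1/2}$-form-bounded perturbation, i.e.\ $H''\in\mathcal B(\mathcal G;\mathcal G^*)$. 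Assembling the four items, and taking the smaller of the two $\delta_0$'s produced (one for the Mourre estimate, one so that all commutators close), yields the assertion with $I=(-\infty,E_0)$.
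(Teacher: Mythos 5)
Your proof follows the same approach as the paper, which cites Lemma~\ref{lm:HinC1mo} for item~(\ref{item:g1}) and delegates items~(\ref{item:g2})--(\ref{item:g4}) wholesale to \cite{GGM} (adding only the remark, matching your handling of item~(\ref{item:g2}), that the Mourre estimate of \cite[Theorem~7.12]{GGM} must be recast into the form \eqref{eq:Mourre_estimate}). The extra detail you supply on items~(\ref{item:g3}) and~(\ref{item:g4}) is essentially a correct unpacking of what \cite{GGM} proves, modulo two inconsequential sign slips: one has $[M_\delta,\i\tilde A_\delta]^0=\mathds 1_{\mathcal K}\otimes\d\Gamma(m_\delta m_\delta')$ and $[R_\delta(\tilde v),\i\tilde A_\delta]^0=-\phi(\tilde a_\delta^2\tilde v)$.
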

\begin{remark}
We remark that the formulation of the Mourre estimate stated in \cite[Theorem 7.12]{GGM} is not the same as the one considered in Condition \ref{cond:perturbation1} (\ref{item:g2}). However, one can verify that the latter is indeed a consequence of \cite[Theorem 7.12]{GGM}.
\end{remark}

In order to verify Condition \ref{cond:perturbation2}, we need to impose a further condition on $v$: 
\begin{itemize}
\item[{\bf (I4)}] The form $(K \otimes \mathds{1}_{ \tilde{ \mathfrak{h} } }) \tilde v - \tilde v K$ extends by continuity from $\D ( K \otimes \mathds{1}_{ \tilde{ \mathfrak{h} } } ) \times \D ( K )$ to an element of $\tilde{ \mathcal{O} }_{ \frac{1}{2} }$.
\end{itemize}
Here $\tilde{ \mathcal{O} }_{ \frac{1}{2} }$ is defined as $\tilde{ \mathcal{O} }_{ \tau }$ (see \eqref{eq:def_Otau} and \eqref{eq:def_tilde_Otau}). Notice that, assuming $\bf{ (I1) }$, the statement above is meaningful.

We have to identify the set $\mathcal{B}_{1,\gamma}$ used in Condition \ref{cond:perturbation2}. To this end, let us first introduce some definitions. Let $\mathcal{I}_{\mathrm{PF}}(d)$ be defined by:
\begin{align}
\mathcal{I}_{\mathrm{PF}}(d) := \big \{ v \in \mathcal L ( \mathcal K ; \mathcal K \otimes \mathfrak{h} ), \, & v \text{ satisfies } \mathbf{(I1)}, \mathbf{(I2)}, \mathbf{(I3)}, \mathbf{(I4)} \big \}.
\end{align}
Observe that $\mathcal{I}_{\mathrm{PF}}( d )$ can be equipped with a norm, $\|\cdot\|_{\mathrm{PF}}$, matching the four conditions $\bf{ (I1) }$, $\bf{ (I2) }$, $\bf{ (I3) }$, $\bf{ (I4) }$ (see \cite[Subsection 5.1]{FMS}).

Let $v \in \mathcal{I}_{\mathrm{PF}}(d)$. Let $W_{\delta,t}$ denote the $C_0$-semigroup generated by $\tilde A_\delta$. We set
\begin{align}
\mathcal{G}^{\mathrm{PF}}_\delta := \D( |\tilde{H}^{\mathrm{PF}}_{ v } |^{\frac{1}{2}} ) \cap \D ( M_\delta^{\frac{1}{2}} ).
\end{align}
By Proposition \ref{prop:pauli1}, we have that $H=\tilde{H}^{\mathrm{PF}}_{ v }$, $M=M_\delta$, $A=\tilde A_\delta$ fulfil Condition \ref{cond:perturbation1} (\ref{item:g3}), and hence $W_{\delta,t} |_{\mathcal{G}^{\mathrm{PF}}_\delta}$ is a $C_0$-semigroup. Its generator is denoted by $\tilde A_{\mathcal{G}^{\mathrm{PF}}_\delta}$. Likewise, the extension of $W_{\delta,t}$ to $(\mathcal{G}^{\mathrm{PF}}_\delta)^*$ is a $C_0$-semigroup whose generator is denoted by $\tilde A_{ ( \mathcal{G}^{\mathrm{PF}}_\delta )^* }$.

Let $\mathcal{V}_{1}^{\mathrm{PF}}$ denote the set of symmetric operators $V$, $\epsilon$-bounded relatively to $\tilde{H}^{\mathrm{PF}}_{ v }$, such that $V \in \mathrm{C}^1( \tilde A_{\mathcal{G}^{\mathrm{PF}}_\delta} ; \tilde A_{ ( \mathcal{G}^{\mathrm{PF}}_\delta ) ^* } )$ and $[V , \i \tilde A_\delta]^0$ is $\tilde{H}^{\mathrm{PF}}_{ v }$-bounded. It is equipped with the norm
\begin{equation}
\| V \|_{1}^{\mathrm{PF}} = \| V ( \tilde{H}^{\mathrm{PF}}_{ v } - \i )^{-1} \| + \| [ V , \i \tilde A_\delta ]^0 ( \tilde{H}^{\mathrm{PF}}_{ v } - \i )^{-1} \|.
\end{equation}
By \cite[Proposition 4.6]{GGM}, if $w$ satisfies Hypothesis $\bf{ (I1) }$, then $\phi( \tilde w )$ is $\epsilon$-bounded relatively to $\tilde{H}^{\mathrm{PF}}_{ v }$, and, by \cite[Lemma 6.4 $i)$]{GGM}, if in addition $w$ satisfies Hypothesis $\bf{ (I2) }$, then, for any $\delta > 0$, $[\phi(\tilde w) , \i \tilde A_\delta ]^0 = - \phi ( \i \tilde a_\delta \tilde w )$ is $\tilde{H}^{\mathrm{PF}}_{ v }$-bounded. Moreover, one can verify that the map
\begin{equation}\label{eq:continuous_map}
\mathcal{I}_{\mathrm{PF}}(d) \ni w \mapsto \phi( \tilde w ) \in \mathcal{V}_{1}^{\mathrm{PF}}
\end{equation}
is continuous (see \cite[Lemma 5.8]{FMS}).

In a separate paper, \cite{FMS}, we prove (see \cite[Theorem 5.2]{FMS}):
\begin{prop}\label{prop:pauli2}
Assume Hypothesis ${\bf (H0)}$ and let $v \in \mathcal{I}_{ \mathrm{PF} }(d)$. For all $E_0 \in \mathbb{R}$, there exists $\delta_0>0$ such that for all $0<\delta\le\delta_0$, the operators $H=\tilde{H}^{\mathrm{PF}}_{ v }$, $M=M_\delta$, $R=R_\delta( \tilde v )$, $A=\tilde A_\delta$ fulfil Condition \ref{cond:perturbation2}. Here $I = ( -\infty , E_0 )$ and $\mathcal{B}_{1,\gamma}$ is given by
\begin{align}
B_{1,\gamma} = \{ \phi(\tilde w) , w \in \mathcal{I}_{\mathrm{PF}}(d) , \| w \|_{\mathrm{PF}} \le \tilde \gamma \},
\end{align}
where $\tilde \gamma > 0$ is fixed sufficiently small.
\end{prop}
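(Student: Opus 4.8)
\emph{Proof proposal.}
The plan is to reduce the verification of Condition \ref{cond:perturbation2} to a uniform regularity statement for eigenstates of Pauli--Fierz Hamiltonians with form factor ranging in a ball of $\mathcal I_{\mathrm{PF}}(d)$, and then to invoke the analysis of the companion paper \cite{FMS}.

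First I would fix the set $\mathcal B_{1,\gamma}$. With $\tilde\gamma>0$ to be chosen, put $\mathcal B_{1,\gamma}:=\{\phi(\tilde w)\,:\,w\in\mathcal I_{\mathrm{PF}}(d),\ \|w\|_{\mathrm{PF}}\le\tilde\gamma\}$. Since each of $\mathbf{(I1)}$--$\mathbf{(I4)}$ is a \emph{linear} condition on the form factor, $\mathcal I_{\mathrm{PF}}(d)$ is a vector space and $\|\cdot\|_{\mathrm{PF}}$ a norm on it, so $\mathcal B_{1,\gamma}$ is star-shaped and symmetric about $0$ and contains $0=\phi(\tilde0)$. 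By \cite[Proposition 4.6]{GGM} and \cite[Lemma 6.4 $i)$]{GGM} one has $\phi(\tilde w)\in\mathcal V_1$ for $w\in\mathcal I_{\mathrm{PF}}(d)$, and by continuity of the map \eqref{eq:continuous_map} (see \cite[Lemma 5.8]{FMS}) together with $\phi(\tilde0)=0$, given any $\gamma>0$ we may take $\tilde\gamma$ so small that $\mathcal B_{1,\gamma}\subseteq\{V\in\mathcal V_1:\|V\|_1\le\gamma\}$. Since $\phi$ is linear in the form factor, $\tilde H^{\mathrm{PF}}_v+\phi(\tilde w)=\tilde H^{\mathrm{PF}}_{v+w}$ with $v+w\in\mathcal I_{\mathrm{PF}}(d)$ and $\|v+w\|_{\mathrm{PF}}\le\|v\|_{\mathrm{PF}}+\tilde\gamma=:R_0$. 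Hence Condition \ref{cond:perturbation2} for these data is equivalent to the assertion: \emph{there is $C=C(J,\delta,R_0)$ such that for every $v'\in\mathcal I_{\mathrm{PF}}(d)$ with $\|v'\|_{\mathrm{PF}}\le R_0$ and every $\psi$ with $(\tilde H^{\mathrm{PF}}_{v'}-\lambda)\psi=0$, $\lambda\in J$, one has $\psi\in\D(\tilde A_\delta)\cap\D(M_\delta)$ and $\|\tilde A_\delta\psi\|\le C\|\psi\|$.}

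This uniform statement is \cite[Theorem 5.2]{FMS}, and I would prove it by checking the abstract hypotheses of \cite{FMS} guaranteeing Condition \ref{cond:perturbation2}. The ingredients are: \emph{(i) $\D(M_\delta)$-regularity.} Because $d(\delta)<\infty$ and $d\ge1$, the multiplier $m_\delta$ is bounded and bounded below, so $M_\delta=\mathds{1}_{\mathcal K}\otimes\d\Gamma(m_\delta)$ is comparable to the number operator; eigenstates of $\tilde H^{\mathrm{PF}}_{v'}$ lie in $\D(M_\delta)$ by the number-operator and infrared estimates of \cite{GGM} under $\mathbf{(I1)}$, $\mathbf{(I2)}$, uniformly for $\|v'\|_{\mathrm{PF}}\le R_0$. \emph{(ii) A $C^2(\tilde A_\delta)$-type structure.} The first commutator is $[\tilde H^{\mathrm{PF}}_{v'},\i\tilde A_\delta]^0=M_\delta+R_\delta(\tilde v')$ (Lemma \ref{lm:HinC1mo}, \cite{GGM}); the second commutator is controlled using $\mathbf{(I2)}$, $\mathbf{(I3)}$ for the field part, while $\mathbf{(I4)}$ supplies the missing estimate on $[K\otimes\mathds{1}_{\tilde{\mathfrak h}},\tilde v']$. \emph{(iii) A uniform Mourre estimate}, namely Proposition \ref{prop:pauli1}, valid on $I=(-\infty,E_0)$ uniformly for $\|v'\|_{\mathrm{PF}}\le R_0$. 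One then runs the \cite{FMS} argument: using a bounded regularization $\tilde A_{\delta,\epsilon}$ of $\tilde A_\delta$ together with the virial-type relations \eqref{eq:vir}--\eqref{eq:5b}, the Mourre estimate yields an a priori bound on $\|M_\delta^{1/2}\psi\|$, and the second-commutator data allow one to bootstrap this to a bound on $\|\tilde A_{\delta,\epsilon}\psi\|$ uniform in $\epsilon$; letting $\epsilon\to0$ gives $\psi\in\D(\tilde A_\delta)$ with $\|\tilde A_\delta\psi\|\le C\|\psi\|$, all constants being controlled by $\|v'\|_{\mathrm{PF}}$ through the continuity of $w\mapsto\phi(\tilde w)$ and of the analogous maps for the first and second commutators.

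The main obstacle is step (ii) together with the bootstrap in (iii): one must reach $\D(\tilde A_\delta)$-regularity---not $\D(\tilde A_\delta^2)$, which may genuinely fail here---using only \emph{two} commutators, with a \emph{quantitative} bound that is uniform as the form factor varies over a ball. This is precisely the delicate content of \cite{FMS}; in particular it is where Hypothesis $\mathbf{(I4)}$ enters essentially, replacing the higher-order commutator bounds that the stronger Condition \ref{cond:perturbation2c} would require.
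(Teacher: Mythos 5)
Your proposal is correct and follows essentially the same route as the paper, which gives no independent proof here but simply defers to \cite[Theorem 5.2]{FMS} and checks the structural requirements on $\mathcal{B}_{1,\gamma}$ (star-shapedness, symmetry, inclusion in a $\|\cdot\|_1$-ball via the continuity of the map \eqref{eq:continuous_map}) exactly as you do. Your explicit reduction, via linearity of $w\mapsto\phi(\tilde w)$, to a uniform eigenstate-regularity statement over a ball of form factors is precisely the content of the cited companion result, so nothing further is needed.
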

\begin{remarks}\label{rk:pauli1}
\begin{enumerate}[\quad\normalfont 1)]
\item Since the map \eqref{eq:continuous_map} is continuous, for any $\gamma>0$, the set $\mathcal{B}_{1,\gamma}$ is included in $\{ V \in \mathcal{V}_{1}^{\mathrm{PF}} , \| V \|_{1}^{\mathrm{PF}} \le \gamma \}$ provided that $\tilde \gamma$ is chosen small enough. Moreover, $B_{1,\gamma}$ is clearly star-shaped and symmetric with respect to 0. Hence the requirements of Condition \ref{cond:perturbation2} are satisfied.
\item Under the conditions of Proposition \ref{prop:pauli2}, we do not
  expect Condition \ref{cond:perturbation2c} to be satisfied in
  general. Indeed, the assumption that $v \in \mathcal{I}_{
    \mathrm{PF} }(d)$ in the statement of Proposition
  \ref{prop:pauli2} allows us to control two commutators of
  $\tilde{H}^{ \mathrm{PF}}_v$ with $\tilde A_\delta$. In order to be
  able to conclude that Condition \ref{cond:perturbation2c} is
  satisfied using the method of \cite{FMS}, one would need to control
  three commutators of $\tilde{H}^{ \mathrm{PF}}_v$ with $\tilde A_\delta$ (see \cite{FMS}). This would require a stronger restriction on the infrared behavior of the form factor $v$ than the one imposed by Hypotheses ${\bf (I1)}$--${\bf (I2)}$--${\bf (I3)}$.
\end{enumerate}
\end{remarks}

In order to apply Theorems \ref{thm:upper} and \ref{thm:fermi-golden-rule}, it remains to verify Condition \ref{cond:tech}. Let
\begin{equation}
\mathcal S = \D ( K ) \otimes \Gamma_{\mathrm{fin}} ( \mathrm{C}_0^\infty( \mathbb R^+ ) \otimes \mathrm{L}^2( S^{d-1} ) ),
\end{equation}
where for $\mathcal E \subseteq \mathrm{L}^2( \mathbb{R}^+ ) \otimes \mathrm{L}^2( S^{d-1} )$,
\begin{align}
\Gamma_{\mathrm{fin}} ( \mathcal E ) := \big \{ & \Phi = ( \Phi^{(0)} , \Phi^{(1)} , \Phi^{(2)} , \dots ) \in \Gamma ( \mathcal E ) , \exists n_0 , \Phi^{(n)} = 0 \mfor n \ge n_0 \big \}.
\end{align}
For any $\delta > 0$, $\mathcal S$ is included in $\D(\tilde{H}^{\mathrm{PF}}_{ v }) \cap \D(M_\delta) \cap \D( \tilde A_\delta )$. Moreover, $\mathcal S$ is a core for $\tilde A_\delta^*$. Therefore we get:
\begin{prop}\label{prop:pauli3}
Assume that $v$ satisfies Hypothesis ${\bf (I1)}$. Then, for all $\delta>0$, the operators $H=\tilde{H}^{\mathrm{PF}}_{ v }$, $M=M_\delta$, $A=\tilde A_\delta$ fulfil Condition \nolinebreak \ref{cond:tech}.
\end{prop}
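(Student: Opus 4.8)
The plan is to verify Condition~\ref{cond:tech} by showing that the subspace $\mathcal{S} = \D(K) \otimes \Gamma_{\mathrm{fin}}( \mathrm{C}_0^\infty(\mathbb{R}^+) \otimes \mathrm{L}^2(S^{d-1}) )$ introduced just above is contained in $\D(M_\delta^{1/2}) \cap \D(\tilde{H}^{\mathrm{PF}}_{v}) \cap \D(\tilde{A}_\delta^*)$ and is a core for $\tilde{A}_\delta^*$. Granting these two facts, Condition~\ref{cond:tech} is immediate: a core for $\tilde{A}_\delta^*$ is by definition dense in $\D(\tilde{A}_\delta^*)$ for the graph topology, and here it lies inside the triple intersection on the left-hand side of the condition, so that triple intersection is itself dense in $\D(\tilde{A}_\delta^*)$.

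The three domain inclusions are a routine check on the generators of $\mathcal{S}$, which are finite linear combinations of vectors $u \otimes (f_1 \otimes_s \cdots \otimes_s f_n)$, $n \ge 0$, with $u \in \D(K)$ and $f_j \in \mathrm{C}_0^\infty(\mathbb{R}^+) \otimes \mathrm{L}^2(S^{d-1})$ (the case $n=0$ being $u$ tensored with the Fock vacuum). From the polar form $\tilde{H}^{\mathrm{PF}}_{v} = K \otimes \mathds{1}_{\Gamma(\tilde{\mathfrak{h}})} + \mathds{1}_{\mathcal{K}} \otimes \d\Gamma(\omega) + \phi(\tilde v)$ and the identity $\D(\tilde{H}^{\mathrm{PF}}_{v}) = \D(K) \otimes \Gamma(\tilde{\mathfrak{h}}) \cap \mathcal{K} \otimes \D(\d\Gamma(\omega))$ one gets $\mathcal{S} \subseteq \D(\tilde{H}^{\mathrm{PF}}_{v})$: the $\D(K)$-factor takes care of the first intersectand, and multiplication by $\omega$ maps each compactly supported $f_j$ back into $\tilde{\mathfrak{h}}$, so $\d\Gamma(\omega)$ is defined on $\mathcal{S}$. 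Since $m_\delta$ is bounded and smooth (it equals $d(\delta)$ near $0$ and $d(\omega)$ for $\omega \ge \delta$), $\d\Gamma(m_\delta)$ is defined on all finite-particle vectors, so $\mathcal{S} \subseteq \D(M_\delta) \subseteq \D(M_\delta^{1/2})$. Finally $\mathrm{C}_0^\infty(\mathbb{R}^+) \otimes \mathrm{L}^2(S^{d-1}) \subseteq \mathrm{H}_0^1(\mathbb{R}^+) \otimes \mathrm{L}^2(S^{d-1}) = \D(\tilde a_\delta)$, so $\d\Gamma(\tilde a_\delta)$ is defined on $\mathcal{S}$, i.e.\ $\mathcal{S} \subseteq \D(\tilde{A}_\delta) \subseteq \D(\tilde{A}_\delta^*)$, the last inclusion because $\tilde{A}_\delta$ is maximal symmetric.

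The substantial point is that $\mathcal{S}$ is a core for $\tilde{A}_\delta^*$, which I would establish by reducing to the one-boson level. Since $\d\Gamma$ commutes with the number operator, $\tilde{A}_\delta^* = \mathds{1}_{\mathcal{K}} \otimes \d\Gamma(\tilde a_\delta)^*$ decomposes as an orthogonal direct sum over the $n$-boson sectors, and a standard second-quantization argument (cf.\ \cite{BD,GGM}) reduces the claim to: $\D(K)$ dense in $\mathcal{K}$; $\mathrm{C}_0^\infty(\mathbb{R}^+) \otimes \mathrm{L}^2(S^{d-1})$ a core for the one-boson operator $\tilde a_\delta^*$; and enough uniformity in $n$ for the $n$-sector cores to assemble into a common core of the direct sum. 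The remaining, genuinely one-dimensional, point is that $\mathrm{C}_0^\infty(\mathbb{R}^+) \otimes \mathrm{L}^2(S^{d-1})$ is a core for $\tilde a_\delta^*$: as $m_\delta$ is smooth with $0 < c \le m_\delta \le C$, the $\tilde a_\delta^*$-graph norm is equivalent to an $\mathrm{H}^1$-type norm on $\D(\tilde a_\delta^*)$, and one approximates a general element of $\D(\tilde a_\delta^*)$ by cutting off at large $\omega$ and then mollifying. Here the delicate point — and the step I expect to be the main obstacle — is the behaviour at $\omega = 0$: one must identify $\D(\tilde a_\delta^*)$ precisely (the maximal domain of the differential expression $\i m_\delta \partial_\omega + \tfrac{\i}{2} m_\delta'$, with the Dirichlet-type condition carried by $\tilde a_\delta$ at $0$ now dropped) together with the admissible class of smooth approximants near $0$, and on the Fock side one must make the reduction uniform in the particle number. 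Everything else being bookkeeping, once $\mathcal{S}$ is shown to be a core for $\tilde{A}_\delta^*$ the proposition follows from the first paragraph.
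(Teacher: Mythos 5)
Your plan matches the paper's own terse argument: the paper introduces $\mathcal{S}$ just before the Proposition and asserts, without proof, exactly the two facts you propose to establish, namely that $\mathcal{S}$ lies in the triple intersection and that $\mathcal{S}$ is a core for $\tilde A_\delta^*$. Your domain bookkeeping and your reduction of the core claim to the one-boson sector via the number decomposition are the natural route to the second fact.

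However, the one-dimensional lemma you assert --- that $\mathrm{C}_0^\infty(\mathbb{R}^+) \otimes \mathrm{L}^2(S^{d-1})$ is a core for $\tilde a_\delta^*$ --- fails with the convention the definition of $\tilde a_\delta$ forces, and your own caveat about the behaviour at $\omega=0$ points at exactly why. For $\tilde a_\delta$ to be symmetric the boundary term $\langle\tilde a_\delta f,g\rangle-\langle f,\tilde a_\delta g\rangle=-\i\,m_\delta(0)f(0)\overline{g(0)}$ must vanish, so $\mathrm{H}_0^1(\mathbb{R}^+)$, the $\mathrm{H}^1$-closure of $\mathrm{C}_0^\infty(\mathbb{R}^+)$, is the Dirichlet space and $\mathrm{C}_0^\infty(\mathbb{R}^+)$ consists of functions vanishing near $\omega=0$. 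Hence $\mathcal{S}\subseteq\D(\tilde A_\delta)$; but since $\tilde A_\delta$ is closed and $\tilde A_\delta^*$ agrees with $\tilde A_\delta$ on $\D(\tilde A_\delta)$, the graph closure of $\tilde A_\delta^*|_{\mathcal{S}}$ is contained in $\tilde A_\delta\subsetneq\tilde A_\delta^*$. So $\mathcal{S}$ is a core for $\tilde A_\delta$, not for $\tilde A_\delta^*$; indeed the two assertions ``$\mathcal{S}\subseteq\D(\tilde A_\delta)$'' and ``$\mathcal{S}$ is a core for $\tilde A_\delta^*$'' are jointly inconsistent and would force $\tilde A_\delta$ self-adjoint (this imprecision is already in the paper's one-line justification, so it is not yours alone). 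The fix is the one your remark about ``the admissible class of smooth approximants near $0$'' suggests: build $\mathcal{S}$ from smooth functions compactly supported in $[0,\infty)$ which are allowed to be nonzero at $\omega=0$. Such functions lie in $\D(\tilde a_\delta^*)\setminus\D(\tilde a_\delta)$; since $m_\delta$ is smooth with $0<c\le m_\delta\le C$, the $\tilde a_\delta^*$-graph norm is equivalent to the $\mathrm{H}^1$-norm and cutoff plus mollification of an $\mathrm{H}^1$ function (extended across $\omega=0$) produces such approximants, so this enlarged class is a core for $\tilde a_\delta^*$. The inclusions of the enlarged $\mathcal{S}$ into $\D(H)$ and $\D(M_\delta^{1/2})$ still hold by the same bookkeeping, and the proof then closes as you outlined.
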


Let us finally mention the particular case for which the unperturbed Hamiltonian under consideration is the non-interacting one, $\tilde{H}^{\mathrm{PF}}_0$, given by
\begin{equation}\label{eq:non-interacting}
\tilde{H}_0^{\mathrm{PF}} := K \otimes \mathds{1}_{ \Gamma ( \tilde{\mathfrak{h}} ) } + \mathds{1}_{ \mathcal{K} } \otimes \d \Gamma( \omega ).
\end{equation}
In this case, one can choose $M=\mathds{1}_{\mathcal{K}} \otimes \mathcal{N}$, where $\mathcal{N} := {\rm d}\Gamma( \mathds{1}_{ \tilde{\mathfrak{h}} } \, )$ is the number operator, and $A=\mathds{1}_{\mathcal{K}} \otimes {\rm d} \Gamma ( \i \partial_\omega)$. Then one can easily check the following proposition:
\begin{prop}\label{prop:unperturbed}
Assume Hypothesis ${\bf (H0)}$. Then the operators $H=\tilde{H}^{\mathrm{PF}}_0,M=\mathds{1}_{\mathcal{K}} \otimes
\mathcal{N},R=0,A=\mathds{1}_{\mathcal{K}} \otimes {\rm d} \Gamma
(\i \partial_\omega)$ fulfil Conditions \ref{cond:perturbation1} (with $I=\mathbb{R}$) and
Condition \ref{cond:perturbation2c}.
\end{prop}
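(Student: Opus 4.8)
The plan is to go through the four items of Conditions~\ref{cond:perturbation1} (with $I=\R$) and then Condition~\ref{cond:perturbation2c}, everything but the Mourre estimate being essentially tautological once two elementary facts about the free model are in hand. Write $\mathcal N=\d\Gamma(\mathds{1}_{\tilde{\mathfrak h}})$, let $U_t:=\e^{-t\partial_\omega}$ be the isometric right‑translation semigroup on $\mathrm L^2(\R^+)$ (so that the semigroup generated by $A=\mathds{1}_{\mathcal K}\otimes\d\Gamma(\i\partial_\omega)$ is $W_t=\mathds{1}_{\mathcal K}\otimes\Gamma(U_t)$), and put $H:=\tilde{H}^{\mathrm{PF}}_0$, $M:=\mathds{1}_{\mathcal K}\otimes\mathcal N$, $G_0:=H+M+1$. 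The basic input is the one‑boson identity $\omega\,U_t=U_t(\omega+t)$ (immediate from $(U_tu)(\omega)=u(\omega-t)$, $u$ extended by $0$), together with its counterpart $\omega\,U_t^*=U_t^*(\omega-t)$ on the domain of multiplication by $\omega$, which both $U_t$ and $U_t^*$ preserve. I will carry out all operator computations on a finite‑boson core $\mathcal S$ whose one‑boson components lie in that domain — hence $\mathcal S$ is stable under $W_t$ and $W_t^*$, is a core for $H$, $M$ and $G_0$, and, by Remark~\ref{rk:perturbation1}~\ref{item:h1}), is dense in $\mathcal G=\D(G_0^{1/2})$ — and pass to statements on $\mathcal G$, $\mathcal G^*$ by density and closedness.

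For Conditions~\ref{cond:perturbation1}~(\ref{item:g1}) and (\ref{item:g4}): since $[\omega,\mathds{1}_{\tilde{\mathfrak h}}]=0$ and $K$ acts on the other tensor factor, $H$ and $M$ commute, so $H\in\mathrm C^1(M)$ with $[H,\i M]^0=0$, which is trivially $H$‑bounded. Second‑quantizing $\omega\,U_t=U_t(\omega+t)$ and using $\d\Gamma(\omega+t)=\d\Gamma(\omega)+t\mathcal N$ gives $H\,W_t=W_t(H+tM)$ and $M\,W_t=W_tM$ on $\mathcal S$, and the adjoint relations likewise. The first one yields $W_{\mathcal G^*,t}H-H\,W_{\mathcal G,t}=-t\,W_{\mathcal G^*,t}M$ in $\mathcal B(\mathcal G;\mathcal G^*)$, so $H\in\mathrm C^1(A_{\mathcal G};A_{\mathcal G^*})$, and by \eqref{eq:vir}, $[H,\i A]^0=\mathrm{s}\text{-}\lim_{t\to0^+}W_{\mathcal G^*,t}M=M=M+R$, as it should be. As $H'=M$ commutes with $W_t$, also $H'\in\mathrm C^1(A_{\mathcal G};A_{\mathcal G^*})$ with $H''=[H',\i A]^0=0$; thus all regularity requirements in (\ref{item:g4}) hold and $H''$ is trivially bounded.

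For Condition~\ref{cond:perturbation1}~(\ref{item:g3}): $W_t$ is an isometry on $\mathcal H$ ($W_t^*W_t=I$), while $W_tW_t^*=\mathds{1}_{\mathcal K}\otimes\Gamma(U_tU_t^*)$ is a projection ($\le I$) which commutes with $H$, $M$ and $G_0$, since $U_tU_t^*$ is multiplication by $\mathds{1}_{(t,\infty)}(\omega)$. Using $G_0\,W_t=W_t(G_0+tM)$ and $G_0\,W_t^*=W_t^*(G_0-tM)$ on $\mathcal S$, one finds for $\phi\in\mathcal S$, $0<t<1$,
\[
\|G_0^{1/2}W_t\phi\|^2=\langle\phi,(G_0+tM)\phi\rangle\le 2\|G_0^{1/2}\phi\|^2 ,\qquad \|G_0^{1/2}W_t^*\phi\|^2\le\langle\phi,(G_0-tM)\phi\rangle\le\|G_0^{1/2}\phi\|^2 ,
\]
using $0\le M\le G_0$, $0\le G_0-tM$, and $\Gamma(U_tU_t^*)\le I$ for the second inequality. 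Since $\mathcal S$ is dense in $\mathcal G$, a standard approximation gives $W_t\mathcal G\subseteq\mathcal G$, $W_t^*\mathcal G\subseteq\mathcal G$ and $\sup_{0<t<1}\big(\|W_t\|_{\mathcal B(\mathcal G)}+\|W_t^*\|_{\mathcal B(\mathcal G)}\big)<\infty$, i.e.\ (\ref{item:g3}).

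The Mourre estimate (Condition~\ref{cond:perturbation1}~(\ref{item:g2})) is the only step that needs a genuine, though short, argument, and Condition~\ref{cond:perturbation2c} then comes for free. Let $P_0$ be the projection onto the vacuum sector $\mathcal K\otimes\C\Omega$; then $M\ge I-P_0$, $P_0$ commutes with every $f_\eta^\perp(H)^2\langle H\rangle$, and $H$ acts as $K$ on $\Ran P_0\simeq\mathcal K$. For any admissible $f_\eta$ the set $S:=\{x:1-2f_\eta^\perp(x)^2\langle x\rangle>0\}$ is bounded (for $x\notin\supp f_\eta$ one has $f_\eta^\perp(x)=1$ and $1-2\langle x\rangle<0$), so $E_K(S)$ is finite rank by Hypothesis~${\bf(H0)}$; from $1-2f_\eta^\perp(x)^2\langle x\rangle\le\mathds{1}_S(x)$ we get $I-2f_\eta^\perp(K)^2\langle K\rangle\le E_K(S)$, and compressing this onto $\Ran P_0$ and using $P_0TP_0\le T$ for $0\le T$ commuting with $P_0$ gives $P_0\le 2f_\eta^\perp(H)^2\langle H\rangle+K_0$ for a finite‑rank $K_0$. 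Hence $M\ge I-P_0\ge I-2f_\eta^\perp(H)^2\langle H\rangle-K_0$, which is \eqref{eq:Mourre_estimate} with $\mathrm c_0=1$, $C_1=2$, for every $\eta\in\R$; this gives (\ref{item:g2}) with $I=\R$. Finally, $\d\Gamma(\omega)$ has no eigenvalue besides the simple one at $0$ (eigenvector $\Omega$) and $K$ is purely discrete, so, since $H$ preserves the Fock grading, $\sigma_\pp(H)=\sigma_\pp(K)$ and every eigenstate of $H$ has the form $\psi=\phi\otimes\Omega$. As $\mathcal N$ and $\d\Gamma(\i\partial_\omega)$ both annihilate $\Omega$, such a $\psi$ satisfies $M\psi=0$ and $A\psi=0$, hence $\psi\in\D(A^2)\cap\D(M)$ trivially, which is Condition~\ref{cond:perturbation2c}. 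The one non‑tautological ingredient is thus the Mourre estimate, and even there the decisive point is merely $M\ge I-P_0$ combined with the discreteness of $\sigma(K)$.
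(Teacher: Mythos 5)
Your proof is correct, and it fills in, step by step, exactly the routine verification that the paper passes over ("one can easily check"); the only part the paper comments on explicitly is Condition~\ref{cond:perturbation2c}, for which your observation that eigenstates live in $\mathcal K\otimes\C\Omega$ and are annihilated by both $M$ and $A$ coincides with the paper's remark. The remaining ingredients — commutation of $H$ with $M$, the translation relation $\omega U_t=U_t(\omega+t)$ and its second quantization giving $[H,\i A]^0=M$ and $H''=0$, the semigroup bounds on $\mathcal G$ via $G_0 W_t=W_t(G_0+tM)$, and the Mourre estimate from $M\ge I-P_0$ together with discreteness of $\sigma(K)$ — are the natural ones and are all carried out correctly.
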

\begin{remark}
The fact that Condition \ref{cond:perturbation2c} is fulfilled under the conditions of Proposition \ref{prop:unperturbed} is obvious, since the unperturbed eigenstates are of the form $\phi \otimes \Omega$, where $\phi$ is an eigenstate of $K$, and $\Omega$ denotes the vacuum in $\Gamma( \tilde{\mathfrak{h}} )$. 
\end{remark}

\subsection{Results}

As a consequence of Propositions \ref{prop:pauli1}, \ref{prop:pauli2} and \ref{prop:pauli3}, applying Theorems \ref{thm:upper} and \ref{thm:fermi-golden-rule}, we obtain:
\begin{thm}\label{thm:fermi-golden-rule-Pauli}
Assume Hypothesis ${\bf (H0)}$. Let $v_0, v \in \mathcal{I}_{ \mathrm{PF} }( d )$. Let $J$ be a compact interval such that $\sigma_{\rm pp}(H^{\mathrm{PF}}_{v_0}) \cap J = \{ \lambda \}$. Let $P_{v_0}$ denote the eigenprojection $P_{v_0}=E_{H^{\mathrm{PF}}_{v_0}}(\{\lambda\})$ and $\bar P_{v_0} = I - P_{v_0}$. Then the following holds:
\begin{itemize}
\item[i)] There exists $\sigma_0>0$ such that for all $0\le |\sigma| \le \sigma_0$, the total multiplicity of the eigenvalues of $H^{\mathrm{PF}}_{v_0} + \sigma \phi( v )$ in $J$ is at most $\dim \mathrm{Ran}(P_{v_0})$.
\item[ii)] Suppose in addition that
  \begin{equation}
    \label{eq:fermi_condition_pauli}
    P_{v_0} \phi( v ) \Im \parb{ (H^{\mathrm{PF}}_{v_0}-\lambda-\i 0)^{-1}\bar P_{v_0}} \phi( v ) P_{v_0} \geq c P_{v_0},
  \end{equation}
for some $c>0$. Then there exists $\sigma_0>0$ such that for all $0 < |\sigma| \le \sigma_0$, 
\begin{equation}
\sigma_{\rm pp}\big ( H^{\mathrm{PF}}_{v_0} + \sigma \phi( v ) \big ) \cap J = \emptyset.
\end{equation}
\end{itemize}
\end{thm}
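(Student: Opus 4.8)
The plan is to verify that the concrete model $(H,M,R,A)=(\tilde H^{\mathrm{PF}}_{v_0},M_\delta,R_\delta(\tilde v_0),\tilde A_\delta)$ satisfies the abstract hypotheses of Theorems \ref{thm:upper} and \ref{thm:fermi-golden-rule}, and that the perturbation $V=\phi(\tilde v)$ lies in the right perturbation class, so that those abstract theorems apply directly. More precisely, for a fixed compact interval $J$ we choose $E_0\in\mathbb{R}$ with $J\subseteq(-\infty,E_0)=:I$, and then invoke Propositions \ref{prop:pauli1}, \ref{prop:pauli2} and \ref{prop:pauli3}: for $\delta$ small enough (depending on $E_0$), Conditions \ref{cond:perturbation1}, \ref{cond:perturbation2} and \ref{cond:tech} all hold on $I$ for this choice of operators, with $\mathcal{B}_{1,\gamma}$ the set $\{\phi(\tilde w):w\in\mathcal{I}_{\mathrm{PF}}(d),\ \|w\|_{\mathrm{PF}}\le\tilde\gamma\}$. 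By Remark \ref{rk:pauli1} 1) this $\mathcal{B}_{1,\gamma}$ is star-shaped, symmetric about $0$, contains $0$, and is contained in the $\gamma$-ball of $\mathcal V_1^{\mathrm{PF}}$ once $\tilde\gamma$ is small; since the map \eqref{eq:continuous_map} is continuous and $\mathcal V_1^{\mathrm{PF}}$ is exactly the class $\mathcal V_1$ associated to the present abstract data, $\phi(\tilde v)\in\mathcal V_1$ and $\|\sigma\phi(\tilde v)\|_1\to0$ as $\sigma\to0$.

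For part i) I would argue as follows. Condition \ref{cond:perturbation2b} (finiteness of $\dim\Ker(H-\lambda)$ and the membership $\psi\in\D(A)\cap\D(M)$ for eigenstates) is subsumed in Condition \ref{cond:perturbation2}, which holds by Proposition \ref{prop:pauli2}. Applying Theorem \ref{thm:upper} with $H=\tilde H^{\mathrm{PF}}_{v_0}$, the compact interval $J$ (on which $\sigma_{\rm pp}(H)\cap J=\{\lambda\}$ by hypothesis), and the perturbations $V=\sigma\phi(\tilde v)$: there is $\gamma'>0$ such that whenever $\sigma\phi(\tilde v)\in\mathcal B_{1,\gamma}$ and $\|\sigma\phi(\tilde v)\|_1\le\gamma'$, the total multiplicity of eigenvalues of $H+\sigma\phi(\tilde v)$ in $J$ is at most $\dim\Ker(H-\lambda)=\dim\mathrm{Ran}(P_{v_0})$. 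Since $\|\sigma\phi(\tilde v)\|_1\le|\sigma|\,\|\phi(\tilde v)\|_1^{\mathrm{PF}}$ up to the continuity constant, for $|\sigma|\le\sigma_0$ small both conditions hold (using that $\mathcal B_{1,\gamma}$ is star-shaped about $0$, so $\sigma\phi(\tilde v)$ stays inside it once $\phi(\tilde v)$ is, for $|\sigma|\le1$); this gives i) including the case $\sigma=0$.

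For part ii) the hypothesis \eqref{eq:fermi_condition_pauli} is precisely Condition \ref{cond:fermi-golden-rule} for the abstract data with $V=\phi(\tilde v)$ and eigenvalue $\lambda$, and I have already checked Conditions \ref{cond:perturbation1}, \ref{cond:perturbation2b}, \ref{cond:tech}, and that alternative \ref{item:b}) of Theorem \ref{thm:fermi-golden-rule} applies (Condition \ref{cond:perturbation2} holds by Proposition \ref{prop:pauli2}, and $\sigma\phi(\tilde v)\in\mathcal B_{1,\gamma}$ for $|\sigma|$ small as above). Hence Theorem \ref{thm:fermi-golden-rule} yields $\sigma_0>0$ such that $\sigma_{\rm pp}(H+\sigma\phi(\tilde v))\cap J=\emptyset$ for $0<|\sigma|<\sigma_0$, which is ii). The only genuinely nontrivial points are bookkeeping ones: first, that $\mathcal V_1^{\mathrm{PF}}$ with its norm $\|\cdot\|_1^{\mathrm{PF}}$ literally coincides with the abstract $\mathcal V_1$, $\|\cdot\|_1$ for the data $(\tilde H^{\mathrm{PF}}_{v_0},M_\delta,\tilde A_\delta)$—this is immediate from the definitions once one notes $[\phi(\tilde v),\i\tilde A_\delta]^0=-\phi(\i\tilde a_\delta\tilde v)$ is $\tilde H^{\mathrm{PF}}_{v_0}$-bounded; and second, that the single $\delta$ (hence the single conjugate operator $\tilde A_\delta$) can be chosen to serve simultaneously in all three Propositions \ref{prop:pauli1}, \ref{prop:pauli2}, \ref{prop:pauli3} on the given $I$, which is clear since each requires $\delta\le\delta_0(E_0)$ and we take the minimum. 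I do not expect any of this to be a real obstacle; the substantive work has been relegated to \cite{GGM} and \cite{FMS} via Propositions \ref{prop:pauli1}--\ref{prop:pauli3}.
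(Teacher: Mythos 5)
Your proposal is correct and follows essentially the same route as the paper, which likewise deduces Theorem \ref{thm:fermi-golden-rule-Pauli} by citing Propositions \ref{prop:pauli1}, \ref{prop:pauli2}, \ref{prop:pauli3} and then invoking Theorems \ref{thm:upper} and \ref{thm:fermi-golden-rule} under alternative \ref{item:b}). The only small point you elide is the last bookkeeping step that the paper records explicitly: the abstract theorems give the conclusion for $\tilde H^{\mathrm{PF}}_{v_0}+\sigma\phi(\tilde v)$ on $\tilde{\mathcal H}_{\mathrm{PF}}$, and one must then transfer back to $H^{\mathrm{PF}}_{v_0}+\sigma\phi(v)$ on $\mathcal H_{\mathrm{PF}}$ via the unitary $\mathcal T$ (which also identifies \eqref{eq:fermi_condition_pauli} with Condition \ref{cond:fermi-golden-rule} for the tilded data), using that spectrum, point spectrum, and eigenspace dimensions are unitary invariants.
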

\begin{remarks}
\begin{enumerate}[\quad\normalfont 1)]
\item In view of Propositions \ref{prop:pauli1}, \ref{prop:pauli2} and \ref{prop:pauli3}, Theorems \ref{thm:upper} and \ref{thm:fermi-golden-rule} imply Theorem \ref{thm:fermi-golden-rule-Pauli} with $\tilde{H}^{\mathrm{PF}}_{ v_0 }$ replacing $H^{\mathrm{PF}}_{v_0}$ and $\phi( \tilde v )$ replacing $\phi( v )$. However, using the unitary transformation mapping $\mathcal{H}_{\mathrm{PF}}$ to $\tilde{\mathcal{H}}_{\mathrm{PF}}$, the statement of Theorem \ref{thm:fermi-golden-rule-Pauli} clearly follows.
\item In the case where the unperturbed Hamiltonian is the non-interac\-ting one, that is $H^{\mathrm{PF}}_{v_0} = H_0^{\mathrm{PF}}$ with $H_0^{\mathrm{PF}} = K \otimes \mathds{1}_{ \Gamma ( \mathfrak{h} ) } + \mathds{1}_{ \mathcal{K} } \otimes \d \Gamma( |k| )$, one can use Proposition \ref{prop:unperturbed} instead of Proposition \ref{prop:pauli2} in order to conclude Theorem \ref{thm:fermi-golden-rule-Pauli} ii). Indeed, it follows from \cite{GGM} that if $v$ satisfies $\bf{ (I1) }$--$\bf{ (I2) }$--$\bf{ (I3) }$, then $\phi(\tilde v) \in \mathcal{V}_2$ (in the sense of Definition \ref{def:W2}). Hence, since Condition \ref{cond:perturbation2c} is satisfied by Proposition \ref{prop:unperturbed}, we can apply Theorem \ref{thm:fermi-golden-rule} with Condition $\ref{item:a})$ instead of Condition $\ref{item:b})$. For a general $v_0 \in \mathcal{I}_{ \mathrm{PF} }(d)$, however, we have to apply Theorem \ref{thm:fermi-golden-rule} with Condition $\ref{item:a})$ (see Remark \ref{rk:pauli1} 2) above).
\end{enumerate}
\end{remarks}
The latter result (the absence of eigenvalues of $H^{\mathrm{PF}}_0 + \sigma \phi( v )$ for sufficiently small $\sigma \neq 0$ according to Fermi Golden Rule) already appears in \cite{DJ} assuming in particular that $\langle \partial_\omega \rangle^\nu \tilde v \in \mathcal B ( \mathcal K ; \mathcal K \otimes \tilde{\mathfrak{h}} )$ for some $\nu > 1$. Let us also mention \cite{Go} where a similar result with different assumptions on the form factor is obtained, still for sufficiently small values of the coupling constant. Besides, in \cite{DJ}, upper semicontinuity of the point spectrum of $H^{\mathrm{PF}}_0 + \sigma \phi( v )$ (in the sense stated in Theorem \ref{thm:fermi-golden-rule-Pauli} i)) is obtained for sufficiently small $\sigma$, assuming that $\langle \partial_\omega \rangle^\nu \tilde v \in \mathcal B ( \mathcal K ; \mathcal K \otimes \tilde{\mathfrak{h}} )$ for some $\nu > 2$. The main achievement of our paper, as far as massless Pauli-Fierz models are concerned, is to provide a method which allows us to consider $H^{\mathrm{PF}}_{v_0}$ as the unperturbed Hamiltonian, for \emph{any} $v_0$ belonging to $\mathcal{I}_{ \mathrm{PF} }(d)$.

A model sharing several properties with the one considered in this subsection is the so-called ``standard model of non-relativistic QED''. For results on spectral theory in this context involving the Mourre method, we refer to \cite{Sk,BFS,BFSS,DJ,FGS}.

\subsection{Example: The massless Nelson model}\label{subsection:Nelson}

An example of a model satisfying the hypotheses of Subsection \ref{subsection:def-Pauli-Fierz} is the Nelson model of confined non-relativistic quantum particles interacting with massless scalar bosons. The Hilbert space is given by
\begin{equation}
\mathcal{H}_{\mathrm{N}} := \mathrm{L}^2( \mathbb{R}^{3P} ) \otimes \mathcal{F},
\end{equation}
where $\mathcal{F} := \Gamma( \mathrm{L}^2( \mathbb{R}^3 ) )$ is the symmetric Fock space over $\mathrm{L}^2( \mathbb{R}^3 )$ (see \eqref{eq:def_Fock}). The Nelson Hamiltonian acts on $\mathcal{H}_{\mathrm{N}}$ and is defined by
\begin{equation}\label{eq:def_HN}
H^{\mathrm{N}}_\rho := K \otimes \mathds{1}_{ \mathcal{F} } + \mathds{1}_{\mathrm{L}^2( \mathbb{R}^{3P} ) } \otimes \d \Gamma( |k| ) + I_\rho( x ). 
\end{equation}
Here $x = (x_1,\dots,x_P)$, and $K$ is a Schr{\"o}dinger operator on $\mathrm{L}^2( \mathbb{R}^{3P} )$ describing the dynamics of $P$ non-relativistic particles. We suppose that $K$ is given by
\begin{equation}
K := \sum_{i=1}^P \frac{1}{2m_i} \Delta_i + \sum_{i<j} V_{ij}( x_i - x_j ) + W(x_1,\dots,x_p),
\end{equation}
where the masses $m_i$ are positive, the confining potential $W$ satisfies
\begin{itemize}
\item[{\bf (W0)}] $W \in \mathrm{L}^2_{\mathrm{loc}}( \mathbb{R}^{3P} )$ and there exist positive constants $c_0,c_1>0$ and $\alpha>2$ such that $W(x) \ge c_0 |x|^{2\alpha} - c_1$,
\end{itemize}
and the pair potentials $V_{ij}$ satisfy
\begin{itemize}
\item[{\bf (V0)}] The $V_{ij}$'s are $\Delta$-bounded with relative bound 0.
\end{itemize}
Without loss of generality, we can assume that $K \ge 0$. Note that ${\bf (W0)}$ implies that Hypothesis ${\bf (H0)}$ of Subsection \ref{subsection:def-Pauli-Fierz} is satisfied.
 
The coupling $I_\rho(x)$ in \eqref{eq:def_HN} is of the form
\begin{equation}
I_\rho(x) := \sum_{i=1}^P \Phi_\rho( x_i ),
\end{equation}
where, for $y \in \mathbb{R}^3$, $\Phi_\rho(y)$ is the field operator defined by
\begin{equation}
\Phi_\rho(y) := \frac{1}{ \sqrt{2} } \int_{ \mathbb{R}^3 } \big ( \rho(k) e^{ -\i k \cdot y } a^*(k) + \bar{ \rho }(k) e^{ \i k \cdot x } a(k) \big ) \d k.
\end{equation}
In particular, $I_\rho(x)$ can be written under the form $I_\rho(x) = \phi( \Psi_{\mathrm{N}}(\rho) )$, where 
\begin{equation*}
\Psi_{\mathrm{N}}(\rho) \in \mathcal{B} ( \mathrm{L}^2( \mathbb{R}^{3P} ) ; \mathrm{L}^2( \mathbb{R}^{3P} ) \otimes \mathrm{L}^2( \mathbb{R}^3 ) ) = \mathcal{B} ( \mathrm{L}^2( \mathbb{R}^{3P} ) ; \mathrm{L}^2( \mathbb{R}^3 ; \mathrm{L}^2 ( \mathbb{R}^{3P} ) ) )
\end{equation*}
is defined by
\begin{equation}\label{eq:def_v-Nelson}
( \Psi_{\mathrm{N}}(\rho) \psi ) (k) ( x_1 , \dots , x_P ) = \sum_{j=1}^P e^{ - \i k \cdot x_j } \rho(k) \psi( x_1 , \dots , x_P ).
\end{equation}
Hence $H^{\mathrm{N}}_\rho$ is a Pauli-Fierz Hamiltonian in the sense of Subsection \ref{subsection:def-Pauli-Fierz}, with $\mathcal{K} = \mathrm{L}^2( \mathbb{R}^{3P} )$ and $\mathfrak{h} = \mathrm{L}^2( \mathbb{R}^3 )$.

For simplicity, we assume that $\rho$ only depends on $k$ through its norm, $|k|$, and, going to polar coordinates, we introduce
\begin{equation}
\tilde \rho( \omega ) = \omega \rho ( \omega , 0 , 0 ), \quad \omega \in \mathbb{R}^+. 
\end{equation}
Our set of conditions on $\tilde \rho$ is the following:
\begin{itemize}
\item[{\bf ($\rho$1)}] $\int_0^\infty ( 1 + \omega^{-1} ) | \tilde{\rho}( \omega ) |^2 \d \omega < \infty$,
\end{itemize}
\begin{itemize}
\item[{\bf ($\rho$2)}] $\int_0^\infty ( 1 + \omega^{-1} ) d( \omega )^2 \big [ \omega^{-2} | \tilde{\rho}( \omega ) |^2 + \big | \frac{ \d \tilde \rho }{ \d \omega } ( \omega ) \big |^2 \big ] \d \omega < \infty$,
\end{itemize}
\begin{itemize}
\item[{\bf ($\rho$3)}] $\int_0^\infty \big | \frac{ \d^2 \tilde{\rho} }{ \d \omega^2 } ( \omega ) \big |^2 \d \omega < \infty$,
\end{itemize}
\begin{itemize}
\item[{\bf ($\rho$4)}] $\int_0^\infty \omega^4 \big | \tilde{\rho} ( \omega ) \big |^2 \d \omega < \infty$,
\end{itemize}
where $d$ denotes the function considered in Subsection \ref{subsection:def-Pauli-Fierz}. Note that  {\bf ($\rho$1)}--{\bf ($\rho$2)}--{\bf ($\rho$3)} are the assumptions made in \cite{GGM}. The further assumption {\bf ($\rho$4)} is made in order that Hypothesis {\bf (I4)} of Subsection \ref{subsection:checking} is satisfied. We observe that {\bf ($\rho$2)} and {\bf ($\rho$4)} imply {\bf ($\rho$1)}.

The set of functions $\rho$ satisfying {\bf ($\rho$1)}--{\bf ($\rho$2)}--{\bf ($\rho$3)}--{\bf ($\rho$4)} is denoted by $\mathcal{I}_{\mathrm{N}}( d )$. The following proposition is proven in \cite[Subsection 5.2]{FMS}:
\begin{prop}\label{prop:Nelson1}
Let $\rho \in \mathcal{I}_{ \mathrm{N} }(d)$. Then $\Psi_{\mathrm{N}}(\rho)$ defined as in \eqref{eq:def_v-Nelson} belongs to $\mathcal{I}_{ \mathrm{PF} }(d)$.
\end{prop}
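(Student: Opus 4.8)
The plan is to check that the operator $v:=\Psi_{\mathrm N}(\rho)$ of \eqref{eq:def_v-Nelson}, regarded as a Pauli--Fierz form factor over $\mathcal K=\mathrm{L}^2(\mathbb R^{3P})$ and $\mathfrak h=\mathrm{L}^2(\mathbb R^{3})$, satisfies the four hypotheses \textbf{(I1)}, \textbf{(I2)}, \textbf{(I3)}, \textbf{(I4)} defining $\mathcal I_{\mathrm{PF}}(d)$, for a fixed $\tau$ with $1/(2\alpha)<\tau<1/2$ (possible since $\alpha>2$ by \textbf{(W0)}). The first step is to observe that \textbf{($\rho$1)}--\textbf{($\rho$2)}--\textbf{($\rho$3)} are precisely the conditions imposed on the Nelson form factor in \cite{GGM}, so that \textbf{(I1)}, \textbf{(I2)} and \textbf{(I3)} for $\Psi_{\mathrm N}(\rho)$ are already contained in the analysis of \cite{GGM} --- modulo the harmless reformulations mentioned in the remarks following Propositions \ref{prop:pauli1} and \ref{prop:pauli2} --- and are also recorded in \cite[Subsection 5.2]{FMS}. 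The only genuinely new point is therefore \textbf{(I4)}, which is exactly what the extra hypothesis \textbf{($\rho$4)} has been designed to yield, and it is this I now address.

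To establish \textbf{(I4)} I would first compute the commutator form explicitly. Write $v=\sum_{j=1}^{P}v_j$ with $(v_j\psi)(k)=\rho(k)\,e^{-\i k\cdot x_j}\psi$. For $\psi\in\D(K)$ one has $\psi\in\D(\Delta_j)$ by \textbf{(V0)}, and the pair potentials $V_{kl}(x_k-x_l)$ and the confining potential $W$, being multiplication operators in the $x$-variables, commute with the multiplication operator $e^{-\i k\cdot x_j}$; hence only the $j$-th kinetic term contributes, and a direct computation gives, as an identity in $\mathcal K\otimes\mathfrak h$ valid on $\D(K)$,
\[
\parb{(K\otimes\mathds{1}_{\mathfrak h})v_j-v_jK}\psi=\rho(k)\,e^{-\i k\cdot x_j}\,(2m_j)^{-1}\parb{\pm 2\i\,k\cdot\nabla_j+|k|^2}\psi .
\]
One checks along the way, using the decay of $\rho$ provided by \textbf{($\rho$1)} and \textbf{($\rho$4)}, that $v_j\psi\in\D(K\otimes\mathds{1}_{\mathfrak h})$ for $\psi\in\D(K)$, so that the form is well defined on $\D(K\otimes\mathds{1}_{\mathfrak h})\times\D(K)$.

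Next, since $K$ commutes with $T$ (they act in different tensor factors) one has $(K\otimes\mathds{1}_{\tilde{\mathfrak h}})\tilde v-\tilde vK=[\mathds{1}_{\mathcal K}\otimes T]\bigl((K\otimes\mathds{1}_{\mathfrak h})v-vK\bigr)$, and as $\tilde{\mathcal O}_{1/2}=[\mathds{1}_{\mathcal K}\otimes T]\mathcal O_{1/2}$ it suffices to estimate $(K\otimes\mathds{1}_{\mathfrak h})v-vK$ in the norm of $\mathcal O_{1/2}$. Split the commutator into its $|k|^2$-part and its $k\cdot\nabla_j$-part. Using $\|e^{-\i k\cdot x_j}\psi\|_{\mathcal K}=\|\psi\|_{\mathcal K}$ for each $k$, the $|k|^2$-part satisfies
\[
\Bigl\|\rho(k)\,|k|^2\sum_j(2m_j)^{-1}e^{-\i k\cdot x_j}\psi\Bigr\|_{\mathcal K\otimes\mathfrak h}^{2}\le C\Bigl(\int_{\mathbb R^{3}}|k|^4|\rho(k)|^2\,\d k\Bigr)\|\psi\|_{\mathcal K}^{2}=C'\Bigl(\int_0^{\infty}\omega^4|\tilde\rho(\omega)|^2\,\d\omega\Bigr)\|\psi\|_{\mathcal K}^{2},
\]
finite by \textbf{($\rho$4)}; this part is in fact a bounded operator from $\mathcal K$ into $\mathcal K\otimes\mathfrak h$, hence a fortiori lies in $\mathcal O_{1/2}$. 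For the $k\cdot\nabla_j$-part I would commute $\nabla_j$ past $e^{-\i k\cdot x_j}$ --- producing $\nabla_je^{-\i k\cdot x_j}$ together with a term proportional to $|k|\,e^{-\i k\cdot x_j}$ that merges into a $|k|^2$-weighted estimate of the preceding kind --- and then use that $(K+1)^{-1/2}\nabla_j$ is bounded, which holds because \textbf{(V0)} and \textbf{(W0)} give $-\Delta_j\lesssim K+1$; this yields an $\mathcal O_{1/2}$-bound controlled by $\bigl(\int_0^{\infty}(\omega^2+\omega^4)|\tilde\rho(\omega)|^2\,\d\omega\bigr)^{1/2}$, finite by \textbf{($\rho$1)} and \textbf{($\rho$4)}. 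The complementary estimate in the definition of $\mathcal O_{1/2}$, namely boundedness from $\D(K^{1/2})$ into $\mathcal K\otimes\mathfrak h$, follows the same way, now using $\|\nabla_j\psi\|\lesssim\|(K+1)^{1/2}\psi\|$ directly. This establishes \textbf{(I4)} and hence, together with \textbf{(I1)}--\textbf{(I3)}, the proposition.

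I expect the real difficulty to lie not in \textbf{(I4)} but in the (imported) verification of \textbf{(I1)}--\textbf{(I3)}: there one must carefully balance the two competing effects that arise whenever a derivative $\partial_\omega$ or a commutator with $K$ acts on $e^{-\i k\cdot x_j}$ --- one gains a power of $|k|$, to be matched against the decay of $\tilde\rho$ quantified by \textbf{($\rho$1)}--\textbf{($\rho$4)}, or a power of $|x_j|$, to be absorbed by $(K+1)^{\tau}$ (for \textbf{(I1)}--\textbf{(I2)}) or by $(K+1)^{1/2}$ (for \textbf{(I3)}, where the term $(\theta\cdot x_j)^2$ produced by $\partial_\omega^2$ forces the bound $|x|^4\lesssim K+1$, the precise place where $\alpha>2$ is used). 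Once \textbf{(I1)}--\textbf{(I3)} are taken over from \cite{GGM,FMS}, the only new estimate is the elementary \textbf{($\rho$4)}-bound on the $|k|^2$-term of the commutator displayed above.
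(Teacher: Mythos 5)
The paper contains no proof of this proposition --- it is deferred wholesale to \cite[Subsection 5.2]{FMS} --- but the surrounding text states exactly the design you follow: \textbf{($\rho$1)}--\textbf{($\rho$3)} are the hypotheses of \cite{GGM} and yield \textbf{(I1)}--\textbf{(I3)}, while \textbf{($\rho$4)} is imposed precisely to obtain \textbf{(I4)}. Your verification of \textbf{(I4)} --- only the $j$-th kinetic term contributes to the commutator, the $|k|^2$-term is a bounded operator by \textbf{($\rho$4)}, and the $k\cdot\nabla_j$-term is controlled on both sides of the $\mathcal{O}_{1/2}$ definition using $-\Delta_j\lesssim K+1$ together with $\int_0^\infty(\omega^2+\omega^4)|\tilde\rho(\omega)|^2\,\d\omega<\infty$ --- is correct, so the proposal is sound and consistent with the paper's intended argument.
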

An example of $\rho$, and hence $\tilde \rho$, satisfying {\bf ($\rho$1)}--{\bf ($\rho$2)}--{\bf ($\rho$3)}--{\bf ($\rho$4)} is
\begin{equation}\label{eq:example_rho}
\rho(k) = e^{ - \frac{ |k|^2 }{ 2 \Lambda^2 } } |k|^{ - \frac{1}{2} + \epsilon }, \quad \tilde \rho( \omega ) = e^{ - \frac{ \omega^2 }{ 2 \Lambda^2 } } \omega^{ \frac{1}{2} + \epsilon },
\end{equation}
with $0 < \Lambda < \infty$ and $\epsilon > 1$.

From Proposition \ref{prop:Nelson1} and Theorem \ref{thm:fermi-golden-rule-Pauli}, we obtain:
\begin{thm}\label{thm:fermi-golden-rule-Nelson}
Assume that Hypotheses ${\bf (W0)}$ and ${\bf (V0)}$ hold. Let $\rho_0 , \rho \in \mathcal{I}_{ \mathrm{N} }(d)$. Let $J$ be a compact interval such that $\sigma_{\rm pp}(H^{\mathrm{N}}_{ \rho_0 }) \cap J = \{ \lambda \}$. Let $P_{ \rho_0 }$ denote the eigenprojection $P_{ \rho_0 }=E_{H^{\mathrm{N}}_{ \rho_0 }}(\{\lambda\})$ and $\bar P_{ \rho_0 } = I - P_{ \rho_0 }$. Then the following holds:
\begin{itemize}
\item[i)] There exists $\sigma_0>0$ such that for all $0\le |\sigma| \le \sigma_0$, the total multiplicity of the eigenvalues of $H^{\mathrm{N}}_{ \rho_0 } + \sigma I_{ \rho }( x )$ in $J$ is at most $\dim \mathrm{Ran}(P_{ \rho_0 })$.
\item[ii)] Suppose in addition that
  \begin{equation}
    \label{eq:fermi_condition_nelson}
    P_{ \rho_0 } I_\rho(x) \Im \parb{ (H^{\mathrm{N}}_{ \rho_0 } - \lambda-\i 0)^{-1}\bar P_{ \rho_0 }} I_\rho(x) P_{ \rho_0 }\geq c P_{ \rho_0 },
  \end{equation}
for some $c>0$. Then there exists $\sigma_0>0$ such that for all $0 < |\sigma| \le \sigma_0$, 
\begin{equation}
\sigma_{\rm pp}\big ( H^{\mathrm{N}}_{ \rho_0 } + \sigma I_\rho(x) \big ) \cap J = \emptyset.
\end{equation}
\end{itemize}
\end{thm}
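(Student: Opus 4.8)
The plan is to reduce Theorem \ref{thm:fermi-golden-rule-Nelson} to the already-established Theorem \ref{thm:fermi-golden-rule-Pauli} by checking that the massless Nelson Hamiltonian is a massless Pauli-Fierz Hamiltonian in the sense of Subsection \ref{subsection:def-Pauli-Fierz} and that its form factor lies in the admissible class $\mathcal{I}_{\mathrm{PF}}(d)$. First I would record the identification already made above: $H^{\mathrm{N}}_\rho$ is of the form \eqref{eq:def_HN} with $\mathcal{K} = \mathrm{L}^2(\mathbb{R}^{3P})$, $\mathfrak{h} = \mathrm{L}^2(\mathbb{R}^3)$, small-system operator $K$ the confined $P$-particle Schr{\"o}dinger operator, and form factor $v = \Psi_{\mathrm{N}}(\rho)$ as in \eqref{eq:def_v-Nelson}; in particular $I_\rho(x) = \phi(\Psi_{\mathrm{N}}(\rho))$ and $H^{\mathrm{PF}}_{\Psi_{\mathrm{N}}(\rho)} = H^{\mathrm{N}}_\rho$.

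Next, verify the hypotheses. By the Kato-Rellich theorem together with ${\bf (V0)}$ and the lower bound ${\bf (W0)}$, the operator $K$ is self-adjoint and bounded below, so without loss of generality $K \ge 0$; moreover the standard Rellich-type compactness criterion for confining potentials shows that $(K+1)^{-1}$ is compact, so Hypothesis ${\bf (H0)}$ holds (this was noted after ${\bf (W0)}$). The only substantive input is that, for $\rho \in \mathcal{I}_{\mathrm{N}}(d)$, the bounds ${\bf (\rho 1)}$--${\bf (\rho 4)}$ on the radial profile $\tilde\rho$ translate into Hypotheses ${\bf (I1)}$--${\bf (I4)}$ on $\Psi_{\mathrm{N}}(\rho)$, i.e. $\Psi_{\mathrm{N}}(\rho) \in \mathcal{I}_{\mathrm{PF}}(d)$; this is precisely Proposition \ref{prop:Nelson1}, proved in \cite[Subsection 5.2]{FMS}.

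Then I would apply Theorem \ref{thm:fermi-golden-rule-Pauli} with $v_0 = \Psi_{\mathrm{N}}(\rho_0)$ and $v = \Psi_{\mathrm{N}}(\rho)$, both in $\mathcal{I}_{\mathrm{PF}}(d)$ by the preceding step. Under the substitutions $H^{\mathrm{PF}}_{v_0} = H^{\mathrm{N}}_{\rho_0}$, $\phi(v) = I_\rho(x)$, $P_{v_0} = P_{\rho_0}$, part i) of Theorem \ref{thm:fermi-golden-rule-Pauli} is exactly part i) here. For part ii), the hypothesis \eqref{eq:fermi_condition_nelson} is literally \eqref{eq:fermi_condition_pauli} after these substitutions, so part ii) of Theorem \ref{thm:fermi-golden-rule-Pauli} gives the claimed absence of eigenvalues of $H^{\mathrm{N}}_{\rho_0} + \sigma I_\rho(x)$ in $J$ for $0 < |\sigma| \le \sigma_0$.

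I expect no genuine obstacle in this section: all the analytic effort is packaged into Proposition \ref{prop:Nelson1} (hence into \cite{FMS}) and into Theorem \ref{thm:fermi-golden-rule-Pauli} (hence into the abstract Theorems \ref{thm:upper} and \ref{thm:fermi-golden-rule} via Propositions \ref{prop:pauli1}, \ref{prop:pauli2}, \ref{prop:pauli3}). The only point requiring any care is the bookkeeping that the Nelson interaction, written in polar coordinates, matches the structure demanded by ${\bf (I1)}$--${\bf (I4)}$ of $\tilde v$ --- in particular keeping track of the factor $\omega$ relating $\rho$ and $\tilde\rho$ and the Jacobian $\omega^{(d-1)/2}$ with $d = 3$ hidden in the map $T$ --- but this is exactly the content of Proposition \ref{prop:Nelson1}.
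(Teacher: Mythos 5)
Your proposal is correct and matches the paper's argument exactly: the paper derives Theorem \ref{thm:fermi-golden-rule-Nelson} as an immediate consequence of Proposition \ref{prop:Nelson1} (showing $\Psi_{\mathrm{N}}(\rho) \in \mathcal{I}_{\mathrm{PF}}(d)$, proved in \cite{FMS}) together with Theorem \ref{thm:fermi-golden-rule-Pauli}, after noting that ${\bf (W0)}$ gives ${\bf (H0)}$. Your write-up merely spells out the substitutions $v_0 = \Psi_{\mathrm{N}}(\rho_0)$, $v = \Psi_{\mathrm{N}}(\rho)$, $I_\rho(x) = \phi(\Psi_{\mathrm{N}}(\rho))$ more explicitly than the paper does.
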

%
%


In fact, the confinement assumption ${\bf (W0)}$ allows one to make use of a unitary dressing transformation (see e.g. \cite{GGM,FMS}) in order to ``improve'' the infrared behavior of the form factor in the Hamiltonian $H^{\mathrm{N}}_{ \rho_0 }$. More precisely, let {\bf ($\rho$1')} denote the following condition:
\begin{itemize}
\item[{\bf ($\rho$1')}] $\int_0^\infty ( 1 + \omega^{-2} ) | \tilde{\rho}( \omega ) |^2 \d \omega < \infty$.
\end{itemize}
Assuming that $\rho_0$ satisfies this condition, the unitary operator
\begin{equation}
U_{ \rho_0 } := e^{ - \i P \Phi_{ \i \rho_0 / | \cdot | } }
\end{equation}
is well-defined and we can consider the Hamiltonian
\begin{align}
H_{ \rho_0 }^{ \mathrm{N}' } :=& ( \mathds{1}_{ \mathcal{K} } \otimes U_{ \rho_0 } ) H_{ \rho_0 }^{ \mathrm{N} } ( \mathds{1}_{ \mathcal{K} } \otimes U_{ \rho_0 }^* ) \notag \\
=& K_{ \rho_0 } \otimes \mathds{1}_\mathcal{F} + \mathds{1}_{ \mathcal{K} } \otimes \d \Gamma( |k| ) + I_{ \rho_0 }( x ) - I_{ \rho_0 }( 0 ),
\end{align}
where
\begin{align}
K_{ \rho_0 } :=& K + \frac{ P^2 }{ 2 } \int_{ \mathbb{R}^3 } \frac{ | \rho_0(k) |^2 }{ |k| } \d k  - P \sum_{j=1}^P \int_{ \mathbb{R}^3 } \frac{ | \rho_0(k) |^2 }{ |k| } \mathrm{cos}( k \cdot x_j ) \d k.
\end{align}
In the same way as in \eqref{eq:def_v-Nelson}, we observe that $I_{ \rho_0 }(x) - I_{ \rho_0 }( 0 ) = \phi( \Psi'_{\mathrm{N}}( \rho_0 ) )$, where $\Psi'_{\mathrm{N}}( \rho_0 )$ is defined by
\begin{equation}\label{eq:def_v-Nelson-dressing}
( \Psi'_{\mathrm{N}}( \rho_0 ) \psi ) (k) ( x_1 , \dots , x_P ) = \sum_{j=1}^P ( e^{ - \i k \cdot x_j } - 1 ) \rho_0(k) \psi( x_1 , \dots , x_P ).
\end{equation}
In particular, $H_{\rho_0}^{ \mathrm{N}' }$ is a Pauli-Fierz Hamiltonian in the sense of Subsection \ref{subsection:def-Pauli-Fierz}.

We consider the following further conditions:
\begin{itemize}
\item[{\bf ($\rho$2')}] $\int_0^\infty \big | \frac{ \d \tilde{\rho} }{ \d \omega } ( \omega ) \big |^2 \d \omega < \infty$,
\end{itemize}
\begin{itemize}
\item[{\bf ($\rho$3')}] $\int_0^\infty ( 1 + \omega^2 )^{-1} \omega^2 \big | \frac{ \d^2 \tilde{\rho} }{ \d \omega^2 } ( \omega ) \big |^2 \d \omega < \infty$,
\end{itemize}
and we denote by $\mathcal{I}_{\mathrm{N}}'( d )$ the set of functions $\rho$ satisfying {\bf ($\rho$1')}--{\bf ($\rho$2')}--{\bf ($\rho$3')}--{\bf ($\rho$4)}. In \cite[Subsection 5.2]{FMS}, we verify that if $\rho_0 \in \mathcal{I}_{ \mathrm{N}}'(d)$, then $\Psi'_{\mathrm{N}}( \rho_0 )$ defined as in \eqref{eq:def_v-Nelson-dressing} belongs to $\mathcal{I}_{ \mathrm{PF} }(d)$. Notice that for any $0 < \Lambda < \infty$ and $\epsilon > 0$, the function given in \eqref{eq:example_rho} belongs to $\mathcal{I}_{\mathrm{N}}'( d )$.

%
%
%
%
As in the statement of Theorem \ref{thm:fermi-golden-rule-Nelson}, we consider a perturbation of the Hamiltonian $H_{\rho_0}^{\mathrm{N}}$ of the form $\sigma I_\rho(x)$. After the dressing transformation, the perturbation becomes
\begin{align}
\sigma I_{\rho_0,\rho}(x) :=&\sigma ( \mathds{1}_{ \mathcal{K} } \otimes U_{ \rho_0 } ) I_\rho(x) ( \mathds{1}_{ \mathcal{K} } \otimes U_{ \rho_0 }^* ) \notag \\
=& I_\rho(x) - P \sum_{j=1}^P \mathrm{Re} \int_{ \mathbb{R}^3 } \frac{ \bar{ \rho }_0(k) \rho(k) }{ |k| } e^{ - \i k \cdot x_j } \d k.
\end{align}
Notice that $\sigma I_{\rho_0,\rho}(x)$ is \emph{not} a field operator in the sense of Subsection \ref{subsection:def-Pauli-Fierz}. Hence it does not belong to the class of perturbations considered in Theorem \ref{thm:fermi-golden-rule-Pauli}. Nevertheless, proceeding in the same way as what we did in Subsection \ref{subsection:checking} to deduce Theorem \ref{thm:fermi-golden-rule-Pauli} (see in particular \cite[Theorem 1.2 2)]{FMS} for the verification of Condition \ref{cond:perturbation2} in the present context), we obtain:
\begin{thm}\label{thm:fermi-golden-rule-Nelson-dressing}
Assume that Hypotheses ${\bf (W0)}$ and ${\bf (V0)}$ are satisfied and let $\rho_0 \in \mathcal{I}_{ \mathrm{N} }'(d)$ and $\rho \in \mathcal{I}_{ \mathrm{N} }(d)$. Let $J$ be a compact interval such that $\sigma_{\rm pp}(H^{\mathrm{N}}_{ \rho_0 }) \cap J = \{ \lambda \}$. Let $P_{ \rho_0 }$ denote the eigenprojection $P_{ \rho_0 }=E_{H^{\mathrm{N}}_{ \rho_0 }}(\{\lambda\})$ and $\bar P_{ \rho_0 } = I - P_{ \rho_0 }$. Then the conclusions $\mathrm{i)}$ and $\mathrm{ii)}$ of Theorem \ref{thm:fermi-golden-rule-Nelson} hold.
\end{thm}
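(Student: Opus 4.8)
\emph{Proof proposal.} The plan is to reduce the assertion to Theorem \ref{thm:fermi-golden-rule-Pauli}, or rather directly to the abstract Theorems \ref{thm:upper} and \ref{thm:fermi-golden-rule}, by conjugating with the unitary dressing transformation $U_{\rho_0}$ followed by the polar‑coordinates unitary $\mathcal{T}$ of Subsection \ref{subsection:def-Pauli-Fierz}. Since $\rho_0\in\mathcal{I}_{\mathrm{N}}'(d)$ satisfies in particular \textbf{($\rho$1')}, the operator $U_{\rho_0}=e^{-\i P\Phi_{\i\rho_0/|\cdot|}}$ is a well‑defined unitary on $\mathcal H_{\mathrm{N}}$, and $\mathds 1_{\mathcal K}\otimes U_{\rho_0}$ conjugates $H^{\mathrm N}_{\rho_0}+\sigma I_\rho(x)$ into $H^{\mathrm N'}_{\rho_0}+\sigma I_{\rho_0,\rho}(x)$, carries $P_{\rho_0}$ into the $\lambda$‑eigenprojection of $H^{\mathrm N'}_{\rho_0}$, and transforms the left‑hand side of \eqref{eq:fermi_condition_nelson} into the corresponding expression for $H^{\mathrm N'}_{\rho_0}$ and $I_{\rho_0,\rho}(x)$ (the reduced boundary value $(H-\lambda-\i0)^{-1}\bar P$ of Section \ref{Reduced limiting absorption principle at an eigenvalue} being covariant under unitary conjugation). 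As $\sigma_{\mathrm{pp}}$, total multiplicity, and the inequality \eqref{eq:18} are all unitary invariants, it suffices to establish conclusions i) and ii) for the dressed objects.

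Next I would verify that $H^{\mathrm N'}_{\rho_0}$ is a massless Pauli–Fierz Hamiltonian with form factor $\Psi'_{\mathrm N}(\rho_0)$ of \eqref{eq:def_v-Nelson-dressing}. One writes $K_{\rho_0}=K+S_0$ with $S_0$ bounded and self‑adjoint: indeed $\int_{\mathbb R^3}|\rho_0(k)|^2/|k|\,\d k=\int_0^\infty|\tilde\rho_0(\omega)|^2\omega^{-1}\,\d\omega<\infty$ by \textbf{($\rho$1)} (a consequence of \textbf{($\rho$1')}), and $\cos(k\cdot x_j)$ is bounded, so $S_0\in\mathcal B(\mathcal K)$; hence $(K_{\rho_0}+1)^{-1}$ is still compact and $K_{\rho_0}$ bounded below, so after subtracting a constant Hypothesis \textbf{(H0)} holds for $K_{\rho_0}$. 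By the verification recalled in the text (carried out in \cite[Subsection 5.2]{FMS}), $\rho_0\in\mathcal{I}_{\mathrm N}'(d)$ implies $\Psi'_{\mathrm N}(\rho_0)\in\mathcal{I}_{\mathrm{PF}}(d)$; the mechanism is that the factor $e^{-\i k\cdot x_j}-1$ gains one power of $|k|$ in the infrared at the price of polynomial growth in $x_j$, which is absorbed by the confinement \textbf{(W0)}. Consequently Propositions \ref{prop:pauli1}, \ref{prop:pauli2} (with the set $\mathcal{B}_{1,\gamma}$ supplied by \cite[Theorem 1.2 2)]{FMS}) and \ref{prop:pauli3} apply to $H=\tilde H^{\mathrm{PF}}_{\Psi'_{\mathrm N}(\rho_0)}$, $M=M_\delta$, $R=R_\delta$, $A=\tilde A_\delta$ for $\delta$ small, on the interval $I=(-\infty,E_0)$ with $E_0>\sup J$; thus Conditions \ref{cond:perturbation1}, \ref{cond:perturbation2} (hence also \ref{cond:perturbation2b}) and \ref{cond:tech} all hold, and $\sigma_{\mathrm{pp}}(\tilde H^{\mathrm{PF}}_{\Psi'_{\mathrm N}(\rho_0)})\cap J=\{\lambda\}$.

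It remains to place the dressed perturbation in the right class. In polar coordinates $I_{\rho_0,\rho}(x)$ equals $\phi(\tilde w)+S$, where $w:=\Psi_{\mathrm N}(\rho)\in\mathcal{I}_{\mathrm{PF}}(d)$ by Proposition \ref{prop:Nelson1}, and $S:=-P\sum_{j}\mathrm{Re}\int_{\mathbb R^3}\bar\rho_0(k)\rho(k)|k|^{-1}e^{-\i k\cdot x_j}\,\d k$ is a bounded self‑adjoint operator on $\mathcal K$ (the integral converges absolutely by Cauchy--Schwarz and \textbf{($\rho$1)}), acting as $S\otimes\mathds 1$ and therefore commuting with $\tilde A_\delta=\mathds 1_{\mathcal K}\otimes\d\Gamma(\tilde a_\delta)$; hence $S\in\mathcal V_2$ with $S'=0$. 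Together with the continuity of $w\mapsto\phi(\tilde w)$ this shows $I_{\rho_0,\rho}(x)\in\mathcal V_1$ and, for a suitable small multiple, $\tau_*I_{\rho_0,\rho}(x)\in\mathcal{B}_{1,\gamma}$; the rescaling does not affect Condition \ref{cond:fermi-golden-rule} since \eqref{eq:18} is homogeneous of degree two in $V$. Applying Theorem \ref{thm:upper} with perturbation $\sigma\tau_*I_{\rho_0,\rho}(x)$ for $|\sigma|$ small yields conclusion i) (after transporting back by $\mathcal T^{-1}$ and $\mathds 1_{\mathcal K}\otimes U_{\rho_0}^*$, dimensions being unchanged), and applying Theorem \ref{thm:fermi-golden-rule} under its hypothesis \ref{item:b}), with $V=\tau_*I_{\rho_0,\rho}(x)$ satisfying the transformed version of \eqref{eq:fermi_condition_nelson}, yields conclusion ii) after the same transport. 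The only genuinely nontrivial ingredient is the verification of Condition \ref{cond:perturbation2} for this non‑field‑operator perturbation class, which is precisely the content of \cite[Theorem 1.2 2)]{FMS}; granting it, the argument above is routine bookkeeping of unitary equivalences.
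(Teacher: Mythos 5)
Your proposal is correct and follows essentially the same route as the paper, which likewise proves this theorem by conjugating with the dressing unitary, observing that $H^{\mathrm{N}'}_{\rho_0}$ is a Pauli--Fierz Hamiltonian with form factor $\Psi'_{\mathrm{N}}(\rho_0)\in\mathcal{I}_{\mathrm{PF}}(d)$, noting that the dressed perturbation is a field operator plus a bounded particle-space operator commuting with $\tilde A_\delta$, and deferring the verification of Condition \ref{cond:perturbation2} for this perturbation class to \cite[Theorem 1.2 2)]{FMS}. Your write-up simply makes explicit the bookkeeping (compactness of $(K_{\rho_0}+1)^{-1}$, boundedness of the extra term, homogeneity of the Fermi Golden Rule condition under rescaling) that the paper leaves implicit.
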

Observe that, thanks to the unitary dressing transformation $U_{\rho_0}$, the Fermi golden rule condition \eqref{eq:fermi_condition_nelson} is equivalent to the following one:
\begin{equation}
    \label{eq:fermi_condition_nelson-dressing}
    P_{\rho_0}' I_{\rho_0,\rho}(x) \Im \parb{ (H^{\mathrm{N}'}_{ \rho_0 } - \lambda-\i 0)^{-1}\bar P_{\rho_0}'} I_{\rho_0,\rho}(x) P_{\rho_0}' \geq c P_{\rho_0}',
\end{equation}
where 
\begin{equation}
P_{\rho_0}'  := E_{H^{\mathrm{N}'}_{ \rho_0 }}(\{\lambda\}).
\end{equation}
Hence the conclusions of Theorem \ref{thm:fermi-golden-rule-Nelson-dressing} for $H_{\rho_0}^{\mathrm{N}}$ follows from the corresponding statements for $H_{\rho_0}^{ \mathrm{N}' }$.

In Theorem \ref{thm:fermi-golden-rule-Nelson-dressing}, $\rho_0$ and $\rho$ do not belong to the same class of form factors (as far as the infrared singularity is concerned, $\rho_0$ is allowed to have a more singular infrared behavior than $\rho$). This is due to the fact that the unitary transformation $U_{\rho_0}$ is $\rho_0$-dependent, so that the Hamiltonian obtained after the transformation, $H^{ \mathrm{N}' }_{ \rho_0 }$, does not depend linearly on $\rho_0$. Thus, a perturbation of the form $H^{ \mathrm{N}' }_{ \rho_0 + \sigma \rho } - H^{ \mathrm{N}' }_{ \rho_0 }$ does not belong to the class of linear perturbations considered in this paper (at least as far as the Fermi Golden Rule criterion is concerned). Nevertheless, since the non-linear terms in $\sigma$ in the expression of $H^{ \mathrm{N}' }_{ \rho_0 + \sigma \rho } - H^{ \mathrm{N}' }_{ \rho_0 }$ act only on the particle Hilbert space $\mathrm{L}^2( \mathbb{R}^{3P} )$ (and hence, in particular, commute with the conjugate operator $\tilde A_\delta$ of Subsection \ref{subsection:def-Pauli-Fierz}), we expect that the method of this paper can be extended to cover the case where both $\rho_0$ and $\rho$ belong to $\mathcal{I}_{ \mathrm{N}}'(d)$. 

\section{Reduced Limiting Absorption Principle at an
  eigenvalue} \label{Reduced limiting absorption principle at an
  eigenvalue}

In this section we prove two different ``reduced Limiting Absorption Principles''. Assuming Conditions \ref{cond:perturbation1} and \ref{cond:perturbation2b}, we shall prove a Limiting Absorption Principle for the reduced unperturbed Hamiltonian $H \bar P$ (where $P = E_H( \{ \lambda \} )$ and $\bar P = I-P$). If the stronger Condition \ref{cond:perturbation2c} is satisfied, we shall obtain a Limiting Absorption Principle for the reduced perturbed Hamiltonian $H+\alpha P+\sigma V$ (for some $\alpha > 0$), provided that $V\in\mathcal V_2$ and that $\sigma$ is sufficiently small.

Let us first recall from \cite{GGM1}:
\begin{thm}
  \label{thm:reduc-limit-absorpt1}  Assume that Conditions \ref{cond:perturbation1} hold. Suppose  $J \subseteq I$ is   a compact interval
   such that $\sigma_\pp( H)\cap J=
  \emptyset$. Let $S=\{ z \in \mathbb{C} , \Re z\in J,0<|\Im z|\leq 1\}$. For any $1/2<s\le 1$,
  \begin{equation}
    \label{eq:13i}
    \sup_{z\in S}\| \langle A \rangle^{-s} (H-z)^{-1} \langle A \rangle^{-s}\|<\infty.
  \end{equation} Moreover the function $S\ni z\to
  \langle A \rangle^{-s} (H-z)^{-1} \langle A \rangle^{-s} \in \mathcal B (\mathcal H)$ is
  uniformly H\"older continuous of order $s-1/2$.
\end{thm}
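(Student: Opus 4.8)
The plan is to obtain \eqref{eq:13i}, together with the H\"older continuity, from the Mourre estimate of Condition~\ref{cond:perturbation1}~(\ref{item:g2}) by the standard differential-inequality scheme of the Mourre method, with the two adaptations forced by the present framework: the conjugate operator $A$ is only maximal symmetric, so the unitary group $\e^{\i tA}$ is replaced throughout by the $C_0$-semigroup $\{W_t\}$ of isometries (as in \cite{HuSp,ABG}), and $H'=[H,\i A]^0=M+R$ is $H$-unbounded, so all commutators are read at the level of the forms on $\mathcal G\times\mathcal G$ and of the operators in $\mathcal B(\mathcal G;\mathcal G^*)$ furnished by Condition~\ref{cond:perturbation1}~(\ref{item:g4}), as in \cite{Sk,GGM1}.

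First I would promote the Mourre estimate to a \emph{strict} one on small subintervals. Fix $\lambda_0\in J$ and choose $f_{\lambda_0}$ as in \eqref{eq:Mourre_estimate}. If $\Delta\ni\lambda_0$ is an open interval with $\overline\Delta\subseteq\{f_{\lambda_0}=1\}$, then $f_{\lambda_0}^\perp(H)E_\Delta(H)=0$, so localizing \eqref{eq:Mourre_estimate} by $E_\Delta(H)$ (legitimate as an inequality of forms on $\mathcal G$, cf.\ \cite{GGM1} and Remark~\ref{rk:perturbation1}) leaves only $E_\Delta(H)(M+R)E_\Delta(H)\ge\mathrm{c}_0E_\Delta(H)-E_\Delta(H)K_0E_\Delta(H)$. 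Since $\lambda_0\notin\sigma_\pp(H)$ we have $\slim_{|\Delta|\to0}E_\Delta(H)=0$, hence $\|E_\Delta(H)K_0E_\Delta(H)\|\to0$ because $K_0$ is compact; shrinking $\Delta$ yields $E_\Delta(H)(M+R)E_\Delta(H)\ge\tfrac12\mathrm{c}_0E_\Delta(H)$. By compactness of $J$, finitely many such $\Delta$ cover $J$, and by the first resolvent identity it suffices to prove \eqref{eq:13i} with $J$ replaced by a compact $J'$ contained in one such $\Delta$; fix $\varphi\in\mathrm{C}_0^\infty(\Delta)$ with $\varphi\equiv1$ near $J'$.

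Next I would run the Mourre iteration on the regularized resolvent $G_\varepsilon(z):=(H-\i\varepsilon\Theta-z)^{-1}$, where $\Theta:=\varphi(H)(M+R)\varphi(H)$ is bounded, self-adjoint, and, by the strict estimate together with $\varphi(H)E_\Delta(H)=\varphi(H)$, satisfies $\Theta\ge\tfrac12\mathrm{c}_0\varphi(H)^2\ge0$; here $z=\lambda+\i\mu$ with $\lambda\in J'$, $0<\mu\le1$, and $\varepsilon\in\,]0,1]$. Since $\Theta\ge0$ one has the a priori bound $\|G_\varepsilon(z)\|\le\mu^{-1}$ uniformly in $\varepsilon$, and $\partial_\varepsilon G_\varepsilon(z)=\i G_\varepsilon(z)\Theta G_\varepsilon(z)$. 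Expressing $\i\Theta$ through the virial identity \eqref{eq:vir}, \eqref{eq:5a} and using $H,H'\in\mathrm{C}^1(A_\mathcal{G};A_{\mathcal{G}^*})$ together with $H''=[H',\i A]^0\in\mathcal B(\mathcal G;\mathcal G^*)$ (the ``$\mathrm C^2$'' part of Condition~\ref{cond:perturbation1}~(\ref{item:g4})), one derives for $f_\varepsilon(z):=\|\langle A\rangle^{-s}\varphi(H)G_\varepsilon(z)\varphi(H)\langle A\rangle^{-s}\|$ a Gronwall-type inequality
\begin{equation*}
\bigl|\partial_\varepsilon f_\varepsilon(z)\bigr|\le C\,\varepsilon^{2s-2}\bigl(f_\varepsilon(z)+1\bigr),\qquad\tfrac12<s\le1,
\end{equation*}
together with $f_1(z)\le C$ at $\varepsilon=1$ (where the weights are harmless since $\varepsilon$ is bounded away from $0$). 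Integrating from $\varepsilon$ to $1$, and using that $\int_0^1\varepsilon^{2s-2}\,\d\varepsilon<\infty$ precisely when $s>1/2$, gives $f_\varepsilon(z)\le C$ uniformly in $\varepsilon$, $\lambda$ and $\mu$; letting $\varepsilon\downarrow0$, removing the localization $\varphi(H)$ by a routine argument (valid since $\varphi\equiv1$ near $J'$), and using the symmetry $z\mapsto\bar z$ to cover $\Im z<0$, one obtains \eqref{eq:13i}. Applying the same argument to $G_\varepsilon(z_1)-G_\varepsilon(z_2)$ and interpolating against the uniform bound just obtained gives the H\"older continuity of exponent $s-1/2$, as in \cite{ABG,Mo}.

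The step I expect to be the main obstacle is the rigorous justification of this commutator calculus in the singular, non-self-adjoint setting: since $M+R$ is $H$-unbounded and $A$ is only maximal symmetric, the objects $[\varphi(H),\i A]$, $[G_\varepsilon(z),\i A]$ and the integration by parts underlying the differential inequality are a priori only meaningful as identities between sesquilinear forms on $\mathcal G$ or as elements of $\mathcal B(\mathcal G;\mathcal G^*)$, and one must check that $\langle A\rangle^{-s}\varphi(H)$ and $G_\varepsilon(z)$ act correctly on the scale $\mathcal G\subseteq\mathcal H\subseteq\mathcal G^*$ and that the boundedly-stable property of $\mathcal G$ under $\{W_t\}$ and $\{W_t^*\}$ (Condition~\ref{cond:perturbation1}~(\ref{item:g3})) makes all the approximations legitimate. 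This is precisely the analysis carried out in \cite{GGM1}, from which the theorem is quoted.
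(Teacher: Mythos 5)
Your overall strategy (strict Mourre estimate on small subintervals, covering argument, differential inequality in the regularization parameter, integrability of $\varepsilon^{2s-2}$ for $s>1/2$) is the right family of ideas, and indeed the paper does not prove this theorem itself: it quotes \cite[Theorem 3.3]{GGM1} and only adds the covering argument you describe in your first paragraph. However, your reconstruction of the core argument has a genuine gap at its central construction. You set $\Theta:=\varphi(H)(M+R)\varphi(H)$ and assert that it is ``bounded, self-adjoint''. In the singular Mourre setting this is false: $M$ is not $H$-bounded (this is the defining feature of the theory, cf.\ Remark \ref{rk:perturbation1} \ref{item:h5})), so $\varphi(H)$ does not map $\mathcal H$ into $\D(M)$ or even $\D(M^{1/2})$, and $\varphi(H)(M+R)\varphi(H)$ is neither bounded nor, a priori, well defined as an operator on $\mathcal H$. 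Everything downstream collapses with it: the a priori bound $\|G_\varepsilon(z)\|\le\mu^{-1}$, the identity $\partial_\varepsilon G_\varepsilon(z)=\i G_\varepsilon(z)\Theta G_\varepsilon(z)$ between bounded operators, and the very existence of $G_\varepsilon(z)$ all presuppose that $\Theta$ is a bounded dissipative perturbation. Your scheme is the classical one for \emph{regular} Mourre theory ($M=0$, $H'$ $H$-bounded), where localizing the commutator by $\varphi(H)$ produces a bounded operator; here it does not.

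The correct route, which is the one carried out in \cite[Subsections 3.3--3.4]{GGM1} and mimicked in the proof of Theorem \ref{thm:reduc-limit-absorpt2} of this paper, keeps the full unbounded commutator: one sets $H_\epsilon:=H-\i\epsilon H'$ with $H'=M+R$, and invokes \cite[Theorem 2.25 and Lemma 2.26]{GGM1} to show that $H_\epsilon$ is closed, densely defined, with $H_\epsilon^*=H_{-\epsilon}$, and that $(H_\epsilon-z)^{-1}$ exists for $\epsilon\mu>0$ small with the bounds \eqref{eq:||Repsilon||_1}--\eqref{eq:s-lim_Repsilon}. The strict Mourre estimate is used not to make the regularizer bounded but to produce the quadratic estimate $\|R_\epsilon(z)v\|_{\mathcal G}\le C|\epsilon|^{-1/2}(|(v,R_\epsilon(z)v)|^{1/2}+\|v\|)$, which is what tames the quadratic term in $\d F_\epsilon/\d\epsilon$; moreover the weight must be taken $\epsilon$-dependent, $\rho_\epsilon=\langle\epsilon A\rangle^{s-1}\langle A\rangle^{-s}$, rather than the fixed $\langle A\rangle^{-s}$, so that $\|A\rho_\epsilon\|\le C|\epsilon|^{s-1}$ makes the first-order terms integrable. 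If you want to present a proof rather than a citation, these are the points you must supply; as written, your argument would only be valid under the additional hypothesis that $H'$ is $H$-bounded, which is exactly the hypothesis the theorem is designed to avoid.
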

\begin{remarks}
\begin{enumerate}[\quad\normalfont 1)]
\item Strictly speaking, the Mourre estimate formulated in Condition \ref{cond:perturbation1} (\ref{item:g2}) together with \cite{GGM1} yield that, for any $\eta \in J$, there is a neighbourhood $I_\eta$ such that, for any compact interval $J_\eta \subseteq I_\eta$, the Limiting Absorption Principle \eqref{eq:13i} holds with $J_\eta$ replacing $J$. The statement of Theorem \ref{thm:reduc-limit-absorpt1} then follows from the compactness of $J$ and a covering argument (see Step II in the proof of Theorem \ref{thm:reduc-limit-absorpt3} below for the use of the same argument).
\item The result \cite[Theorem 3.3]{GGM1} is stronger in that the bound
(\ref{eq:13i}) holds in a stronger
operator topology (given in terms of the Hilbert spaces $\mathcal{ G}$ and
$\mathcal{ G}^*)$. For our purposes (\ref{eq:13i}) suffices. A similar
remark is due for the bounds (\ref{eq:13}) and \eqref{eq:13b} given below. 
\end{enumerate}
\end{remarks}

We shall now obtain a result similar to Theorem \ref{thm:reduc-limit-absorpt1} for a reduced resolvent.
\begin{thm}
  \label{thm:reduc-limit-absorpt2}  Assume that Conditions
  \ref{cond:perturbation1} and Condition \ref{cond:perturbation2b} hold. Suppose $J \subseteq I$ is a compact interval
   such that $\sigma_\pp( H)\cap J=
  \{\lambda\}$. Let $P$ denote the eigenprojection
  $P=E_H(\{\lambda\})$ and let $\bar P=I-P$. Let $S=\{z \in \mathbb{C} , \Re z\in J,0<|\Im z|\leq 1\}$. For any $1/2 < s \le 1$,
  \begin{equation}
    \label{eq:13}
    \sup_{z\in S}\| \langle A \rangle^{-s}(H-z)^{-1}\bar P \langle A \rangle^{-s}\|<\infty.
  \end{equation}
  Moreover there exists $C>0$ such that for all $z,z' \in S$,
    \begin{equation}
    \label{eq:13c_}
    \begin{split}
    & \left \| \langle A \rangle^{-s} \left ( (H-z)^{-1} - (H - z')^{-1} \right ) \bar P \langle A \rangle^{-s} \right \| \le C |z-z'|^{s-\frac{1}{2}}.
    \end{split}
  \end{equation}
\end{thm}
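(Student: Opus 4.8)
The plan is to reduce the claim about $(H-z)^{-1}\bar P$ to the already-established Theorem \ref{thm:reduc-limit-absorpt1}, which applies to the full resolvent of $H$ restricted to a compact interval \emph{avoiding} the point spectrum. The obstacle is that the eigenvalue $\lambda$ lies in $J$, so the full resolvent is singular there; the reduction $\bar P$ is supposed to remove precisely that singularity, but one must be careful because $\langle A\rangle^{-s}$ does not obviously commute with $P$, and a priori $P$ need not even map into $\D(A)$ — only Condition \ref{cond:perturbation2b} gives $\Ran P \subseteq \D(A)$ (in fact $\D(A)\cap \D(M)$), which is why that condition is invoked here but not in Theorem \ref{thm:reduc-limit-absorpt1}.

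First I would record the elementary algebraic identity on $S$, valid for $z \notin \sigma(H)$,
\begin{equation*}
(H-z)^{-1}\bar P = \bar P (H-z)^{-1}\bar P = (H\bar P - z)^{-1}\bar P,
\end{equation*}
where $H\bar P$ is understood as the self-adjoint operator $H$ restricted to the reducing subspace $\Ran \bar P$ (and set to $0$ on $\Ran P$, say). The spectrum of $H$ on $\Ran\bar P$ is $\sigma(H)\setminus\{\lambda\}$; in particular $\sigma_\pp(H\bar P)\cap J=\emptyset$. The strategy is then to run the Mourre machinery for $H\bar P$: I would check that $(H,M,R,A)$ with $H$ replaced by $H\bar P$ still satisfies (the relevant parts of) Conditions \ref{cond:perturbation1}, using that $\bar P$ commutes with $H$ and, thanks to Condition \ref{cond:perturbation2b} applied to the finitely many eigenstates spanning $\Ran P$, that $\bar P$ interacts well with $A$, $M$, and the spaces $\mathcal G$, $\mathcal G^*$. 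Concretely: $P$ is a finite-rank projection onto vectors in $\D(A)\cap\D(M)\cap\D(|H|^{1/2})$, hence $P$ and $\bar P$ are bounded operators preserving $\mathcal G$, $\D(M)$, $\D(H)$, and one checks $\bar P \in C^1(A_{\mathcal G};A_{\mathcal G^*})$ with $[\bar P,\i A]^0$ a finite-rank (hence $H$-bounded) operator. Then the Mourre estimate \eqref{eq:Mourre_estimate} localized near any $\eta\in J\setminus\{\lambda\}$ transfers to $H\bar P$ on $\Ran\bar P$, and near $\eta=\lambda$ one uses that $H\bar P$ has a \emph{gap} at $\lambda$ on $\Ran\bar P$ so a trivial Mourre estimate holds there. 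Applying Theorem \ref{thm:reduc-limit-absorpt1} to $H\bar P$ on the Hilbert space $\Ran\bar P$ (or a cosmetic variant allowing the eigenvalue to sit in the interior of $J$, which the remark after that theorem already licenses via a covering argument) gives
\begin{equation*}
\sup_{z\in S}\big\| \langle A\rangle^{-s}(H\bar P-z)^{-1}\bar P\langle A\rangle^{-s}\big\| < \infty
\end{equation*}
together with the H\"older continuity of order $s-1/2$, which is exactly \eqref{eq:13} and \eqref{eq:13c_}.

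An alternative, more self-contained route that avoids re-verifying Conditions \ref{cond:perturbation1} for $H\bar P$: write, for $z\in S$,
\begin{equation*}
(H-z)^{-1}\bar P = (H-z)^{-1} - (H-z)^{-1}P = (H-z)^{-1} - (\lambda-z)^{-1}P,
\end{equation*}
using $HP=\lambda P$. This expresses the reduced resolvent as the full resolvent minus an explicit rank-$(\dim\Ran P)$ term with a simple pole at $z=\lambda$. The singular pieces must cancel: since $\Ran P\subseteq\D(A)$ and, by Remark \ref{remarks:domain}, even $\Ran P\subseteq\D(HA)$, the operator $\langle A\rangle^{-s}P\langle A\rangle^{-s}$ and more importantly the relevant combination is controlled, and near $\lambda$ one expands $(H-z)^{-1}$ using the spectral resolution $(H-z)^{-1}=(\lambda-z)^{-1}P + (H\bar P-z)^{-1}\bar P$ to see the $(\lambda-z)^{-1}$ terms cancel identically, leaving $(H\bar P-z)^{-1}\bar P$. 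So the two routes converge; the first is cleanest.

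The main obstacle, as flagged above, is the low regularity of $P$ relative to $A$: Condition \ref{cond:perturbation2b} only places eigenstates in $\D(A)\cap\D(M)$, not in $\D(A^2)$, so while $\langle A\rangle^{-s}P$ makes sense and is bounded for $s\le 1$ (each eigenstate being in $\D(A)\subseteq\D(\langle A\rangle^{1/2})$, hence $P\langle A\rangle^{-s}$ extends boundedly — one should be slightly careful since $A$ is only maximal symmetric, but $\langle A\rangle=(1+A^*A)^{1/2}$ is still self-adjoint and $\D(A)\subseteq\D(\langle A\rangle)$), the verification that $\bar P$ is of class $C^1(A_{\mathcal G};A_{\mathcal G^*})$ and that the modified $R$ (namely $\bar P R\bar P$, or the commutator $[H\bar P,\i A]^0$) is still $H$-bounded with the Mourre estimate intact requires a genuine if routine check. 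I would carry this out by writing $[\bar P,\i A]^0 = \sum_j (|\psi_j\rangle\langle A\psi_j| - |A\psi_j\rangle\langle\psi_j|)$-type finite-rank expressions for an orthonormal basis $\{\psi_j\}$ of $\Ran P$, which are manifestly bounded and finite-rank, and then conjugating \eqref{eq:Mourre_estimate} by $\bar P$, absorbing the finite-rank error terms into the compact operator $K_0$. Everything else is bookkeeping, and the H\"older estimate \eqref{eq:13c_} is then immediate from the corresponding statement in Theorem \ref{thm:reduc-limit-absorpt1}.
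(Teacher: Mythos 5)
The proposal does not work as written, and the gaps are not merely bookkeeping.

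First, the assertion near the hinge of your first route is false: ``near $\eta=\lambda$ one uses that $H\bar P$ has a \emph{gap} at $\lambda$ on $\Ran\bar P$.'' Here $\lambda$ is an \emph{embedded} eigenvalue, so $\lambda\in\sigma_{\mathrm{ess}}(H)$, and projecting out the eigenspace does not open a spectral gap. The Mourre estimate for $H_{\bar P}$ near $\lambda$ is genuinely needed, and the paper obtains it precisely by conjugating \eqref{eq:Mourre_estimate} (in its form \eqref{eq:Mourrei0} with $K_0$ replaced by $C_3 P$) by $\bar P$, which kills the $P$ contribution and gives \eqref{eq:Mourrei00}. That conjugation step is the real content; the ``trivial Mourre estimate via a gap'' does not exist.

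Second, the proposed reduction to Theorem \ref{thm:reduc-limit-absorpt1} ``applied to $H\bar P$ on $\Ran\bar P$'' does not go through, for a structural reason: the singular Mourre framework of Conditions \ref{cond:perturbation1} is built around the $C_0$-semigroup $W_t=e^{\i tB}$ and its restriction to $\mathcal G$. There is no reason for $W_t$ to commute with $\bar P$, or even to leave $\bar P\mathcal H$ invariant, so the conjugate-operator/semigroup structure simply does not restrict to the reduced Hilbert space $\bar P\mathcal H$. You would have to reprove the abstract limiting absorption principle in a setting where the Hamiltonian commutes with $\bar P$ but the regularizer $\rho_\epsilon=\langle\epsilon A\rangle^{s-1}\langle A\rangle^{-s}$ does not. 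That is essentially what the paper's proof does: it runs the differential-inequality argument of \cite[Theorem 3.3]{GGM1} with the modified operator $\bar H_\epsilon = H - \i\epsilon\,\bar P H'\bar P$ and the quantity $F_\epsilon(z)=\langle\rho_\epsilon u,\bar P\bar R_\epsilon(z)\bar P\rho_\epsilon u\rangle$, inserting $\bar P$'s by hand in the right places. Citing Theorem \ref{thm:reduc-limit-absorpt1} as a black box skips exactly that work.

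Third, and this is the deepest point, even if you could set up the framework for $H\bar P$ (or equivalently for $H+\alpha_J P$ as in Theorem \ref{thm:reduc-limit-absorpt3}), verifying Conditions \ref{cond:perturbation1} (\ref{item:g4}) for the modified Hamiltonian would need two commutators of $P$ with $A$ to be under control, i.e., roughly $\Ran P\subseteq\D(A^2)$, which is Condition \ref{cond:perturbation2c} — a hypothesis this theorem is expressly designed to \emph{avoid}, assuming only Condition \ref{cond:perturbation2b}. The paper's proof is organized so that the only combination that ever appears is $H'PA-APH'-H''$, which is bounded $\mathcal G\to\mathcal G^*$ using merely $\Ran P\subseteq\D(A)\cap\D(M)$; no $A^2P$ terms arise. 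Your second ``self-contained'' route is circular: the identity $(H-z)^{-1}\bar P=(H-z)^{-1}-(\lambda-z)^{-1}P$ does not help because the full resolvent $(H-z)^{-1}$ itself blows up at $z=\lambda$, and when you ``expand'' to see the poles cancel you are back to bounding $(H_{\bar P}-z)^{-1}\bar P$, which is what was to be proved.

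In short, the theorem is not a corollary of Theorem \ref{thm:reduc-limit-absorpt1}; it requires re-running the GGM differential inequality with $\bar P$ inserted, which is exactly what the paper's proof does and what your proposal does not supply.
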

\begin{proof} 
It follows from Conditions \ref{cond:perturbation1} and Condition \ref{cond:perturbation2b} that $\sigma_{ \mathrm{pp} }(H)$ is finite in a neighbourhood of $\lambda$. Hence, possibly by considering a bigger compact interval, we can assume without loss of generality that $\lambda$ is included in the interior of $J$.

Consider Condition \ref{cond:perturbation1} (\ref{item:g2}) with $\eta = \lambda$. Let $J_\lambda \subseteq J$ be a compact neighbourhood of $\lambda$ such that $f_\lambda = 1$ on a neighbourhood of $J_\lambda$. Applying Theorem \ref{thm:reduc-limit-absorpt1} on $[ \inf J , \inf J_\lambda ]$ and using that $P + \bar P = I$, we obtain that
\begin{equation}
\sup_{ z \in \mathbb{C} , \Re z \in [ \inf J , \inf J_\lambda ] , 0 < | \Im z | \le 1 }\| \langle A \rangle^{-s}(H-z)^{-1}\bar P \langle A \rangle^{-s}\|<\infty,
\end{equation}
and that $z \mapsto \langle A \rangle^{-s} (H-z)^{-1} \bar P \langle A \rangle^{-s}$ is H{\"o}lder continuous of order $s-1/2$ on $\{ z \in \mathbb{C} , \Re z \in [ \inf J , \inf J_\lambda ] , 0 < | \Im z | \le 1 \}$. The same holds with $[ \sup J_\lambda , \sup J ]$ replacing $[ \inf J , \inf J_\lambda ]$. Therefore, to conclude the proof, one can verify that it is sufficient to establish the statement of Theorem \ref{thm:reduc-limit-absorpt2} with $J$ replaced by $J_\lambda$. We can follow the proof of \cite[Theorem 3.3]{GGM1}. We emphasize the
differences with \cite{GGM1} and refer the reader to that paper  for more details.

We obtain from  \eqref{eq:Mourre_estimate} with $\eta = \lambda$ that  
\begin{equation}
  \label{eq:Mourrei000}
  M+R\geq 2^{-1} c_0 I  - C_2 f^{\bot}_\lambda(H)^2\inp{H} - f_\lambda(H) K f_\lambda(H).
\end{equation}
Since $f_\lambda(H)$ goes strongly to $P$ as $\lambda \to 0$, we obtain
\begin{equation}
  \label{eq:Mourrei0}
  M+R\geq 3^{-1} c_0 I - C_2f^{\bot}_\lambda(H)^2\inp{H}- C_3P,
\end{equation}
which is valid if the support of $f_\lambda$ is
sufficiently close to $\lambda$.
 Applying $\bar P$ from the left and from the right in
 (\ref{eq:Mourrei0})  yields
\begin{equation}
  \label{eq:Mourrei00}
  \bar P(M+R)\bar P\geq 3^{-1} c_0 \bar P - C_2\bar Pf^{\bot}_\lambda(H)^2\inp{H}\bar P.
\end{equation} 

Next, we can mimic the proof of \cite[Theorem 3.3]{GGM1} using
(\ref{eq:Mourrei00}) and the
following slightly different constructions: In Subsection 3.4 of
\cite{GGM1} the operator $H_\epsilon$ (related to the one from the
seminal paper \cite{Mo}) is taken as $H_\epsilon=H-\i \epsilon H'$.
Notice that here and henceforth we can assume without loss  that
$H'$ is closed (possibly by taking the closure).

We propose to take 
\begin{equation}
  \label{eq:14}
  \bar{H}_\epsilon:=H-\i \epsilon
\bar PH'\bar P,
\end{equation}
with domain $\D(\bar{H}_\epsilon) := \D(H) \cap \D(M) \cap
\mathrm{Ran}(\bar{P})$ on the Hilbert space $\bar{\mathcal{H}} := \bar P \mathcal{H}$. It follows from the assumption
$\mathrm{Ran}(P) \subseteq \D(M)$ that $\bar{H}_\epsilon$ is
well-defined and commutes with $\bar{P}$. Similarly, denoting $H_{\bar P} := H
|_{\D(H) \cap \mathrm{Ran}(\bar{P})}$ and $M_{ \bar P } := (\bar P M \bar P)
|_{\D(M) \cap \mathrm{Ran}(\bar{P})}$, the assumption that $\mathrm{Ran}(P) \subseteq \D(M)$ implies  that $H_{ \bar P }$ and $M_{ \bar P }$ are self-adjoint. Moreover $H_{ \bar P } \in \mathrm{C}^1( M_{ \bar P } )$, $\D( H_{ \bar P } ) \cap \D( M_{ \bar P } )$ is a core for $ M_{ \bar P } $, and $\bar{P} H' \bar{P}$ coincides with the closure of $M_{ \bar P } + R_{ \bar P } $ defined on $\D( M_{ \bar P } ) \cap \D( H_{ \bar P } )$, where $R_{ \bar P } := (\bar P R \bar P)
|_{\D(H) \cap \mathrm{Ran}(\bar{P})}$. Therefore the assumptions of  \cite[Theorem 2.25]{GGM1} are satisfied (see \cite[Lemma 2.26]{GGM1}), which implies that $\bar{H}_\epsilon$ is closed, densely defined, and $\bar{H}_\epsilon^* = \bar{H}_{-\epsilon}$.

Let $\bar{\mathcal{G}} := \mathcal{G} \cap \mathrm{Ran}( \bar P)$. By
Conditions \ref{cond:perturbation1} and the fact that $\mathrm{Ran}(P)
\subseteq \D(M)$, $\bar{H}_\epsilon$ extends to a bounded operator:
$\bar{H}_\epsilon \in \mathcal{B}( \bar{\mathcal{G}} ; \bar{
  \mathcal{G} }^* )$. Mimicking \cite[Subsection 3.4]{GGM1} (replacing
$u \in \D(H_\epsilon)$ in Lemmata 3.9 and 3.10 by $u \in
\D(\bar{H}_\epsilon)$, and using \eqref{eq:Mourrei00}), one can show
that there exists $\epsilon_0$ such that for all $0<|\epsilon| \le
\epsilon_0$, for all $z=\eta+\i\mu$ with $\eta \in J_\lambda$ and $\epsilon \mu > 0$,
$\bar{H}_\epsilon - z$ is invertible with bounded inverse
$\bar{R}_\epsilon(z) \in \mathcal{B}( \bar{\mathcal H} ; \D ( \bar
H_\epsilon ))$. Furthermore $\bar{R}_\epsilon(z)$ extends to a bounded
operator in $\mathcal{B}( \bar{ \mathcal{G} }^* ; \bar{\mathcal{G}} )$
which coincides with the inverse of $( \bar H_\epsilon - z ) \in
\mathcal{B}( \bar{ \mathcal G } ; \bar{ \mathcal{G} }^*)$, and which
satisfies
\begin{subequations}
  \begin{align}
&\| \bar{R}_\epsilon(z) \|_{ \mathcal{B} ( \bar{\mathcal{G}} ; \bar{\mathcal{G}}^* ) } \le \frac{ C }{ | \epsilon | }, \label{eq:||Repsilon||_1} \\
&\| \bar{R}_\epsilon(z) v \|_{ \bar{\mathcal{G}} } \le \frac{ C }{ | \epsilon |^{ \frac{1}{2} } } \left ( \big | ( v , \bar{R}_\epsilon(z) v ) \big |^{ \frac{1}{2} } + \| v \| \right ) \text{ for all } v \in \bar{\mathcal{H}}, \label{eq:||Repsilon||} \\
&\mathrm{s}\text{-} \lim_{\epsilon \to 0^\pm} \bar{R}_\epsilon(z) = ( H_{ \bar P } - z )^{-1} \in \mathcal{B}( \bar{\mathcal{H}} ), \label{eq:s-lim_Repsilon}
\end{align}
\end{subequations}
(see \cite[Proposition 3.11 and Lemma 3.12]{GGM1}).

Let $\rho_\epsilon := \langle \epsilon A \rangle^{s-1} \langle A \rangle^{-s}$, with $1/2<s\le 1$. Instead of looking at the expectation of the resolvent $R_\epsilon
(z):=(H_\epsilon-z)^{-1}$, $\epsilon\neq 0$,  we propose to show a differential inequality
for the quantity
\begin{equation}
  \label{eq:17}
  F_\epsilon(z):=\left \langle \rho_\epsilon u , \bar P \bar R_\epsilon(z) \bar P \rho_\epsilon u \right \rangle;
\end{equation} here $u\in \mathcal{H}$, so that $\rho_\epsilon u \in \mathcal{ D}(A)\subseteq \mathcal{ D}(A^*)$. Note that the assumption $\mathrm{Ran}(P) \subseteq \D(A)$ implies that $\bar P$ leaves $\D(A)$ invariant.

In the same way as in \cite{GGM1}, one can verify that
\begin{align}
\frac{\rm d}{{\rm d} \epsilon} F_\epsilon(z) =& \langle (\frac{\rm d}{{\rm d} \epsilon} \rho_\epsilon) u , \bar R_\epsilon(z) \bar P \rho_\epsilon u \rangle + \langle \rho_\epsilon u , \bar R_\epsilon(z) \bar P (\frac{\rm d}{{\rm d} \epsilon} \rho_\epsilon) u \rangle \notag \\
&+ \langle \bar{R}_\epsilon^*(z) \bar P \rho_\epsilon u , A \rho_\epsilon u \rangle - \langle A \rho_\epsilon u , \bar{R}_\epsilon(z) \bar{P} \rho_\epsilon u \rangle \phantom{ \frac{ d }{ d \epsilon } } \notag \\ 
\phantom{ \frac{ d }{ d } } &+ \epsilon \langle \bar{R}_\epsilon^*(z) \bar P \rho_\epsilon u , \left ( H' P A - A P H' - H'' \right ) \bar{R}_\epsilon(z) \bar P \rho_\epsilon u \rangle,  \label{eq:dF_epsilon}
\end{align}
where ${\rm d} \rho_\epsilon / {\rm d} \epsilon = (s-1) \epsilon |A|^2 \langle \epsilon A \rangle^{s-3} \langle A \rangle^{-s}$. In particular 
\begin{equation}
\| {\rm d} \rho_\epsilon / {\rm d} \epsilon \| \le C | \epsilon |^{s-1} \quad \text{and} \quad \| A \rho_\epsilon \| \le C | \epsilon |^{s-1}. \label{eq:drho_epsilon}
\end{equation}
Next it follows from Conditions \ref{cond:perturbation1} and Condition
\ref{cond:perturbation2b} that
\begin{equation*}
 H' P A - A P H' - H''\in \mathcal{B}( \mathcal{G} ; \mathcal{G}^* ). 
\end{equation*}
  This implies
\begin{align}
\left | \frac{\rm d}{{\rm d} \epsilon} F_\epsilon(z) \right | \le& C_1 |\epsilon|^{s-1} \|u\| \left ( \| \bar R_\epsilon (z)\bar P \rho_\epsilon u \| + \| \bar R^*_\epsilon (z) \bar P \rho_\epsilon u \| \right ) \notag \\
&+C_2 |\epsilon| \| \bar R_\epsilon (z)\bar P \rho_\epsilon u \|_\mathcal{G} \| \bar R^*_\epsilon (z) \bar P \rho_\epsilon u \|_\mathcal{G}.
\end{align}
By \eqref{eq:||Repsilon||}, we obtain
\begin{align}
\left | \frac{\rm d}{{\rm d} \epsilon} F_\epsilon(z) \right | & \le C_3 |\epsilon|^{s-1} \|u\| |\epsilon|^{-\frac{1}{2}} \big ( | F_\epsilon(z) |^{ \frac{1}{2} } + \| \bar P\rho_\epsilon u \| \big ) \notag \\
&\quad +C_4 |\epsilon | \Big ( |\epsilon|^{-\frac{1}{2}} \big ( | F_\epsilon(z) |^{ \frac{1}{2} } + \| \bar P\rho_\epsilon u \| \big )\Big )^2 \notag \\
\phantom{ \frac{ d }{ d } } &\le C_5 | \epsilon |^{ s - \frac{3}{2} } \big ( | F_\epsilon(z) | + \| u \|^2 \big ), \label{eq:dF_epsilon_1}
\end{align}
for $0 < | \epsilon | \le \epsilon_0$. Applying Gronwall's lemma, this yields
\begin{equation}\label{eq:|F_epsilon|}
| F_\epsilon(z) | \le C_6 \| u \|^2,
\end{equation}
which combined with \eqref{eq:s-lim_Repsilon} gives
\begin{equation}
\sup_{ z \in \mathbb{C} , \Re z \in J_\lambda, 0 < | \Im z | \le 1 }\| \langle A \rangle^{-s}(H-z)^{-1}\bar P \langle A \rangle^{-s}\|<\infty.
\end{equation}

In order to prove the H{\"o}lder continuity in $z$, we use that, for $0 < \epsilon_1 < \epsilon_0$,
\begin{align}
F_0(z) - F_0(z') = & - \int_0^{\epsilon_1} \frac{ \d }{ \d \epsilon } ( F_{\epsilon}(z) - F_{\epsilon}(z') ) \d \epsilon  \notag \\
& - \int_{\epsilon_1}^{\epsilon_0} \frac{ \d }{ \d \epsilon } ( F_{\epsilon}(z) - F_{\epsilon}(z') ) \d \epsilon + ( F_{\epsilon_0}(z) - F_{\epsilon_0}(z') ). \label{eq:Holder_1}
\end{align}
It follows from \eqref{eq:dF_epsilon_1} and \eqref{eq:|F_epsilon|} that
\begin{equation}
\Big | \int_0^{\epsilon_1} \frac{ \d }{ \d \epsilon } ( F_{\epsilon}(z) - F_{\epsilon}(z') ) \d \epsilon \Big | \le C_7 \epsilon_1^{ s - \frac{1}{2} } \| u \|^2. \label{eq:Holder_2}
\end{equation}
Moreover, using the first  resolvent equation together with \eqref{eq:dF_epsilon}, \eqref{eq:||Repsilon||_1}, \eqref{eq:||Repsilon||}, \eqref{eq:drho_epsilon} and \eqref{eq:|F_epsilon|}, we obtain
\begin{equation*}
\Big | \frac{ \d }{ \d \epsilon } ( F_\epsilon(z) - F_\epsilon(z') ) \Big | \le C_8 | \epsilon |^{ s - \frac{5}{2} } | z - z' | \| u \|^2,
\end{equation*}
which implies
\begin{equation}
\Big | \int_{\epsilon_1}^{\epsilon_0} \frac{ \d }{ \d \epsilon } ( F_{\epsilon}(z) - F_{\epsilon}(z') ) \d \epsilon \Big | \le C_9 \epsilon_1^{ s - \frac{3}{2} } | z - z' | \| u \|^2. \label{eq:Holder_3}
\end{equation}
Finally, the first  resolvent equation and \eqref{eq:||Repsilon||_1} give
\begin{equation}\label{eq:Holder_4}
\big | ( F_{\epsilon_0}(z) - F_{\epsilon_0}(z') ) \big | \le C(\epsilon_0) | z - z' | \| u \|^2, 
\end{equation}
for some positive constant $C(\epsilon_0)$ depending on $\epsilon_0$. Taking $\epsilon_1 = | z - z' |$, Equation \eqref{eq:13c_} follows from \eqref{eq:Holder_1}--\eqref{eq:Holder_4}.
\end{proof}

We have the following stronger result if Condition \ref{cond:perturbation2c} is further assumed.

\begin{thm}
  \label{thm:reduc-limit-absorpt3} 
  Assume that Conditions \ref{cond:perturbation1} and Condition \ref{cond:perturbation2c} hold. Suppose $J \subseteq I$ is a compact interval such that $\sigma_\pp( H ) \cap J \subseteq \{\lambda\}$. Let $P=E_H(\{\lambda\})$ and $V \in \mathcal V_2$. For $\sigma \in \mathbb{R}$, define $H_\sigma := H + \sigma V$ and $\bar H_\sigma := H_\sigma + \alpha_J P$, where $\alpha_J \in \mathbb{R}$ is fixed such that $\alpha_J > \sup J - \inf J$. Let $S=\{z \in \mathbb{C} , \Re z\in J,0<|\Im z|\leq 1\}$. For all $1/2 < s \le 1$, there exists $\sigma_0>0$ such that for all $|\sigma| \le \sigma_0$,
  \begin{equation}
    \label{eq:13b}
    \sup_{z\in S}\| \langle A \rangle^{-s} (\bar H_\sigma-z)^{-1} \langle A \rangle^{-s}\|<\infty.
  \end{equation} 
  Moreover there exists $C>0$ such that for all $\sigma,\sigma' \in [-\sigma_0,\sigma_0]$, for all $z,z' \in S$,
    \begin{align}
    & \left \| \langle A \rangle^{-s} \left ( (\bar H_\sigma-z)^{-1} - (\bar H_{\sigma'} - z')^{-1} \right ) \langle A \rangle^{-s} \right \| \le C \left ( |\sigma - \sigma'|^{s-\frac{1}{2}} + |z-z'|^{s-\frac{1}{2}} \right ).  \label{eq:13c}
  \end{align} 
\end{thm}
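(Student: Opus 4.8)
The plan is to exploit the fact that, since $\alpha_J>\sup J-\inf J$, the finite-rank shift $\alpha_J P$ moves the only possible eigenvalue $\lambda$ of $H$ in $J$ up to $\lambda+\alpha_J>\sup J$, outside $J$: indeed $\bar H_0=H+\alpha_J P$ commutes with $P=E_H(\{\lambda\})$ and reduces to $H|_{\bar P}$ on $\mathrm{Ran}\bar P$ (no eigenvalue in $J$) and to multiplication by $\lambda+\alpha_J$ on $\mathrm{Ran}P$, so $\sigma_\pp(\bar H_0)\cap J=\emptyset$. Thus $\bar H_\sigma=\bar H_0+\sigma V$ should, for $|\sigma|$ small, fall within the scope of the ordinary singular-Mourre Limiting Absorption Principle underlying Theorem \ref{thm:reduc-limit-absorpt1}; \eqref{eq:13b} and \eqref{eq:13c} will then come out by rerunning, with $\sigma$ as an extra parameter, the Gronwall argument used in the proof of Theorem \ref{thm:reduc-limit-absorpt2}.

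\emph{Setting up $\bar H_\sigma$.} First I would verify that, for $|\sigma|\le\sigma_0$ small, the quadruple $(\bar H_\sigma,M,R_\sigma,A)$ — with the \emph{same} $M$, $A$ and $R_\sigma:=R+\alpha_J[P,\i A]^0+\sigma V'$, so that $[\bar H_\sigma,\i A]^0=M+R_\sigma$ — satisfies Conditions \ref{cond:perturbation1} on $J$. Since $P$ is finite rank (a consequence of Conditions \ref{cond:perturbation1} and Condition \ref{cond:perturbation2b}) with $\mathrm{Ran}P\subseteq\D(M)\cap\D(A^2)$ by Condition \ref{cond:perturbation2c}, the operators $[P,\i M]^0$, $[P,\i A]^0$ and $[[P,\i A]^0,\i A]^0$ are bounded and finite rank; with $V\in\mathcal V_2$ (so $V,V'$ are $H$-bounded and $V''\in\mathcal B(\mathcal G;\mathcal G^*)$) and $\bar H_\sigma-H$ an $H$-bounded perturbation of small relative bound (whence $\D(\bar H_\sigma)=\D(H)$, $\D(|\bar H_\sigma|^{1/2})=\D(|H|^{1/2})$, $\mathcal G$ unchanged), items (\ref{item:g1}), (\ref{item:g3}), (\ref{item:g4}) transcribe directly, with $H''_\sigma:=[M+R_\sigma,\i A]^0=H''+\alpha_J[[P,\i A]^0,\i A]^0+\sigma V''$. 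For the Mourre estimate (\ref{item:g2}) one starts from \eqref{eq:Mourre_estimate} for $H$ at each $\eta\in J$ and, using that $P$ commutes with $H$ and that $\lambda+\alpha_J\notin\mathrm{supp}\,f_\eta$ once $\mathrm{supp}\,f_\eta$ is small, rewrites $f_\eta^\perp(H)$ and $\langle H\rangle$ in terms of $f_\eta^\perp(\bar H_\sigma)$, $\langle\bar H_\sigma\rangle$ and $P$; the discrepancies are finite rank, the correction $\alpha_J[P,\i A]^0$ is bounded, and $\sigma V'$ is $H$-bounded and small (split off $f_\eta(H)V'f_\eta(H)$, absorb its bounded part into $c_0$ and its $f_\eta^\perp$-cross terms into the singular term), giving \eqref{eq:Mourre_estimate} for $\bar H_\sigma$ with a new compact $K_0$. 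Finally, since $\sigma_\pp(\bar H_0)\cap J=\emptyset$, I would invoke the stability of the absence of eigenvalues under $\mathcal V_2$-perturbations — Corollary \ref{cor:stability} applied to $\bar H_0$ in place of $H$, whose proof rests on the virial theorem \eqref{eq:5b} (valid for eigenstates of $\bar H_\sigma$, which one shows lie in $\D(M^{1/2})\cap\D(A)$ by the usual regularization procedure, using Condition \ref{cond:perturbation2c} and the finite-rank structure of $P$) combined with the localized Mourre estimate — to conclude $\sigma_\pp(\bar H_\sigma)\cap J=\emptyset$ for $|\sigma|\le\sigma_0$.

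\emph{The parameter-dependent Gronwall argument.} With $\sigma_\pp(\bar H_\sigma)\cap J=\emptyset$, the compact $K_0$ in the Mourre estimate for $\bar H_\sigma$ can be localized away near each point of $J$ (as $f_\eta(\bar H_\sigma)K_0f_\eta(\bar H_\sigma)\to E_{\bar H_\sigma}(\{\eta\})K_0E_{\bar H_\sigma}(\{\eta\})=0$ when $\mathrm{supp}\,f_\eta\to\{\eta\}$), so the construction of \cite{GGM1} and of the proof of Theorem \ref{thm:reduc-limit-absorpt2} applies to $\bar H_\sigma$ directly, with no $\bar P$-reduction. Following it, set $\bar H_{\sigma,\epsilon}:=\bar H_\sigma-\i\epsilon(M+R_\sigma)$, $\bar R_{\sigma,\epsilon}(z):=(\bar H_{\sigma,\epsilon}-z)^{-1}$ (well-defined for $0<|\epsilon|\le\epsilon_0$, $\epsilon\Im z>0$, with uniform-in-$\sigma$ analogues of \eqref{eq:||Repsilon||_1}--\eqref{eq:s-lim_Repsilon}) and $F_{\sigma,\epsilon}(z):=\langle\rho_\epsilon u,\bar R_{\sigma,\epsilon}(z)\rho_\epsilon u\rangle$ with $\rho_\epsilon=\langle\epsilon A\rangle^{s-1}\langle A\rangle^{-s}$. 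The identity analogous to \eqref{eq:dF_epsilon} holds with error operator in $\mathcal B(\mathcal G;\mathcal G^*)$ — precisely because Conditions \ref{cond:perturbation1}, Condition \ref{cond:perturbation2c} and $V\in\mathcal V_2$ make $H''_\sigma$ and the attendant $PA$-, $AP$-terms bounded $\mathcal G\to\mathcal G^*$ — yielding $|\partial_\epsilon F_{\sigma,\epsilon}(z)|\le C|\epsilon|^{s-3/2}(|F_{\sigma,\epsilon}(z)|+\|u\|^2)$ with $C$ uniform in $(\sigma,z)$, hence $|F_{\sigma,\epsilon}(z)|\le C\|u\|^2$ by Gronwall; letting $\epsilon\to0$, then polarizing and using the symmetry of $\bar H_\sigma$ and density of $\mathrm{Ran}\langle A\rangle^{-s}$, gives \eqref{eq:13b} uniformly in $\sigma$. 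For \eqref{eq:13c} I would repeat the decomposition \eqref{eq:Holder_1} for $F_{\sigma,0}(z)-F_{\sigma',0}(z')$, splitting the $\epsilon$-integral at $\epsilon_1:=|z-z'|+|\sigma-\sigma'|$: over $(0,\epsilon_1)$ one gets $O(\epsilon_1^{s-1/2}\|u\|^2)$ from the differential inequality and the Gronwall bound, while over $[\epsilon_1,\epsilon_0]$ and at $\epsilon=\epsilon_0$ one estimates differences via the first resolvent equation in $z$ and $\partial_\sigma\bar R_{\sigma,\epsilon}(z)=-\bar R_{\sigma,\epsilon}(z)(V-\i\epsilon V')\bar R_{\sigma,\epsilon}(z)$ in $\sigma$ (recall $V,V'$ are $H$-bounded), obtaining $O\big((\epsilon_1^{s-5/2}+C(\epsilon_0))(|z-z'|+|\sigma-\sigma'|)\|u\|^2\big)$; this yields the bound $C(|z-z'|+|\sigma-\sigma'|)^{s-1/2}\|u\|^2$, whence \eqref{eq:13c}.

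\emph{Main obstacle.} The delicate point is entirely in the first step: making rigorous that the finite-rank energy shift $\alpha_J P$ genuinely removes the spectral obstruction of $H$ at $\lambda$ inside $J$ — that the reshuffling of \eqref{eq:Mourre_estimate} leaves only a harmless compact error and bounded commutator corrections — and, above all, proving $\sigma_\pp(\bar H_\sigma)\cap J=\emptyset$ for small $\sigma$, which hinges on establishing enough regularity ($\D(M^{1/2})\cap\D(A)$) of the hypothetical eigenstates of $\bar H_\sigma$ to apply the virial theorem. Once $\bar H_\sigma$ is correctly placed within Conditions \ref{cond:perturbation1} with empty point spectrum in $J$, the remainder is a parameter-tracked rerun of the arguments already used for Theorems \ref{thm:reduc-limit-absorpt1} and \ref{thm:reduc-limit-absorpt2}.
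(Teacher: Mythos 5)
Your second step (the parameter-dependent Gronwall argument) matches the paper's proof in its essentials: the same choice of $\bar H_{\sigma,\epsilon}$ (the paper takes $\bar H_{\sigma,\epsilon}=\bar H_\sigma-\i\epsilon(H'_\sigma+\alpha_J P')$, which is your $\bar H_\sigma-\i\epsilon(M+R_\sigma)$), the same $F_{\sigma,\epsilon}(z)$ and Gronwall bound, and the same splitting of the $\epsilon$-integral at $\epsilon_1=|z-z'|$ (resp.\ $|\sigma-\sigma'|$) to get the Hölder estimates, followed by a covering argument. That part of your plan is fine.

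The genuine gap is in the preliminary step, and it is worth isolating because it also reveals why the paper's route is cleaner. You propose to first establish $\sigma_\pp(\bar H_\sigma)\cap J=\emptyset$ and only then to localize away the compact remainder in the Mourre estimate for $\bar H_\sigma$ near each $\eta\in J$. But the argument you offer for this step is either circular or incomplete: invoking Corollary \ref{cor:stability} is circular, since in the paper that corollary is derived \emph{from} Theorem \ref{thm:reduc-limit-absorpt3}; and the alternative virial-theorem route requires that any eigenstate of $\bar H_\sigma$ in $J$ lie in $\D(M^{1/2})\cap\D(A)$, which you assert follows ``by the usual regularization procedure, using Condition \ref{cond:perturbation2c} and the finite-rank structure of $P$.'' That is not established: Condition \ref{cond:perturbation2c} is a statement about eigenstates of the \emph{unperturbed} $H$, and propagating $\D(M^{1/2})\cap\D(A)$-regularity to eigenstates of the perturbed $\bar H_\sigma$ is precisely the hard problem treated (under stronger hypotheses, in a separate Condition \ref{cond:perturbation2} which Theorem \ref{thm:reduc-limit-absorpt3} does not assume) in the companion paper \cite{FMS}. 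In addition, even granting $\sigma_\pp(\bar H_\sigma)\cap J=\emptyset$, localizing $K_0$ away relative to $f_\eta(\bar H_\sigma)$ for each $\sigma$ produces a $\sigma$-dependent neighbourhood of $\eta$, with no obvious uniformity in $\sigma$; the covering argument in your Step 2 needs the neighbourhoods $J_\eta$ (and the threshold $\sigma_\eta$) to be fixed independently of $\sigma$.

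The fix, which is what the paper does, avoids the preliminary no-eigenvalue claim entirely. One localizes the compact remainder away \emph{only at $\sigma=0$}: since $\alpha_J>\sup J-\inf J$ one has $\eta\notin\sigma_\pp(\bar H)$ for $\bar H=H+\alpha_J P$ (the eigenvalue $\lambda$ of $H$ in $J$ is shifted to $\lambda+\alpha_J>\sup J$, and $\bar H$ agrees with $H$ on $\Ran\bar P$), so the Mourre estimate for $\bar H$ becomes strict, with $K'_0=0$, on a fixed neighbourhood $J_\eta$ of $\eta$. This strict estimate is then perturbed by the small $H$-bounded operator $\sigma V'$ using the elementary interpolation and Helffer--Sj\"ostrand bounds \eqref{eq:f(Hbar)-f(H)_2}, yielding a strict Mourre estimate \eqref{eq:MourreLAP3} for $\bar H_\sigma$ on the \emph{same} $J_\eta$, uniformly in $|\sigma|\le\sigma_\eta$. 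No knowledge of $\sigma_\pp(\bar H_\sigma)$ is needed; the LAP produced by the Gronwall argument a posteriori rules out eigenvalues of $\bar H_\sigma$ in $J$, which is exactly the content of Corollary \ref{cor:stability}.
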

\begin{remarks} \label{rk:reduc-limit-absorpt3}
\begin{enumerate}[\quad\normalfont 1)]
\item \label{item:reduc-limit-absorpt3_0}  In the case $\sigma_{ \mathrm{pp} }( H ) \cap J = \emptyset$, we have $P = 0$ and hence $\bar H_\sigma = H_\sigma$. Of course, Condition \ref{cond:perturbation2c} is not required in this case.
\item \label{item:reduc-limit-absorpt3_1} The assumption that $\alpha_J > \sup J - \inf J$ implies that $H + \alpha_J P$ does not have eigenvalues in $J$.
\item \label{item:reduc-limit-absorpt3_2} Equations \eqref{eq:13b}--\eqref{eq:13c} with $\sigma = \sigma' = 0$ yield that
\begin{equation}
\sup_{z\in S}\| \langle A \rangle^{-s} (H-z)^{-1} \bar P \langle A \rangle^{-s}\|<\infty,
\end{equation}
and that $z \mapsto \langle A \rangle^{-s} (H - z)^{-1} \bar P \langle A \rangle^{-s}$ is H{\"o}lder continuous of order $s-1/2$ on $S$. Hence we recover the Limiting Absorption Principles of Theorems \ref{thm:reduc-limit-absorpt1} and \ref{thm:reduc-limit-absorpt2}.
\end{enumerate}
\end{remarks}
\noindent \textbf{Proof of Theorem \ref{thm:reduc-limit-absorpt3}} \,
Considering the Mourre estimate, Condition \ref{cond:perturbation1} (\ref{item:g2}), for any $\eta \in J$, we denote by $J_\eta \subseteq I$ a compact neighbourhood of $\eta$ such that $f_\eta = 1$ on a neighbourhood of $J_\eta$.

\noindent \textbf{Step 1} \, Let us prove that, for any $\eta \in J$, there exists $\sigma_\eta > 0$ such that for all $| \sigma | \le \sigma_\eta$,
\begin{equation}\label{eq:LAP_Jeta}
\sup_{ z \in \mathbb{C}, \Re z \in J_\eta , 0 < | \Im z | \le 1  }\| \langle A \rangle^{-s} (\bar H_\sigma-z)^{-1} \langle A \rangle^{-s} \| < \infty,
\end{equation}
and that the function $( \sigma , z ) \mapsto \langle A \rangle^{-s} (\bar H_\sigma-z)^{-1} \langle A \rangle^{-s}$ is H{\"o}lder continuous of order $s-1/2$ in $\sigma$ and $z$ on $[ - \sigma_\eta , \sigma_\eta ] \times \{ z \in \mathbb{C}, \Re z \in J_\eta , 0 < | \Im z | \le 1 \}$.

Let $\bar H := H + \alpha_J P$. Condition \ref{cond:perturbation2b} implies that $[P,\i A]^0$ extends to a compact operator. Since $H P = \lambda P$ and $H \bar P = \bar H \bar P$, we have
\begin{align}
 f_\eta^\perp(H)^2 \langle H \rangle = & f_\eta^\perp(\bar H)^2 \langle \bar H \rangle + f_\eta^\perp( \lambda )^2 \langle \lambda \rangle P - f_\eta^\perp( \lambda + \alpha_J )^2 \langle \lambda + \alpha_J \rangle P. \label{eq:f(barH)-f(H)}
\end{align}
Using that the second and third terms in the right-hand-side of \eqref{eq:f(barH)-f(H)} are compact, the Mourre estimate \eqref{eq:Mourre_estimate} yields 
\begin{equation}
  \label{eq:Mourreii}
  M+\parb {R+ \alpha_J [P,\i A]^0}\geq c_0 I -C_0 f^{\bot}_\eta(\bar H)^2\langle \bar H \rangle - K'_0,
\end{equation} 
where $K'_0$ is compact. Since $\eta\notin \sigma_\pp(\bar H)$ (see Remark \ref{rk:reduc-limit-absorpt3} \ref{item:reduc-limit-absorpt3_1})), we can put $K'_0=0$ provided we choose the function $f_\eta$ supported in a sufficiently small interval containing $\eta$. We get
\begin{equation}
  \label{eq:Mourreiii_thmLAP3}
  M+\parb {R+ \alpha_J [P,\i A]^0}\geq 2^{-1} c_0 I -C_1f^{\bot}_\eta(\bar H)^2 \langle \bar H \rangle.
\end{equation}

The estimate (\ref{eq:Mourreiii_thmLAP3}) is stable under
perturbation  from the class $\mathcal V_1$. In particular (and more precisely) there exists $\sigma_\eta>0$ such  that if
$|\sigma| \leq \sigma_\eta$, then
\begin{align}
   &M+\parb {R + \sigma V' + \alpha_J  [P,\i A]^0} \geq 3^{-1} c_0 I -C_2f^{\bot}_\eta(\bar H_\sigma)^2 \langle \bar H_\sigma \rangle. \label{eq:Mourreiv_thmLAP3}
\end{align}
Indeed, since $V \in \mathcal{V}_1$, we have that
\begin{align}
\pm V' \leq C_3 \langle H \rangle + C_4 \leq  C_5 + C_6 f_\eta^\perp( \bar H ) \langle \bar H \rangle f_\eta^\perp( \bar H ),
\end{align}
and
\begin{align}
f_\eta^\perp( \bar H ) \langle \bar H \rangle f_\eta^\perp( \bar H ) & \leq C_7 f_\eta^\perp( \bar H ) \langle \bar H_\sigma \rangle f_\eta^\perp( \bar H ) \notag \\
& \leq C_7 f_\eta^\perp( \bar H_\sigma ) \langle \bar H_\sigma \rangle f_\eta^\perp( \bar H_\sigma ) + C_8 |\sigma|. \label{eq:f(Hbar)-f(H)_2}
\end{align}
The first  inequality in \eqref{eq:f(Hbar)-f(H)_2} follows from elementary
interpolation while 
the second inequality follows, for instance, from the Helffer-Sj{\"o}strand functional calculus.

 We set for shortness $H'_\sigma := H' + \sigma V'$, $H''_\sigma
:= H'' + \sigma V''$, $P' :=  [P,\i A]^0$ and $P'' :=  [P',\i
A]^0$. Remark that Conditions \ref{cond:perturbation1}, Condition
\ref{cond:perturbation2c} and the assumption $V \in \mathcal V_2$
imply that 
\begin{equation*}
 H''_\sigma ,\,P'',\,H''_\sigma + \alpha_J P'' \in \mathcal B( \mathcal G ; \mathcal G^* ).
\end{equation*}

Note that equation \eqref{eq:Mourreiv_thmLAP3} can be written 
\begin{equation}\label{eq:MourreLAP3}
\begin{split}
 H'_\sigma + \alpha_J P' \geq 3^{-1} c_0 I - C_2 f_\eta^\perp( \bar H_\sigma ) ^2 \inp{ \bar H_\sigma}.
\end{split}
\end{equation} We emphasize that the constant $C_2$ is independent of
$z$ and $\sigma$.

To prove \eqref{eq:LAP_Jeta}, we can proceed as in the proof of Theorem \ref{thm:reduc-limit-absorpt2}, using \eqref{eq:MourreLAP3} instead of \eqref{eq:Mourrei00}, and replacing $\bar H_\epsilon$ and $F_\epsilon(z)$ in \eqref{eq:14} and \eqref{eq:17} respectively by
\begin{equation}
\bar H_{\sigma,\epsilon} := \bar H_\sigma  - \i \epsilon ( H'_\sigma + \alpha_J P' ),
\end{equation}
and
\begin{equation}
F_{\sigma,\epsilon}(z) := \langle \rho_\epsilon u , \bar R_{\sigma,\epsilon}(z) \rho_\epsilon u \rangle.
\end{equation}
Here we have set 
\begin{equation}
\bar R_{\sigma,\epsilon}(z) := (\bar H_{\sigma,\epsilon} - z)^{-1}
\end{equation}
and, as before, $\rho_\epsilon = \langle \epsilon A \rangle^{s-1} \langle A \rangle^{-s}$. Notice that, by \cite[Theorem 2.25 and Lemma 2.26]{GGM1}, $\bar H_{\sigma,\epsilon}$ is closed, densely defined and satisfies $\bar H_{\sigma,\epsilon}^* = \bar H_{\sigma,-\epsilon}$. Moreover, following \cite[Subsection 3.4]{GGM1}, one can indeed verify that there exists $\epsilon_0$ such that for all $0 < | \epsilon | \le \epsilon_0$ and $z = \eta' + \i \mu$ with $\eta' \in J_\eta$ and $\epsilon \mu > 0$, $\bar H_{\sigma,\epsilon} - z$ is invertible with bounded inverse $\bar R_{\sigma,\epsilon}(z)$ satisfying properties similar to \eqref{eq:||Repsilon||_1}--\eqref{eq:s-lim_Repsilon}. We can compute:
\begin{align}
 \frac{\rm d}{{\rm d} \epsilon} F_{\sigma,\epsilon}(z) =& \langle (\frac{\rm d}{{\rm d} \epsilon} \rho_\epsilon) u , \bar R_{\sigma,\epsilon}(z) \rho_\epsilon u \rangle + \langle \rho_\epsilon u , \bar R_{\sigma,\epsilon}(z) (\frac{\rm d}{{\rm d} \epsilon} \rho_\epsilon) u \rangle \notag \\
&+ \langle \bar{R}_{\sigma,\epsilon}^*(z) \rho_\epsilon u , A \rho_\epsilon u \rangle - \langle A \rho_\epsilon u , \bar{R}_{\sigma,\epsilon}(z) \rho_\epsilon u \rangle \phantom{ \frac{ d }{ d \epsilon } } \notag \\ 
&- \epsilon \langle \bar{R}_{\sigma,\epsilon}^*(z) \rho_\epsilon u , \left ( H''_\sigma + \alpha_J P'' \right ) \bar{R}_{\sigma,\epsilon}(z) \rho_\epsilon u \rangle. \phantom{ \frac{ d }{ d \epsilon } }
\end{align}

 We obtain as in \eqref{eq:dF_epsilon_1} that
\begin{align}
\left | \frac{\rm d}{{\rm d} \epsilon} F_{\sigma,\epsilon}(z) \right | & \le C_9 | \epsilon |^{ s - \frac{3}{2} } \| u \|^2. \label{eq:dF_sigma,epsilon_1}
\end{align}
Estimate \eqref{eq:13b} (with $J_\lambda$ in place of $J$) and the H{\"o}lder continuity in $z$ then follow as in the proof of Theorem \ref{thm:reduc-limit-absorpt2}.

It remains to prove the H{\"o}lder continuity in $\sigma$. We follow again the proof of Theorem \ref{thm:reduc-limit-absorpt2}. For $0 < \epsilon_1 < \epsilon_0$, we have
\begin{align}
F_{\sigma,0}(z) - F_{\sigma',0}(z) = & - \int_0^{\epsilon_1} \frac{ \d }{ \d \epsilon } ( F_{\sigma,\epsilon}(z) - F_{\sigma',\epsilon}(z) ) \d \epsilon  \notag \\
& - \int_{\epsilon_1}^{\epsilon_0} \frac{ \d }{ \d \epsilon } ( F_{\sigma,\epsilon}(z) - F_{\sigma',\epsilon}(z) ) \d \epsilon + ( F_{\sigma,\epsilon_0}(z) - F_{\sigma',\epsilon_0}(z) ). \label{eq:Holder_sigma_1}
\end{align}
The first term in the right-hand-side of \eqref{eq:Holder_sigma_1} is estimated thanks to \eqref{eq:dF_sigma,epsilon_1}, which gives
\begin{equation}\label{eq:Holder_sigma_2}
\Big | \int_0^{\epsilon_1} \frac{ \d }{ \d \epsilon } ( F_{\sigma,\epsilon}(z) - F_{\sigma',\epsilon}(z) ) \d \epsilon \Big | \le C_{10} \epsilon_1^{ s - \frac{1}{2} } \| u \|^2.
\end{equation}
As for the second and third terms on the right-hand-side of \eqref{eq:Holder_sigma_1}, we use that, by the second resolvent equation,
\begin{equation*}
\bar{R}_{\sigma,\epsilon}(z) - \bar{R}_{\sigma',\epsilon}(z) = - (
\sigma - \sigma' ) \bar{R}_{\sigma,\epsilon}(z) ( V -\i \epsilon V' ) \bar{R}_{\sigma',\epsilon}(z).
\end{equation*}
Since $V$ and $V'$ are $H$-bounded by assumption, this implies in the same way as in the proof of \eqref{eq:Holder_3} and \eqref{eq:Holder_4}  that
\begin{equation}
\Big | \int_{\epsilon_1}^{\epsilon_0} \frac{ \d }{ \d \epsilon } ( F_{\sigma,\epsilon}(z) - F_{\sigma',\epsilon}(z) ) \d \epsilon \Big | \le C_{11} \epsilon_1^{ s - \frac{3}{2} } | \sigma - \sigma' | \| u \|^2 ,\label{eq:Holder_sigma_3}
\end{equation}
and
\begin{equation}\label{eq:Holder_sigma_4}
\big | ( F_{\sigma,\epsilon_0}(z) - F_{\sigma',\epsilon_0}(z) ) \big | \le C(\epsilon_0) | \sigma - \sigma' | \| u \|^2.
\end{equation}
 The H{\"o}lder continuity in $\sigma$ follows from \eqref{eq:Holder_sigma_1}--\eqref{eq:Holder_sigma_4} by choosing $\epsilon_1 = | \sigma - \sigma' |$.
 
\noindent \textbf{Step 2} \, Since $J$ is compact, it follows from Step 1 and a covering argument that there exist $\eta_1, \dots , \eta_l$ (with $l < \infty$) such that $J \subseteq J_{\eta_1} \cup \dots \cup J_{\eta_l}$. Taking $\sigma_0 = \min( \sigma_{\eta_1}, \dots , \sigma_{\eta_l} )$, Equation \eqref{eq:13b} and the H{\"o}lder continuity in $\sigma$ follow. The H{\"o}lder continuity in $z$ is a straightforward consequence of the fact that
\begin{equation}
\sum_{n=1}^l ( a_n )^{s-\frac{1}{2}} \le l^{ \frac{3}{2} - s } \big ( \sum_{n=1}^l a_n \big )^{s - \frac{1}{2}},
\end{equation}
for any sequence of positive numbers $(a_n)_{n=1,\dots,l}$, and $1/2 < s \le 1$.
\qed\\
%
%

\section{Upper semicontinuity of point spectrum} \label{Upper
  semicontinuity of point spectrum}
  
In this section we study upper semicontinuity of the point spectrum of $H$. The main result is Theorem \ref{thm:upper} proven below.
  
Let us begin with stating a consequence of Theorem \ref{thm:reduc-limit-absorpt3}, which shows that if the unperturbed Hamiltonian do not have eigenvalues in a compact interval, the same holds for the perturbed Hamiltonian (provided that the perturbation $V$ belongs to $\mathcal{V}_2$).
\begin{cor}\label{cor:stability}
Assume that Conditions \ref{cond:perturbation1} hold. Let $J \subseteq I$ be a compact interval such that $\sigma_{ \mathrm{pp} }( H ) \cap J = \emptyset$. Let $V \in \mathcal{V}_2$. There exists $\sigma_0 > 0$ such that for any $| \sigma | \le \sigma_0$,
\begin{equation}
\sigma_{ \mathrm{pp} }( H + \sigma V ) \cap J = \emptyset.
\end{equation}
\end{cor}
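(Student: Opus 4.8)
The plan is to read off the corollary from Theorem~\ref{thm:reduc-limit-absorpt3} together with the standard passage from a limiting absorption principle to absence of point spectrum. First I would note that, since $\sigma_\pp(H)\cap J=\emptyset$, the eigenprojection $P=E_H(\{\lambda\})$ occurring in Theorem~\ref{thm:reduc-limit-absorpt3} is $0$, so that $\bar H_\sigma=H_\sigma=H+\sigma V$; by Remark~\ref{rk:reduc-limit-absorpt3}~\ref{item:reduc-limit-absorpt3_0}), Condition~\ref{cond:perturbation2c} is not needed in this situation, while the remaining hypothesis $V\in\mathcal V_2$ is exactly the one assumed in the corollary. Fixing any $s$ with $1/2<s\le1$, Theorem~\ref{thm:reduc-limit-absorpt3} then provides $\sigma_0>0$ such that, for each $\sigma$ with $|\sigma|\le\sigma_0$, there is $C_\sigma<\infty$ with
\begin{equation}\label{eq:cor-stability-lap}
\sup_{z\in S}\big\|\langle A\rangle^{-s}(H_\sigma-z)^{-1}\langle A\rangle^{-s}\big\|\le C_\sigma,\qquad S=\{z\in\mathbb C:\Re z\in J,\ 0<|\Im z|\le1\}.
\end{equation}

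Next I would fix $\sigma$ with $|\sigma|\le\sigma_0$ and an arbitrary $\mu\in J$, and show $E_{H_\sigma}(\{\mu\})=0$. By the spectral theorem $-\i\epsilon\,(H_\sigma-\mu-\i\epsilon)^{-1}\to E_{H_\sigma}(\{\mu\})$ strongly as $\epsilon\to0^+$, so, $\langle A\rangle^{-s}$ being bounded, $-\i\epsilon\,\langle A\rangle^{-s}(H_\sigma-\mu-\i\epsilon)^{-1}\langle A\rangle^{-s}h\to\langle A\rangle^{-s}E_{H_\sigma}(\{\mu\})\langle A\rangle^{-s}h$ for every $h\in\mathcal H$. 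On the other hand, for $0<\epsilon\le1$ the point $\mu+\i\epsilon$ lies in $S$, so by \eqref{eq:cor-stability-lap} the left-hand side has norm at most $\epsilon\,C_\sigma\|h\|\to0$. Hence $\langle A\rangle^{-s}E_{H_\sigma}(\{\mu\})\langle A\rangle^{-s}=0$. Since $\langle A\rangle=(1+A^*A)^{1/2}\ge1$ is self-adjoint, the operator $\langle A\rangle^{-s}$ is bounded, self-adjoint and injective with dense range $\mathcal D(\langle A\rangle^{s})$; therefore $E_{H_\sigma}(\{\mu\})=0$, i.e.\ $\mu\notin\sigma_\pp(H_\sigma)$. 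As $\mu\in J$ was arbitrary, $\sigma_\pp(H+\sigma V)\cap J=\emptyset$ whenever $|\sigma|\le\sigma_0$, which is the assertion.

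I do not anticipate any real obstacle here: the whole content sits in Theorem~\ref{thm:reduc-limit-absorpt3}, and of that theorem only the uniform bound \eqref{eq:cor-stability-lap} is used (neither the H\"older estimate \eqref{eq:13c} nor uniformity in $\sigma$ is needed). The only points that call for a little care are the verification that $\sigma_\pp(H)\cap J=\emptyset$ forces $P=0$ (so that Condition~\ref{cond:perturbation2c}, otherwise required by Theorem~\ref{thm:reduc-limit-absorpt3}, may be dropped), the observation that $S$ refers to the \emph{closed} interval $J$ so that the endpoints of $J$ are covered as well, and the elementary injectivity and dense-range properties of $\langle A\rangle^{-s}$. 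As an alternative to the spectral-theorem step, one could argue by contradiction from a putative eigenvector $\psi\ne0$ of $H_\sigma$ at $\mu\in J$, using the positivity bound $\Im\langle g,(H_\sigma-\mu-\i\epsilon)^{-1}g\rangle\ge\epsilon^{-1}\|\psi\|^{-2}|\langle g,\psi\rangle|^{2}$ with $g=\langle A\rangle^{-s}h$ together with \eqref{eq:cor-stability-lap}, and letting $\epsilon\to0^+$ to force $\langle A\rangle^{-s}\psi=0$ and hence $\psi=0$.
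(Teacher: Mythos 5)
Your argument is correct and matches the paper's intended route: the paper states the corollary precisely as a consequence of Theorem \ref{thm:reduc-limit-absorpt3} in the case $P=0$ (cf.\ Remark \ref{rk:reduc-limit-absorpt3} \ref{item:reduc-limit-absorpt3_0})), and you supply the standard passage from the uniform limiting absorption bound to the absence of point spectrum, including the correct observations about the endpoints of $J$ and the injectivity and dense range of $\langle A\rangle^{-s}$.
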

The statement of Corollary \ref{cor:stability} remains true under the
weaker assumption that $V \in \mathcal{V}_1$, provided that a priori
eigenstates of $H+\sigma V$  belong to $\D( M^{1/2} )$. This is a consequence of the Mourre estimate established in the proof of Theorem \ref{thm:reduc-limit-absorpt3} (see \eqref{eq:Mourreiv_thmLAP3}), together with the virial property that $\langle \psi , ( H' + \sigma V' ) \psi \rangle = 0$ which holds for any eigenstate $\psi$ of $H + \sigma V$ satisfying $\psi \in \D( M^{1/2} )$. Hence we have the following:
\begin{cor}\label{cor:stability2}
Assume that Conditions \ref{cond:perturbation1} hold. Let $J \subseteq I$ be a compact interval such that $\sigma_{ \mathrm{pp} }( H ) \cap J = \emptyset$. Let $V \in \mathcal{V}_1$. There exists $\sigma_0 > 0$ such that for any $| \sigma | \le \sigma_0$, the following holds: Suppose that any eigenstate $\psi$ of $H+\sigma V$ associated to an eigenvalue $\lambda \in J$ satisfies $\psi \in \D( M^{1/2} )$, then
\begin{equation}
\sigma_{ \mathrm{pp} }( H + \sigma V ) \cap J = \emptyset.
\end{equation}
\end{cor}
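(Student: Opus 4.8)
The plan is to extract from the proof of Theorem~\ref{thm:reduc-limit-absorpt3} the Mourre estimate \eqref{eq:Mourreiv_thmLAP3} --- which, crucially, is derived using only $V\in\mathcal V_1$, the passage to $\mathcal V_2$ being needed exclusively for the differential-inequality part of that proof, not for this intermediate estimate --- and to combine it with the virial identity applied to a putative eigenstate lying in $\D(M^{1/2})$. Since $\sigma_\pp(H)\cap J=\emptyset$ we have $P=0$, hence $\bar H_\sigma=H_\sigma=H+\sigma V$ and $[P,\i A]^0=0$. Thus, exactly as in Step~1 of the proof of Theorem~\ref{thm:reduc-limit-absorpt3}, for every $\eta\in J$ there are a compact neighbourhood $J_\eta\subseteq I$ of $\eta$, a cut-off $f_\eta\in\mathrm C_0^\infty(\R)$ equal to $1$ on a neighbourhood of $J_\eta$, and $\sigma_\eta>0$ such that, as quadratic forms on $\mathcal D$ (and hence on $\mathcal G$, since $H'+\sigma V'\in\mathcal B(\mathcal G;\mathcal G^*)$ and $\pm V'\le C\langle H\rangle+C$),
\begin{equation*}
H'+\sigma V'\ \ge\ 3^{-1}c_0\, I-C_2\, f_\eta^\perp(H_\sigma)^2\langle H_\sigma\rangle ,\qquad |\sigma|\le\sigma_\eta ,
\end{equation*}
with $c_0,C_2$ independent of $\sigma$. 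By compactness of $J$ I would choose $\eta_1,\dots,\eta_l$ with $J\subseteq\bigcup_{j=1}^l J_{\eta_j}$ and set $\sigma_0:=\min_j\sigma_{\eta_j}$.

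Next, fix $|\sigma|\le\sigma_0$, assume the stated hypothesis holds for this $\sigma$, and argue by contradiction: suppose $\lambda\in J$ is an eigenvalue of $H_\sigma$ with eigenstate $\psi\neq 0$. By hypothesis $\psi\in\D(M^{1/2})$, and since $\psi\in\D(H_\sigma)=\D(H)\subseteq\D(|H|^{1/2})$ we get $\psi\in\mathcal G=\D(|H|^{1/2})\cap\D(M^{1/2})$, so the displayed form inequality may legitimately be evaluated on $\psi$. Choosing $j$ with $\lambda\in J_{\eta_j}$, the fact that $f_{\eta_j}=1$ near $J_{\eta_j}$ together with $H_\sigma\psi=\lambda\psi$ gives $f_{\eta_j}^\perp(H_\sigma)\psi=f_{\eta_j}^\perp(\lambda)\psi=0$, so the right-hand term drops out and $\langle\psi,(H'+\sigma V')\psi\rangle\ge 3^{-1}c_0\|\psi\|^2$. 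On the other hand, since $H_\sigma=H+\sigma V\in\mathrm C^1(A_\mathcal G;A_{\mathcal G^*})$ with $[H_\sigma,\i A]^0=H'+\sigma V'$, the virial identity (the analogue of \eqref{eq:5b}, obtained from \eqref{eq:vir} together with the density of $\D$ in $\D(H)$ and valid precisely because $\psi\in\D(M^{1/2})$) yields $\langle\psi,(H'+\sigma V')\psi\rangle=0$. Hence $0\ge 3^{-1}c_0\|\psi\|^2$, forcing $\psi=0$, a contradiction; therefore $\sigma_\pp(H+\sigma V)\cap J=\emptyset$.

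The main obstacle is not the algebra but the two ``form-domain'' points: that the Mourre estimate, a priori a form inequality on $\D$, extends to all of $\mathcal G$, and that the virial identity applies to $\psi$. Both hold exactly because the extra hypothesis places $\psi$ in the intersection form domain $\mathcal G$, on which $H'+\sigma V'$ acts as a bounded operator $\mathcal G\to\mathcal G^*$; without $\psi\in\D(M^{1/2})$ the quantity $\langle\psi,(M+R+\sigma V')\psi\rangle$ need not even be defined, which is why the hypothesis cannot simply be dropped (contrast Corollary~\ref{cor:stability}, where $V\in\mathcal V_2$ lets one run the full reduced limiting absorption principle and so deduce $\D(M^{1/2})$-membership, indeed absence of eigenvalues, automatically). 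Beyond this, one only needs to keep track that the constants in \eqref{eq:Mourreiv_thmLAP3} are uniform in $\sigma$, which is already recorded in the proof of Theorem~\ref{thm:reduc-limit-absorpt3}.
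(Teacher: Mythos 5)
Your proposal is correct and follows essentially the same route as the paper, which justifies Corollary \ref{cor:stability2} in the paragraph preceding it precisely by combining the Mourre estimate \eqref{eq:Mourreiv_thmLAP3} (valid for $V\in\mathcal V_1$, with $P=0$ here) with the virial identity $\langle\psi,(H'+\sigma V')\psi\rangle=0$ for eigenstates $\psi\in\D(M^{1/2})$. Your added care about extending the form inequality from $\D$ to $\mathcal G$, the vanishing of the cut-off term, and the covering argument for uniformity in $\sigma$ simply fills in details the paper leaves implicit.
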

We now turn to the proof of Theorem \ref{thm:upper}. Here we need Condition \ref{cond:perturbation2} and that $V \in \mathcal{B}_{1,\gamma}$ in addition to Conditions \ref{cond:perturbation1}. \\

\noindent{\bf Proof of Theorem \ref{thm:upper}} \,
Let $\lambda \in I$ and $J \subseteq I$ as in the statement of the theorem. 

\noindent \textbf{Step 1} \, Let us prove that, for any $\eta \in J$, there exist $\beta_\eta > 0$ and $\gamma_\eta > 0$ such that, for $\| V \|_1 \le \gamma_\eta$, the total multiplicity of the eigenvalues of $H + V$ in $(\eta - \beta_\eta , \eta + \beta_\eta)$ is at most $\dim \mathrm{Ker} ( H - \eta )$. 

If $\eta$ is an eigenvalue, we proceed as in \cite[Section 2]{AHS} introducing the (finite rank) eigenprojection, say $P$,  corresponding to this eigenvalue and the auxiliary operator $\bar H=H+\alpha_J P$. Here $\alpha_J > \sup J - \inf J$ as in Theorem \ref{thm:reduc-limit-absorpt3}. Then in the same way as in \eqref{eq:Mourreiv_thmLAP3}, for $\| V \|_1 \le \gamma_\eta$ with $\gamma_\eta > 0$ small enough, we have that
\begin{align}
   &M+\parb {R+ \alpha_J [P , \i A ]^0 + [V , \i A ]^0 } \geq 3^{-1} c_0 I - C_1 f^{\bot}_\eta(\bar H+V)^2 \langle \bar H + V \rangle, \label{eq:Mourreiv}
\end{align} 
where $f_\eta \in \mathrm{C}_0^\infty( \mathbb{R} )$ is such that $0 \le f_\eta \le 1$ and $f_\eta = 1$ in a neighbourhood of $\eta$. Let us in the following agree on the convention that $P=0$ and $\bar {H}=H$ if
$\eta\notin \sigma_\pp( H)$. Then (\ref{eq:Mourreiv}) holds
no matter whether $\eta$ is an eigenvalue or not (provided $\|V\|_1$ is sufficiently small and that the support of $f_\eta$ is chosen sufficiently close to $\eta$). 

Now, it suffices to follow the proof of \cite[Theorem 2.5]{AHS},
combining Condition \ref{cond:perturbation2} and
(\ref{eq:Mourreiv}). More precisely, let $m$ be the multiplicity of
$\eta$ and let us assume that $H+V$ has eigenvalues $(\eta_j)$,
$j=1, \dots, m_1$, of total multiplicity $m_1>m$, located in $(\eta-\beta_\eta,\eta+\beta_\eta) \subseteq I$. Let $(\psi_j)$, $j=1, \dots, m_1$,
be an orthonormal set of eigenvectors, $\psi_j$ being associated with
$\eta_j$.  Consider a linear combination $\psi = \sum_j a_j \psi_j$
such that $\| \psi \|=1$ and $P \psi = 0$. Since $V \in \mathcal{B}_{1,\gamma}$, it follows from Condition
\ref{cond:perturbation2} that $\psi \in \D \cap D(A)$, whence
\eqref{eq:Mourreiv} together with Remark \ref{rk:perturbation1}
\ref{item:h2}) yields
\begin{align}
3^{-1} c_0 & \le \left \langle \psi , ( M + R + \alpha_J [P , \i A]^0 + [ V , \i A ]^0 ) \psi \right \rangle + C_1 \Big \| f_\eta^\perp(\bar H + V) \langle \bar H + V \rangle^{1/2} \psi \Big \|^2 \phantom{\big \|^2} \notag \\
& = \i \left \langle ( \bar H + V - \eta ) \psi , A \psi \right \rangle - \i \left \langle A \psi , ( \bar H + V - \eta ) \psi \right \rangle + C_1 \Big \| f_\eta^\perp(\bar H + V) \langle \bar H + V \rangle^{1/2} \psi \Big \|^2 \notag \\
&\le \beta_\eta \left ( 2 \| A \psi \| + C_2 \beta_\eta \right ). \phantom{\big \|^2}
\end{align}
In the second inequality, we used that 
\begin{equation}
\left \| ( \bar H + V - \eta) \psi \right \| = \| \sum_j a_j ( \eta_j - \eta ) \psi_j \| \le \beta_\eta,
\end{equation}
and hence also that
that $\left \| f_\eta^\perp(\bar H + V) \langle \bar H + V \rangle^{1/2} \psi \right \| \le C_3 \beta_\eta$ by the Spectral Theorem, where the constant $C_3$ depends on $\mathrm{supp} ( f_\eta )$. By
Condition \ref{cond:perturbation2}, we obtain a contradiction provided that $\beta_\eta$ is chosen sufficiently small.

\noindent \textbf{Step 2} \, Let us prove that the total multiplicity of the eigenvalues of $H+V$ in $J$ is at most $\dim \mathrm{Ker} (H-\lambda )$.

It follows from Step 1 that, for any $\eta \in [\inf J , \lambda - \beta_\lambda ] \cup [ \lambda + \beta_\lambda , \sup J ]$, there exist $\beta_\eta > 0$ and $\gamma_\eta > 0$ such that, for $\| V \|_1 \le \gamma_\eta$, $H+V$ does not have eigenvalues in $(\eta - \beta_\eta , \eta + \beta_\eta)$. 
Since $[\inf J , \lambda - \beta_\lambda ] \cup [ \lambda + \beta_\lambda , \sup J ]$ is compact, it follows from a covering argument that there exist $\eta_1,\dots,\eta_l$ such that 
\begin{equation}
[\inf J , \lambda - \beta_\lambda ] \cup [ \lambda + \beta_\lambda , \sup J ] \subset \bigcup_{j=1}^l ( \eta_j - \beta_{\eta_j} , \eta_j + \beta_{\eta_j} ).
\end{equation}
Hence, for $\| V \|_1 \le \min( \gamma_{ \eta_1 } , \dots , \gamma_{ \eta_l } )$, $H+V$ does not have eigenvalues in $[\inf J , \lambda - \beta_\lambda ] \cup [ \lambda + \beta_\lambda , \sup J ]$. Applying Step 1 again with $\eta = \lambda$, this concludes the proof.
\qed\\

The next proposition is a consequence of Theorem \ref{thm:upper}. It will be used in Section \ref{Second order perturbation theory}.
\begin{prop}
  \label{prop:upper-semic-point} Assume that Conditions
  \ref{cond:perturbation1} and  Condition \ref{cond:perturbation2} hold. Suppose $\lambda\in \sigma_\pp( H)
  $ and that $J \subseteq I$ is   a compact interval
   such that $\sigma_\pp( H)\cap J=
   \{\lambda\}$.  Let $P = E_H( \{ \lambda \} )$, $\bar P=I-P$ and
   $P_{V,J}=E_{(H+V)_\pp}(J)$ for any $V\in \mathcal V_1$ (with
   sufficiently small norm). Then for any sequence $V^{(n)} \in \mathcal B_{1,\gamma}$ such that $\|V^{(n)}\|_1 \to 0$, 
   \begin{equation}
     \label{eq:1}
     \|\bar PP_{V^{(n)},J}\|\to 0.
   \end{equation}
One of the following two alternatives $i)$ or $ii)$  holds:
 \begin{enumerate}[\normalfont i)]
  \item \label{item:1s}There exists $0 < \gamma' \le \gamma$  such that if $V\in\mathcal B_{1,\gamma}$ and 
  $0\neq \|V\|_1 \leq \gamma'$, then the  operator $H+V$ does not have
  eigenvalues in $J$.
\item \label{item:2s} There exists a sequence of operators
  $V_n\in \mathcal B_{1,\gamma}$
  with $0\neq \|V_n\|_1\to 0$ and a sequence of normalized
  eigenstates, $(H+V_n-\lambda_n)\psi_n=0$, with eigenvalues
  $\lambda_n\to \lambda$, such that for some 
  $\psi_\infty \in \Ran (P)$ we have $\|\psi_n-\psi_\infty\|\to 0$. 
\end{enumerate}
\end{prop}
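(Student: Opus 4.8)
The plan is to prove \eqref{eq:1} first and then read off the dichotomy almost for free. Granting \eqref{eq:1}: if alternative \ref{item:1s}) fails, then for each $n$ there are $V_n\in\mathcal B_{1,\gamma}$ with $0\neq\|V_n\|_1\leq 1/n$ and a normalized eigenstate $(H+V_n-\lambda_n)\psi_n=0$, $\lambda_n\in J$. By \eqref{eq:1}, $\|\bar P\psi_n\|\leq\|\bar PP_{V_n,J}\|\to0$, so $\|P\psi_n\|\to1$; since $\Ran P$ is finite dimensional (cf. Remark~\ref{rk:perturbation1}~\ref{item:h2})), a subsequence of $P\psi_n/\|P\psi_n\|$ converges to a unit vector $\psi_\infty\in\Ran P$, whence $\psi_n=P\psi_n+\bar P\psi_n\to\psi_\infty$; and $\lambda_n\to\lambda$ by Step~1 below. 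Relabelling the subsequence gives alternative \ref{item:2s}).

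To prove \eqref{eq:1} I would prepare two facts. \textbf{Step 1 (the perturbed point spectrum concentrates at $\lambda$):} for every $\epsilon>0$ there is $\gamma'(\epsilon)>0$ such that $V\in\mathcal B_{1,\gamma}$, $\|V\|_1\leq\gamma'(\epsilon)$ force $\sigma_\pp(H+V)\cap J\subseteq(\lambda-\epsilon,\lambda+\epsilon)$. This is the covering argument of Step~2 in the proof of Theorem~\ref{thm:upper}: every $\eta\in J$ with $|\eta-\lambda|\geq\epsilon$ is not an eigenvalue of $H$, so the case $\eta\notin\sigma_\pp(H)$ of Step~1 of that proof (which uses only Condition~\ref{cond:perturbation2} and the stable Mourre estimate \eqref{eq:Mourreiv}) provides $\beta_\eta,\gamma_\eta>0$ with $\sigma_\pp(H+V)\cap(\eta-\beta_\eta,\eta+\beta_\eta)=\emptyset$ whenever $V\in\mathcal B_{1,\gamma}$, $\|V\|_1\leq\gamma_\eta$; one covers the compact set $\{\eta\in J:|\eta-\lambda|\geq\epsilon\}$ and takes the least $\gamma_\eta$. \textbf{Step 2 (uniform bounds on $\Ran P_{V,J}$):} by Theorem~\ref{thm:upper}, $m_1:=\dim\Ran P_{V,J}\leq m:=\dim\Ker(H-\lambda)$ once $\|V\|_1$ is small; by Condition~\ref{cond:perturbation2} every eigenstate $\psi$ of $H+V$ with eigenvalue in $J$ lies in $\D(A)\cap\D(M)$ with $\|A\psi\|\leq C\|\psi\|$, so expanding an arbitrary $\psi\in\Ran P_{V,J}$ in an orthonormal eigenbasis and using Cauchy--Schwarz with $m_1\leq m$ gives $\psi\in\D(A)$ and $\|A\psi\|\leq C\sqrt m\,\|\psi\|$; also $\|(H-\i)\psi\|\leq C''\|\psi\|$ with $C''$ independent of $V$, since $\|(H+V)\psi\|\leq(\sup_{\mu\in J}|\mu|)\|\psi\|$ and $\|V\psi\|\leq\|V\|_1\|(H-\i)\psi\|$.

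Now I would argue by contradiction. If \eqref{eq:1} fails, after passing to a subsequence $\|\bar PP_{V^{(n)},J}\|\geq c>0$; pick normalized $\psi_n\in\Ran P_{V^{(n)},J}$ with $\|\bar P\psi_n\|\geq c$ and set $\phi_n:=\bar P\psi_n$. Since $\Ran P\subseteq\D(A)\cap\D(M)$, $[P,\i A]^0$ is bounded (Condition~\ref{cond:perturbation2b}) and $\bar P$ commutes with $H$, one has $\phi_n\in\D(A)\cap\D(H)$, $\|\phi_n\|\geq c$ and $\|A\phi_n\|\leq C'$ uniformly in $n$. Writing $\psi_n=\sum_j a^{(n)}_j\psi^{(n)}_j$ in an orthonormal eigenbasis of $H+V^{(n)}$ with eigenvalues $\lambda^{(n)}_j\in J$,
\begin{equation*}
(H-\lambda)\phi_n=\bar P\sum_j a^{(n)}_j(\lambda^{(n)}_j-\lambda)\psi^{(n)}_j-\bar PV^{(n)}\psi_n,
\end{equation*}
hence $\|(H-\lambda)\phi_n\|\leq\beta_n+\|V^{(n)}\|_1\|(H-\i)\psi_n\|\to0$ by Steps~1 and 2, where $\beta_n\to0$ is the concentration radius from Step~1. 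By Theorem~\ref{thm:reduc-limit-absorpt2} and Stone's formula, $\|E_H(\Delta)\bar P\langle A\rangle^{-s}w\|^2\leq\pi^{-1}C_s|\Delta|\,\|w\|^2$ for every $w\in\mathcal H$ and every interval $\Delta\subseteq J$ of length $|\Delta|$, where $C_s$ is the supremum in \eqref{eq:13}. Applying this with $w=\langle A\rangle^s\phi_n$ (admissible since $\phi_n\in\D(A)$ and $\|w\|\leq\|\phi_n\|+\|A\phi_n\|\leq C$), with $\langle A\rangle^{-s}w=\phi_n=\bar P\phi_n$, and with $\Delta=J_\epsilon:=(\lambda-\epsilon,\lambda+\epsilon)$ — which lies in $J$ for small $\epsilon$, assuming as in the proof of Theorem~\ref{thm:reduc-limit-absorpt2} that $\lambda$ is interior to $J$ — gives $\|E_H(J_\epsilon)\phi_n\|^2\leq 2\pi^{-1}C_sC^2\epsilon$; combined with $\|E_H(\R\setminus J_\epsilon)\phi_n\|\leq\epsilon^{-1}\|(H-\lambda)\phi_n\|$ from the Spectral Theorem this yields $\|\phi_n\|\leq(2\pi^{-1}C_sC^2\epsilon)^{1/2}+\epsilon^{-1}\|(H-\lambda)\phi_n\|$. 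Letting $n\to\infty$ and then $\epsilon\to0$ forces $\phi_n\to0$, contradicting $\|\phi_n\|\geq c$, and proves \eqref{eq:1}. I expect the main obstacle to be precisely this last quantitative passage — converting the reduced limiting absorption principle for $H\bar P$ (Theorem~\ref{thm:reduc-limit-absorpt2}) into the vanishing $\|E_H(J_\epsilon)\bar P\phi_n\|\to0$, which only has teeth because Step~1 pins the perturbed eigenvalues to $\lambda$ — while noting that all the regularity used to form $\langle A\rangle^s\phi_n$ and to control $(H-\lambda)\phi_n$ uniformly in $V$ is exactly what Condition~\ref{cond:perturbation2} (via Theorem~\ref{thm:upper}) supplies.
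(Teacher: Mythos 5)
Your proposal is correct, and the reduction of the dichotomy to \eqref{eq:1} is exactly the paper's. The genuine difference lies in how \eqref{eq:1} is finished. Both arguments arrive at the same intermediate situation: a sequence $\phi_n=\bar P\psi_n$ with $\|\phi_n\|\ge c$, $\|A\phi_n\|$ uniformly bounded (from Condition \ref{cond:perturbation2} and $\Ran(P)\subseteq\D(A)$) and $\|(H-\lambda)\phi_n\|\to0$ (from the concentration of the perturbed eigenvalues at $\lambda$ plus $\|V^{(n)}\|_1\to0$). The paper then concludes by a virial-type argument: it extracts a weak limit via Banach--Alaoglu, notes that the limit lies in $\Ran(P)$ so that $\phi_n\rightharpoonup0$, and evaluates the Mourre estimate \eqref{eq:Mourre_estimate} on $\phi_n$ using Remark \ref{rk:perturbation1} \ref{item:h2}) for the commutator term and the compactness of $K_0$ to dispose of the error term. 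You instead invoke the reduced limiting absorption principle of Theorem \ref{thm:reduc-limit-absorpt2} together with Stone's formula to get $\|E_H(\Delta)\bar P\langle A\rangle^{-s}w\|^2\le\pi^{-1}C_s|\Delta|\,\|w\|^2$, apply it with $w=\langle A\rangle^{s}\phi_n$, and combine with the Chebyshev bound for $E_H(\R\setminus J_\epsilon)$. This is legitimate under the stated hypotheses: since $0\in\mathcal B_{1,\gamma}$, Condition \ref{cond:perturbation2} supplies on $J$ the eigenstate regularity that Theorem \ref{thm:reduc-limit-absorpt2} draws from Condition \ref{cond:perturbation2b}, and the Stone's-formula endpoints cause no trouble because $H\bar P$ has no point spectrum in $J$. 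What each approach buys: the paper's route uses only the bare Mourre inequality (no LAP) and is therefore logically lighter, at the price of a soft compactness/weak-convergence step; your route is fully quantitative (no subsequences, no use of $K_0$ being compact) but channels the whole argument through the heavier differential-inequality machinery behind Theorem \ref{thm:reduc-limit-absorpt2}. Either way the statement is proved.
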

  \begin{proof} If \eqref{eq:1} fails there exist an $\epsilon>0$, a
    sequence of elements $V^{(n)}\in \mathcal B_{1,\gamma}$ with $0\neq \| V^{(n)} \|_1\to
  0$, a linear combination of
  eigenstates of  $H+V^{(n)}$, viz. $\psi^{(n)} =
  \sum_{j \leq m(n)} a_j^{(n)} \psi_j^{(n)}$, such that 
\begin{equation}
     \label{eq:1p}
     \|\psi^{(n)}\|\leq 1\mand \|\bar P\psi^{(n)} \|>\epsilon.
   \end{equation} Here  $m(n)\leq \dim \Ran (P)$ specifies the
  dimension of the range of  $P_{V^{(n)},J}$.

Due to Theorem \ref{thm:upper} the corresponding eigenvalues, say
$\lambda^{(n)}_j$,  concentrate at $\lambda$. More precisely
\begin{equation}
  \max_{j \leq m(n)} |\lambda^{(n)}_j-\lambda|\to 0 \mfor n\to\infty.
\end{equation}
In particular we have 
\begin{equation}
  \max_{j \leq m(n)} \| (H - \lambda ) \psi^{(n)}_j \| \to
  0,\mand \max_{j \leq m(n)} \| f_\lambda^\perp(H) \psi^{(n)}_j \| \to 0,
\end{equation}
and therefore also
\begin{equation}\label{eq:3}
  \| (H - \lambda ) \psi^{(n)} \| \to
  0,\mand \| f_\lambda^\perp(H) \psi^{(n)} \| \to 0.
\end{equation} 

Next by the
  Banach-Alaoglu Theorem 
    \cite[Theorem 1 on p. 126]{Yo} we can  assume that there exists the weak limit
  $\psi_\infty:=\w-\lim \psi^{(n)}$ (by passing to a subsequence and
  change notation). From the first identity of
  \eqref{eq:3} we learn that $\psi_\infty\in \Ran(P)$. Consequently
  \begin{equation}\label{eq:4}
    \w-\lim \bar P\psi^{(n)}=\bar P\psi_\infty=0.
  \end{equation}

Now we apply a similar argument as the
  one for proving Theorem \ref{thm:upper} now based on
  (\ref{eq:Mourre_estimate}) rather than (\ref{eq:Mourreiv}): Looking at the
  expectation of both sides of (\ref{eq:Mourre_estimate})  in the states
  $\phi_n:=\bar P\psi^{(n)}$, using Remark \ref{rk:perturbation1} \ref{item:h2}), we obtain
\begin{align}
c_0 \| \phi_n \|^2 \le& 2 \| (H-\lambda) \phi_n \| \| A \phi_n \| + C
\| \inp {H}^{1/2}f_\lambda^\perp(H) \phi_n \|^2 + \inp{\phi_n , K_0 \phi_n }.
\end{align}
   Since $K_0$ is compact we obtain from \eqref{eq:4} that 
  $\inp{\phi_n , K_0 \phi_n } \to 0$. 
  By \eqref{eq:12i}, $\| A \phi_n \|$ is uniformly bounded, and
  therefore we conclude in combination with \eqref{eq:3} that
   $\| \phi_n \|  \to  0$. This contradicts \eqref{eq:1p}.

Let us now prove that either \ref{item:1s}) of \ref{item:2s}) holds. If \ref{item:1s}) fails indeed there exists a sequence of normalized
  eigenstates, $(H+V_n-\lambda_n)\psi_n=0$, with eigenvalues
  $\lambda_n\to \lambda$ and with $V_n \in \mathcal B_{1,\gamma}$, $0\neq \| V_n \|_1\to
  0$. Due to \eqref{eq:1} $\|\bar P \psi_n\|\to 0$. By compactness
  there exists $\psi\in\Ran (P)$ such that 
  along some subsequence  $P \psi_{n_k}\to \psi$. Whence
  \begin{equation}
    \|\psi_{n_k}- \psi\|\leq\|\bar P \psi_{n_k}\|+\| P \psi_{n_k}- \psi\|\to 0 \mfor k\to \infty,
  \end{equation}
  and we conclude \ref{item:2s}). 
    \end{proof}
 There is a different version
      of the second part of Proposition \ref{prop:upper-semic-point} given by first fixing
      $V \in \mathcal B_{1,\gamma}$ (but otherwise given under the same
      conditions). Now we  look at the eigenvalue
      problem in  $I$ of the family of perturbed Hamiltonians
$H_\sigma=H+\sigma V$ with  $\sigma\in \R$ and  $|\sigma|>0$
sufficiently small. In this framework there is a similar dichotomy (it can be shown by applying Proposition \ref{prop:upper-semic-point} under the same conditions, replacing $\mathcal{B}_{1,\gamma}$ by the subset $\{ \sigma V , | \sigma | \le \sigma_0 \} \subseteq \mathcal{B}_{1,\gamma}$). 
\begin{cor}
\label{cor:upper-semic-pointii} Assume that Conditions
\ref{cond:perturbation1} and Condition \ref{cond:perturbation2} hold. Suppose $\lambda\in \sigma_\pp( H)
  $ and that $J \subseteq I$ is a compact interval
   such that $\sigma_\pp( H)\cap J=
  \{\lambda\}$. Let $P = E_H( \{ \lambda \} )$ and let $V \in \mathcal{B}_{1,\gamma}$. One of the following two alternatives i) or ii) holds:
\begin{enumerate}[\normalfont i)]
  \item \label{item:1sf}
For some  sufficiently small $\sigma_0>0$ 
 there are no
eigenvalues of $H_\sigma := H+\sigma V$ in $J$  for all $\sigma \in
]-\sigma_0,\sigma_0[\;\setminus\{0\}$.
\item \label{item:2sf} For some sequence of coupling constants , $0 \neq \sigma_n \to 0$, and some sequence of normalized eigenstates $\psi_n$, $(H+\sigma_nV-\lambda_n)\psi_n=0$ with $\lambda_n \to \lambda$, there exists $\psi_\infty \in \mathrm{Ran}(P)$ such that $\| \psi_n - \psi_\infty \| \to 0$.
  \end{enumerate}
\end{cor}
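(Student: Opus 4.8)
The plan is to deduce the corollary directly from Proposition \ref{prop:upper-semic-point}, applied to the one-parameter segment $\{\sigma V:|\sigma|\le1\}$ sitting inside $\mathcal{B}_{1,\gamma}$, exactly as announced in the paragraph preceding the statement.

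First I would dispose of the trivial case $V=0$. Then any eigenstate $(H-\lambda_n)\psi_n=0$ with $\lambda_n\in J$ forces $\lambda_n=\lambda$ and $\psi_n\in\Ran(P)$, so choosing $\psi_n$ equal to a fixed normalized vector $\psi_\infty\in\Ran(P)$, together with $\lambda_n=\lambda$ and $\sigma_n=1/n$, verifies alternative \ref{item:2sf}). Hence one may assume $V\neq0$.

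Next I would set $\mathcal{B}'_{1,\gamma}:=\{\sigma V:|\sigma|\le1\}$ and check that it is an admissible choice of the set in Condition \ref{cond:perturbation2} for the interval $J$. Indeed, since $\mathcal{B}_{1,\gamma}$ is star-shaped and symmetric with respect to $0$ and contains $V$, we have $\mathcal{B}'_{1,\gamma}\subseteq\mathcal{B}_{1,\gamma}\subseteq\{W\in\mathcal V_1:\|W\|_1\le\gamma\}$; moreover $0\in\mathcal{B}'_{1,\gamma}$ and $\mathcal{B}'_{1,\gamma}$ is itself star-shaped and symmetric with respect to $0$; and since every $W\in\mathcal{B}'_{1,\gamma}$ lies in $\mathcal{B}_{1,\gamma}$, the a priori bound \eqref{eq:12i} is available, with the same constant $C$, for each eigenstate of $H+W-\mu$ with $W\in\mathcal{B}'_{1,\gamma}$ and $\mu\in J$. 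Thus Proposition \ref{prop:upper-semic-point} applies verbatim with $\mathcal{B}_{1,\gamma}$ replaced by $\mathcal{B}'_{1,\gamma}$, yielding one of the two alternatives \ref{item:1s}) or \ref{item:2s}) of that proposition.

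It then remains to rewrite this dichotomy in terms of $\sigma$. In case \ref{item:1s}) there is $0<\gamma'\le\gamma$ such that $H+W$ has no eigenvalue in $J$ whenever $W\in\mathcal{B}'_{1,\gamma}$ and $0\neq\|W\|_1\le\gamma'$; writing $W=\sigma V$ and setting $\sigma_0:=\min(1,\gamma'/\|V\|_1)>0$, this says precisely that $H_\sigma=H+\sigma V$ has no eigenvalue in $J$ for all $\sigma\in\,]-\sigma_0,\sigma_0[\,\setminus\{0\}$, i.e. alternative \ref{item:1sf}). In case \ref{item:2s}) one obtains $W_n\in\mathcal{B}'_{1,\gamma}$ with $0\neq\|W_n\|_1\to0$, normalized eigenstates $(H+W_n-\lambda_n)\psi_n=0$ with $\lambda_n\to\lambda$, and $\psi_\infty\in\Ran(P)$ with $\|\psi_n-\psi_\infty\|\to0$; writing $W_n=\sigma_n V$ with $|\sigma_n|\le1$, the relation $0\neq\|W_n\|_1=|\sigma_n|\,\|V\|_1\to0$ gives $0\neq\sigma_n\to0$, which is exactly alternative \ref{item:2sf}). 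I do not expect any genuine obstacle here, since the whole content sits in Proposition \ref{prop:upper-semic-point}; the only points requiring care are the (routine) verification that $\{\sigma V:|\sigma|\le1\}$ is again a legitimate $\mathcal{B}_{1,\gamma}$, and the observation that the case $V=0$ must be treated separately, because there Proposition \ref{prop:upper-semic-point} case \ref{item:1s}) holds only vacuously while it is alternative \ref{item:2sf}) of the corollary that is true.
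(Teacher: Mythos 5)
Your argument is correct and is exactly the route the paper indicates: the corollary is obtained by applying Proposition \ref{prop:upper-semic-point} with $\mathcal{B}_{1,\gamma}$ replaced by the segment $\{\sigma V : |\sigma|\le \sigma_0\}$, which lies in $\mathcal{B}_{1,\gamma}$ by star-shapedness and symmetry. Your extra care with the degenerate case $V=0$ (where alternative i) of the proposition holds only vacuously) and the explicit translation between $\|W\|_1$ and $|\sigma|$ are correct refinements of the paper's one-line sketch.
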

%
%

 \section{Second order perturbation theory} \label{Second order
      perturbation theory}
      
In this section we shall study second order perturbation theory. Our
main interest is the Fermi Golden Rule, which indeed we shall show is a
consequence of having an expansion to second order of any possible
 existing perturbed
eigenvalue near an unperturbed one. 
This  is done in Subsection
\ref{Second order perturbation  theory -- simple case} under
Conditions \ref{cond:perturbation1} and Condition \ref{cond:perturbation2}, in the case where the unperturbed eigenvalue is simple. In the
degenerate case, this is done in Subsection \ref{Fermi golden rule
  criterion -- general case} assuming Condition
\ref{cond:perturbation2c} rather than Condition  \ref{cond:perturbation2}. We do not obtain  an expansion to second
order of the perturbed eigenvalues assuming Condition \ref{cond:perturbation2} only. Nevertheless we shall show a
similar 
version of the Fermi Golden Rule in this case also (done in Subsection
\ref{Fermi golden rule criterion -- general case}).

\subsection{Second order perturbation  theory -- simple
  case} \label{Second order perturbation  theory -- simple case}

\begin{thm}\label{thm:simple}
Assume that Conditions \ref{cond:perturbation1}, Condition \ref{cond:perturbation2}
   and Condition \ref{cond:tech} hold. Suppose $\lambda\in \sigma_\pp( H)
  $ and that $J \subseteq I$ is a compact interval
   such that $\sigma_\pp( H)\cap J=
  \{\lambda\}$. Let $P = E_H( \{ \lambda \} )$, $\bar P = I - P$. Let $V \in \mathcal{B}_{1,\gamma}$.  Suppose
\begin{equation}
  \dim \Ran (P) = 1,\text{ viz. }P=|\psi\rangle \langle \psi|.
\end{equation}
 For  all $1/2 < s \le 1$ and $\epsilon>0$, there exists $\sigma_0>0$ such that if $|\sigma| \le \sigma_0$ and $\lambda_\sigma \in J$ is an eigenvalue of $H_\sigma$, then 
  \begin{equation}\label{eq:2nd_order_lambda}
  \begin{split}
&\left |  \lambda_\sigma - \lambda - \sigma \langle \psi , V \psi \rangle + \sigma^2 \langle V \psi , ( H - \lambda - \i 0 )^{-1} \bar{P} V \psi \rangle \right | \le \epsilon \sigma^2, 
  \end{split}
  \end{equation}
 and there exists a normalized eigenstate $\psi_\sigma$, $H_\sigma \psi_\sigma = \lambda_\sigma \psi_\sigma$, such that
  \begin{equation}\label{eq:2nd_order_phi}
 \left \|  \psi_\sigma - \psi + \sigma ( H - \lambda - \i 0 )^{-1} \bar P V \psi \right \|_{\D( \langle A \rangle^s )^*} \le \epsilon | \sigma |.
 \end{equation}
  \end{thm}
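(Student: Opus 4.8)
The plan is to run a Feshbach--Schur reduction of the eigenvalue equation $H_\sigma\psi_\sigma=\lambda_\sigma\psi_\sigma$ relative to the rank one projection $P=|\psi\rangle\langle\psi|$, to invert the complementary block by means of the reduced limiting absorption principle of Theorem \ref{thm:reduc-limit-absorpt2}, and then to iterate once so as to reach second order.

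First I would collect the a priori input. Since $\mathcal B_{1,\gamma}$ is star-shaped about $0$, $\sigma V\in\mathcal B_{1,\gamma}$ whenever $|\sigma|\le1$, so Theorem \ref{thm:upper} and Proposition \ref{prop:upper-semic-point} apply to $H_\sigma=H+\sigma V$: for $|\sigma|$ small every eigenvalue $\lambda_\sigma\in J$ is simple, $\lambda_\sigma\to\lambda$, and every associated normalized eigenstate $\psi_\sigma$ has $\|\bar P\psi_\sigma\|\to0$, uniformly. Condition \ref{cond:perturbation2} gives $\psi_\sigma\in\D(A)\cap\D(M)$ with $\|A\psi_\sigma\|\le C$; since $\Ran(P)\subseteq\D(A)$ by Condition \ref{cond:perturbation2b}, in the decomposition $\psi_\sigma=c_\sigma\psi+\phi_\sigma$ with $c_\sigma:=\langle\psi,\psi_\sigma\rangle$, $\phi_\sigma:=\bar P\psi_\sigma$, one has $\phi_\sigma\in\D(A)$, $\|\langle A\rangle^s\phi_\sigma\|\le C$, and, after fixing the phase, $c_\sigma=(1-\|\phi_\sigma\|^2)^{1/2}\to1$.

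Next, applying $P$ and $\bar P$ to the eigenvalue equation and using $HP=\lambda P$ together with $H\bar P=\bar PH$ on $\D(H)$ gives
\begin{equation*}
(\lambda_\sigma-\lambda)c_\sigma=\sigma\langle\psi,V\psi_\sigma\rangle,\qquad (H-\lambda_\sigma)\phi_\sigma=-\sigma\bar PV\psi_\sigma.
\end{equation*}
From the second relation, $\phi_\sigma=(H-\lambda_\sigma-\i\mu)^{-1}\bar P(-\sigma V\psi_\sigma-\i\mu\phi_\sigma)$ for $\mu\ne0$; I would apply $\langle A\rangle^{-s}$, write $\phi_\sigma=\langle A\rangle^{-s}g_\sigma$ with $\|g_\sigma\|\le C$, and let $\mu\to0^+$. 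By Theorem \ref{thm:reduc-limit-absorpt2} (whose hypotheses hold because $\sigma_\pp(H)\cap J=\{\lambda\}$, so $\lambda_\sigma\notin\sigma_\pp(H\bar P)$), the term $\i\mu\langle A\rangle^{-s}(H-\lambda_\sigma-\i\mu)^{-1}\bar P\langle A\rangle^{-s}g_\sigma$ is $O(\mu)$, and the remaining term converges, giving in $\D(\langle A\rangle^s)^*$ the identity $\phi_\sigma=-\sigma(H-\lambda_\sigma-\i0)^{-1}\bar PV\psi_\sigma$. Now insert $V\psi_\sigma=c_\sigma V\psi+V\phi_\sigma$: since $V$ is $\epsilon$-bounded relatively to $H$ and $\|H\phi_\sigma\|$ stays bounded while $\|\phi_\sigma\|\to0$, one gets $\|V\phi_\sigma\|\to0$; combined with the H\"older continuity in $z$ of Theorem \ref{thm:reduc-limit-absorpt2} (to trade $\lambda_\sigma$ for $\lambda$), the uniform bounds, and $c_\sigma\to1$, this yields $\phi_\sigma=-\sigma(H-\lambda-\i0)^{-1}\bar PV\psi+o(\sigma)$ in $\D(\langle A\rangle^s)^*$, which is \eqref{eq:2nd_order_phi}. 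Substituting this back into $(\lambda_\sigma-\lambda)c_\sigma=\sigma c_\sigma\langle\psi,V\psi\rangle+\sigma\langle\psi,V\phi_\sigma\rangle$, using symmetry of $V$ and dividing by $c_\sigma$, gives $\lambda_\sigma-\lambda=\sigma\langle\psi,V\psi\rangle-\sigma^2\langle V\psi,(H-\lambda-\i0)^{-1}\bar PV\psi\rangle+o(\sigma^2)$, i.e.\ \eqref{eq:2nd_order_lambda}.

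The main obstacle is the low degree of regularity: only $\psi,\psi_\sigma\in\D(A)$ is at hand, not $\D(A^2)$. One must check that $\bar PV\psi$ — and, in the intermediate step, $\bar PV\psi_\sigma$ — actually lie in $\D(\langle A\rangle^s)$, so that $(H-\lambda-\i0)^{-1}\bar PV\psi$ and the scalar $\langle V\psi,(H-\lambda-\i0)^{-1}\bar PV\psi\rangle$ are well defined through the continuous boundary value supplied by Theorem \ref{thm:reduc-limit-absorpt2}; this is precisely the boundedness fact announced after Condition \ref{cond:fermi-golden-rule}. It follows from $V\in\mathcal V_1$ together with $\psi,\psi_\sigma\in\D(HA)$ (Remark \ref{remarks:domain}, applied also to $H_\sigma$, whose associated commutator is $H'+\sigma V'$), via a commutator identity relating $AV\psi$ to $VA\psi$ and $V'\psi=[V,\i A]^0\psi$. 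The remainder is bookkeeping: the limits must be taken in the order $\mu\to0^+$ then $\sigma\to0$, with all constants uniform in $\sigma$ and in $\lambda_\sigma\in J$, and the normalization $\|\psi_\sigma\|=1$ with $c_\sigma\to1$ must be used to pass from $\psi+\phi_\sigma$ to $\psi_\sigma$ in \eqref{eq:2nd_order_phi}.
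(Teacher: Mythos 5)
Your proposal is correct and follows essentially the same route as the paper: project the eigenvalue equation $H_\sigma\psi_\sigma=\lambda_\sigma\psi_\sigma$ with $P$ and $\bar P$ (the system the paper records as \eqref{eq:22}), invert the $\bar P$-block via Theorem \ref{thm:reduc-limit-absorpt2}, obtain $\|\bar P\psi_\sigma\|^2=o(|\sigma|)$ by pairing the reduced resolvent identity with the uniform bound $\|\langle A\rangle\bar P\psi_\sigma\|\le C$ and the interpolation fact $\|\langle A\rangle^s\bar P\psi_\sigma\|\to0$ for $s<1$, and then iterate once to reach second order, relying on $\Ran(VP)\subseteq\D(A)$ and the analogous statement for $V\psi_\sigma$ exactly as in Remark \ref{rk:Fermi} \ref{item:fermi1}). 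The only difference is presentational: the paper argues by contradiction along a sequence $\sigma_n\to0$, with the quantitative bounds \eqref{eq:barPPsi_n}--\eqref{eq:barP_nP} written out explicitly, whereas you expand directly; you are slightly terser at the point where $\|V\phi_\sigma\|\to0$ must be upgraded to $\|\langle A\rangle^sV\phi_\sigma\|\to0$ via the same interpolation device, but the ingredients you cite ("the uniform bounds") are the right ones.
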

\begin{remarks}\label{rk:Fermi}
\begin{enumerate}[\quad\normalfont 1)]
\item \label{item:fermi1} It is  a consequence of Conditions
  \ref{cond:perturbation1},  Condition \ref{cond:perturbation2b}, 
  Remark \ref{remarks:domain} and Condition
  \ref{cond:tech} that 
  \begin{equation}
    \label{eq:6}
    \Ran
  (VP)\subseteq \D (A)\mfor \text{all }V \in \mathcal{V}_1.
  \end{equation}
 Notice that we
  can compute the commutator form $[V,\i A]$ on $\parb{ \D
    (M^{1/2})\cap \D (H)\cap \D(A^*)}\times \parb{ \D (M^{1/2})\cap \D
    (H)\cap \D(A)}$ by a formula similar to (\ref{eq:vir}). Whence this
  form is given by $V'$, cf. (\ref{eq:5a}), which by assumption is an
  $H$-bounded operator.  
  In combination with Theorem
  \ref{thm:reduc-limit-absorpt2}  (\ref{eq:6}) implies that indeed  
the operator
\begin{equation}\label{eq:5n}
PV ( H-\lambda - \i0 )^{-1} \bar P V P\in \mathcal
B(\mathcal H).
\end{equation}
\item \label{item:fermi2} Due to Theorem \ref{thm:upper} there is at most one eigenvalue $\lambda_\sigma$ of $H_\sigma$ near $\lambda$, and if it exists it is simple.
\end{enumerate}
\end{remarks}
\begin{cor}
  \label{cor:fermi_simple} Under the conditions of Theorem
  \ref{thm:simple} and the condition
  \begin{equation}
    \label{eq:2}\Im \inp{ V \psi , ( H - \lambda - \i 0 )^{-1} \bar{P} V \psi }>0,
    \end{equation} there exists $\sigma_0>0$ such that for all
  $\sigma \in ]-\sigma_0,\sigma_0[\;\setminus\{0\}$
  \begin{equation}
    \label{eq:21a}
  \sigma_\pp( H_\sigma)\cap J= \emptyset. 
  \end{equation}
\end{cor}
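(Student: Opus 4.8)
The plan is to read this off directly from the second order expansion \eqref{eq:2nd_order_lambda} established in Theorem \ref{thm:simple}, by separating real and imaginary parts. First I would record that the ``lower order'' terms appearing in \eqref{eq:2nd_order_lambda} are real: $\lambda$ and any eigenvalue $\lambda_\sigma$ of the self-adjoint operator $H_\sigma$ are real, and since $V$ is symmetric with $\D(V)\supseteq\D(H)$ and $P=|\psi\rangle\langle\psi|$ with $\psi\in\D(H)$, the number $\sigma\inp{\psi,V\psi}$ is real as well. Consequently the imaginary part of the quantity inside the modulus in \eqref{eq:2nd_order_lambda} equals $-\sigma^2 c$, where
\[
c:=\Im\inp{V\psi,(H-\lambda-\i 0)^{-1}\bar P V\psi}=\Im\inp{\psi,PV(H-\lambda-\i 0)^{-1}\bar P V P\psi};
\]
by Remark \ref{rk:Fermi} \ref{item:fermi1}) this is a well-defined finite real number, and by hypothesis \eqref{eq:2} it satisfies $c>0$.

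Next I would apply Theorem \ref{thm:simple} with the specific choice $\epsilon:=c/2$, producing a $\sigma_0>0$ with the properties stated there. Suppose, towards a contradiction, that for some $\sigma$ with $0<|\sigma|\le\sigma_0$ the operator $H_\sigma$ has an eigenvalue $\lambda_\sigma\in J$. Then \eqref{eq:2nd_order_lambda} holds, and since the modulus of a complex number dominates the absolute value of its imaginary part, the observation above yields
\[
\sigma^2 c\le\left|\lambda_\sigma-\lambda-\sigma\inp{\psi,V\psi}+\sigma^2\inp{V\psi,(H-\lambda-\i 0)^{-1}\bar P V\psi}\right|\le\epsilon\sigma^2=\tfrac{c}{2}\,\sigma^2.
\]
Dividing by $\sigma^2>0$ gives $c\le c/2$, contradicting $c>0$. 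Hence $\sigma_\pp(H_\sigma)\cap J=\emptyset$ for all $\sigma\in\,]-\sigma_0,\sigma_0[\;\setminus\{0\}$, which is \eqref{eq:21a}.

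I do not expect any real obstacle here: all the analytic work has already been done in Theorem \ref{thm:simple} (and, behind it, in the reduced limiting absorption principle of Theorem \ref{thm:reduc-limit-absorpt2}). The only points needing a line of justification are the reality of the first order correction $\sigma\inp{\psi,V\psi}$ and the fact that the second order coefficient in \eqref{eq:2nd_order_lambda} is genuinely a well-defined bounded complex number, so that taking imaginary parts is legitimate; both follow from the discussion in Remark \ref{rk:Fermi} \ref{item:fermi1}) together with $\psi\in\D(H)$. One may also remark that \eqref{eq:2} is precisely the Fermi Golden Rule condition \eqref{eq:18} specialized to $\dim\Ran(P)=1$, so this corollary is the rank-one instance of Theorem \ref{thm:fermi-golden-rule}.
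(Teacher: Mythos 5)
Your argument is correct and is exactly the paper's intended (implicit) proof: the corollary is read off from \eqref{eq:2nd_order_lambda} by taking imaginary parts, just as the paper does with \eqref{eq:2nd_order_lambdab} in the proof of Theorem \ref{thm:fermi-golden-rule}. The only blemish is a harmless sign slip — the imaginary part of the bracketed quantity is $+\sigma^2 c$, not $-\sigma^2 c$ — which does not affect the contradiction since you only use its absolute value.
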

\noindent{\bf Proof of Theorem \ref{thm:simple}} \,
Assume by contradiction that \eqref{eq:2nd_order_lambda} does not hold. Then there exist $\epsilon>0$ and a sequence $\sigma_n \to 0$ such that $H_{\sigma_n}$ has an eigenvalue $\lambda_n$ in $J$ satisfying, for all $n$ and for some $\psi \in \mathrm{Ran}(P)$, $\| \psi \|=1$,
\begin{equation}\label{eq:contradiction}
  \begin{split}
&\left |  \lambda_n - \lambda - \sigma_n \inp { \psi , V \psi } + \sigma_n^2 \inp{ V \psi , ( H - \lambda - \i 0 )^{-1} \bar{P} V \psi } \right | \ge \epsilon \sigma_n^2. 
  \end{split}
  \end{equation}
Since $\dim \mathrm{Ran}(P) =1$, \eqref{eq:contradiction}
actually holds for any $\psi \in \mathrm{Ran}(P)$ such that $\| \psi
\|=1$. Let $\psi_n$ be a normalized eigenstate of $H_n :=
H_{\sigma_n}$ associated to $\lambda_n$,  $H_n \psi_n = \lambda_n
\psi_n$. Arguing as in the proof of Proposition
\ref{prop:upper-semic-point} we can  assume that there exists $\tilde \psi \in \mathrm{Ran}(P)$ such that $\| \psi_n - \tilde \psi \| \to 0$. Henceforth we set $\psi = \tilde \psi$. Let $P_n := E_{H_n}( \{ \lambda_n \} )$. It follows from the fact that $\dim \mathrm{Ran} (P) = 1$ together with Theorem \ref{thm:upper} that $\dim \mathrm{Ran}( P_n ) = 1$. Hence $P_n = |\psi_n\rangle \langle \psi_n|$. The equation
    $(H_n-\lambda_n)P_n = 0$ is equivalent to the following system of equations: 
    \begin{equation}\begin{cases}
      \label{eq:22}
      P\big (\sigma_nV+\lambda-\lambda_n\big )P_n=0,\\
\sigma_n\bar PVP_n+(\lambda-\lambda_n)\bar PP_n+(H-\lambda)\bar PP_n=0.
    \end{cases}\end{equation}

Since $\| \psi_n - \psi \| \to 0$, we have $\| \bar P P_n \| \to 0$ and $\| PP_n \| \to 1$. Hence the first equation of \eqref{eq:22} yields
 \begin{equation}\label{eq:lambda-lambda_n}
 \lambda - \lambda_n = O( | \sigma_n | ).
 \end{equation}
 Now, using the second equation of \eqref{eq:22}, we can write, for any $\phi \in \mathcal{H}$ such that $\| \phi \|=1$, and any $1/2<s\le 1$,
\begin{align}
  \| \bar P P_n \phi \|^2 & =  \left | \langle \bar P P_n \phi, \bar P P_n \phi \rangle \right | \notag \\
 & = \left | \langle \bar P P_n \phi , ( H - \lambda - \i 0
   )^{-1} \parb { \sigma_n \bar PVP_n + (\lambda-\lambda_n + \i0) \bar P P_n } \phi \rangle \right | \notag \\
 & \le C | \sigma_n | \left \| \langle A \rangle^{-s} ( H-\lambda - \i 0 )^{-1}\bar P \langle A \rangle^{-s} \right \| \times \notag \\
 & \qquad \| \langle A \rangle^s \bar P P_n \| \left ( \| \langle A \rangle \bar P P_n \| + \| \langle A \rangle \bar P V P_n \| \right ).
\end{align}
Using Condition \ref{cond:perturbation2} and the assumption that $V \in \mathcal{B}_{1,\gamma}$, one can prove that $\|
\langle A \rangle \bar P P_n \|$ and $\| \langle A \rangle \bar P V P_n \|$ are uniformly bounded in $n$. In addition we claim that for $s<1$, $\| \langle A \rangle^s \bar P P_n \| \to 0$ as $n \to \infty$. To prove this, it suffices to use that $\| \langle A \rangle^s ( \langle A \rangle + \i k )^{-1} \| \to 0$ as $k \to \infty$, together with $\| \langle A \rangle \bar P P_n \|$ being uniformly bounded in $n$ and $\| \bar P P_n \| \to 0$ as $n \to \infty$. Therefore by Theorem \ref{thm:reduc-limit-absorpt2},
\begin{equation}\label{eq:barPP_n}
\| \bar P P_n \|^2 = o( | \sigma_n | ).
\end{equation}
Since $\dim \mathrm{Ran} (P) = \dim \mathrm{Ran}( P_n ) = 1$, Equation \eqref{eq:barPP_n} implies
\begin{equation}\label{eq:barPPsi_n}
\| \bar P \psi_n \|^2 = \| \bar P_n \psi \|^2 = o( | \sigma_n | ),
\end{equation}
and in particular also
\begin{equation}\label{eq:barP_nP}
\| \bar P_n P \|^2 = o( | \sigma_n | ),
\end{equation}
where we have set $\bar P_n = I-P_n$. Taking the expectation of the first equation of \eqref{eq:22} in the state $\psi$ gives
\begin{align}
\lambda - \lambda_n &= - \sigma_n \langle \psi , V \psi \rangle + ( \lambda-\lambda_n)(1- \| P_n \psi \|^2) - \sigma_n \langle \psi , V (P_n - P) \psi \rangle \notag \\
&= - \sigma_n \langle \psi , V \psi \rangle + \sigma_n \langle \psi , V \bar P_n \psi \rangle + o( \sigma_n^2 ), \label{eq:lambda_lambda_n_1}
\end{align}
where we used \eqref{eq:lambda-lambda_n} and \eqref{eq:barPPsi_n} in the second equality. Let us write
\begin{equation}\label{eq:decom_PnPsi}
\bar P_n \psi = P \bar P_n \psi - \bar P P_n \psi.
\end{equation}
Estimate \eqref{eq:barP_nP} yields $\| P \bar P_n \psi \| = o( | \sigma_n | )$. Inserting \eqref{eq:decom_PnPsi} and the second equation of \eqref{eq:22} into \eqref{eq:lambda_lambda_n_1}, we obtain
\begin{align}
\lambda - \lambda_n =& - \sigma_n \langle \psi , V \psi \rangle + \sigma_n^2 \langle V \psi , (H-\lambda - \i 0)^{-1} \bar P V P_n \psi \rangle  \notag \\
&+ \sigma_n ( \lambda - \lambda_n ) \langle V \psi , (H-\lambda-\i 0 )^{-1} \bar P P_n \psi \rangle + o( \sigma_n^2 ).
\end{align}
As above we can use $\lambda - \lambda_n = O( | \sigma_n | )$ together with the fact that $\| \langle A \rangle^s \bar P P_n \| \to 0$ for $s<1$ and Theorem \ref{thm:reduc-limit-absorpt2} to obtain
\begin{equation}
\sigma_n ( \lambda - \lambda_n ) \langle V \psi , (H-\lambda-\i 0 )^{-1} \bar P P_n \psi \rangle = o( \sigma_n^2 ).
\end{equation}
Finally, it follows from Condition \ref{cond:perturbation2} and the assumption $V\in \mathcal{B}_{1,\gamma}$ that $\| \langle A \rangle^s V (P_n-P) \psi \| \to 0$ for $s<1$. This leads to
\begin{equation}
\lambda - \lambda_n = - \sigma_n \langle \psi , V \psi \rangle + \sigma_n^2 \langle V \psi , (H-\lambda - \i 0)^{-1} \bar P V \psi \rangle + o(\sigma_n^2),
\end{equation}
which contradicts \eqref{eq:contradiction}, and hence proves \eqref{eq:2nd_order_lambda}. 

It remains to prove \eqref{eq:2nd_order_phi}. Assume, again by contradiction, that \eqref{eq:2nd_order_phi} does not hold. Then there exist $\epsilon > 0$ and a sequence $\sigma_n \to 0$ such that $H_n = H_{\sigma_n}$ has an eigenvalue $\lambda_n \in J$ associated to a normalized eigenstate $\psi_n$ satisfying, for any $\psi \in \mathrm{Ran}(P)$, $\| \psi \|=1$,
 \begin{equation}\label{eq:contradiction2}
 \left \|  \psi_n - \psi + \sigma_n ( H - \lambda - \i 0 )^{-1} \bar P
   V \psi \right \|_{ (\D( \langle A \rangle^s))^*} \ge \epsilon | \sigma_n |.
 \end{equation}
As above we can assume that there exists $\psi \in \mathrm{Ran}(P)$ such that $\| \psi_n - \psi \| \to 0$. Let $\tilde \psi := e^{i \theta_n } \psi$, where $\theta_n \in \mathbb{R}$ is defined by the equation $ \langle \psi,\psi_n \rangle = e^{i\theta_n} | \langle \psi,\psi_n \rangle|$. Using the second equation of \eqref{eq:22}, we can write
\begin{align}
 \psi_n &= P \psi_n + \bar P \psi_n \notag \\
 &= \langle \psi , \psi_n \rangle \psi - ( \lambda - \lambda_n )
 (H-\lambda - \i 0)^{-1} \bar P \psi_n - \sigma_n (H-\lambda - \i0)^{-1} \bar P V \psi_n \notag \\
 &= \tilde \psi - \sigma_n (H-\lambda - \i0)^{-1} \bar P V \tilde \psi + R_n, 
\end{align}
where
\begin{align}
R_n =& \big ( \| P \psi_n \| - 1 ) \tilde \psi - ( \lambda - \lambda_n ) (H-\lambda - \i0)^{-1} \bar P \psi_n \notag \\
&- \sigma_n (H-\lambda - \i0)^{-1} \bar P V ( \psi_n - \tilde \psi \big ).
\end{align}
By arguments similar to the ones used to prove \eqref{eq:2nd_order_lambda}, one can see that $\| R_n \|_{D( \langle A \rangle^s)^*} = o( | \sigma_n | )$ for any fixed $1/2<s<1$, which contradicts \eqref{eq:contradiction2}, and hence proves \eqref{eq:2nd_order_phi}. 
\qed

\subsection{Fermi Golden Rule criterion -- general
  case} \label{Fermi golden rule criterion -- general case}

We begin this section with a result similar to Theorem \ref{thm:simple} that we shall obtain without requiring an hypothesis of simplicity. Here we need Condition \ref{cond:perturbation2c} rather than Condition \ref{cond:perturbation2}.
\begin{thm}\label{thm:general}
Suppose Conditions \ref{cond:perturbation1}, Condition
  \ref{cond:perturbation2c}  and Condition \ref{cond:tech}. Let $V \in \mathcal V_2$. Suppose $\lambda\in \sigma_\pp( H)
  $ and that $J \subseteq I$ is a compact interval
   such that $\sigma_\pp( H)\cap J=
  \{\lambda\}$. Let $P = E_H( \{ \lambda \} )$, $\bar P = I-P$.
  
There exist $C \ge 0$ and $\sigma_0>0$ such that if $|\sigma| \le \sigma_0$ and $\lambda_\sigma \in J$ is an eigenvalue of $H_\sigma = H + \sigma V$, then there exists $\psi \in \mathrm{Ran}(P)$, $\| \psi \|=1$, such that
  \begin{equation}\label{eq:2nd_order_lambdab}
  \begin{split}
&\big |  \lambda_\sigma - \lambda - \sigma \langle \psi , V \psi \rangle + \sigma^2 \langle V \psi , ( H - \lambda - \i 0 )^{-1} \bar{P} V \psi \rangle \big | \le C |\sigma|^{5/2}.
  \end{split}
  \end{equation}
  \end{thm}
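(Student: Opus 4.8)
The plan is to follow the line of argument used for Theorem \ref{thm:simple}, but to carry explicit powers of $\sigma$ rather than argue by contradiction; the loss of the exponent (from $o(\sigma^2)$ in the simple, $\mathcal{V}_1/\mathrm{Cond.}\,\ref{cond:perturbation2}$ case to $O(|\sigma|^{5/2})$ here) reflects that only two commutators of $V$ with $A$ are available and that the eigenstates are not assumed to lie in $\D(A^2)$. Fix an eigenvalue $\lambda_\sigma\in J$ of $H_\sigma$ and a unit eigenstate $\psi_\sigma$. By Corollary \ref{cor:stability} applied to the subintervals of $J$ away from $\lambda$, for small $|\sigma|$ the point $\lambda_\sigma$ lies in an arbitrarily small neighbourhood of $\lambda$, and the reduced Limiting Absorption Principles of Theorems \ref{thm:reduc-limit-absorpt2} and \ref{thm:reduc-limit-absorpt3} apply at $\lambda_\sigma$ uniformly in $\sigma$; in particular $\bar H_\sigma:=H_\sigma+\alpha_J P$ has no eigenvalue in $J$ for small $\sigma$.

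Splitting $\psi_\sigma=P\psi_\sigma+\bar P\psi_\sigma$ and using $[H,P]=0$, the eigenvalue equation is equivalent to the pair
\begin{equation*}
(\lambda-\lambda_\sigma)P\psi_\sigma=-\sigma PV\psi_\sigma,\qquad (H-\lambda_\sigma)\bar P\psi_\sigma=-\sigma\bar PV\psi_\sigma .
\end{equation*}
From $\bar H_\sigma\psi_\sigma=\lambda_\sigma\psi_\sigma+\alpha_J P\psi_\sigma$ and $\lambda_\sigma\notin\sigma_\pp(\bar H_\sigma)$ one gets $\psi_\sigma=\alpha_J(\bar H_\sigma-\lambda_\sigma-\i0)^{-1}P\psi_\sigma$ and, after commuting $\bar P$ through the resolvent and using $[\bar H,P]=0$,
\begin{equation*}
\bar P\psi_\sigma=\sigma(\bar H_\sigma-\lambda_\sigma-\i0)^{-1}(PV-VP)\psi_\sigma .
\end{equation*}
The decisive point is that $(PV-VP)\psi_\sigma$ is an $A$-regular vector with uniformly bounded norm: by Condition \ref{cond:perturbation2c} and Remark \ref{remarks:domain} we have $\Ran(P)\subseteq\D(A^2)\cap\D(M)$, by Remark \ref{rk:Fermi} \ref{item:fermi1}) $\Ran(VP)\subseteq\D(A)$, and $PV\psi_\sigma\in\Ran(P)$; since $\Ran(P)$ is finite dimensional, $\|\langle A\rangle(PV-VP)\psi_\sigma\|\le C$ with $C$ independent of $\sigma$ and of the eigenstate. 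Combined with the uniform bound \eqref{eq:13b} of Theorem \ref{thm:reduc-limit-absorpt3} this gives, for every $1/2<s\le1$, the weighted estimate $\|\langle A\rangle^{-s}\bar P\psi_\sigma\|\le C|\sigma|$. Together with the $\mathcal H$-norm bound $\|\bar P\psi_\sigma\|=O(|\sigma|^{1/2})$ — obtained by an argument of the type of Proposition \ref{prop:upper-semic-point}, evaluating the Mourre estimate \eqref{eq:Mourreiv_thmLAP3} in the state $\bar P\psi_\sigma$, using the virial identity of Remark \ref{rk:perturbation1} \ref{item:h2}) and $\|(\bar H_\sigma-\lambda_\sigma)\bar P\psi_\sigma\|=O(|\sigma|)$ — this yields $\|P\psi_\sigma\|=1+O(|\sigma|)$, so $\psi:=\|P\psi_\sigma\|^{-1}P\psi_\sigma$ is a well-defined unit vector in $\Ran(P)$, and the first equation gives $\lambda-\lambda_\sigma=O(|\sigma|)$.

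To extract the second order term, pair the first equation with $\psi$, using $P\psi=\psi$, $V$ symmetric and $PV\psi_\sigma\in\Ran(P)$, to get
\begin{equation*}
\lambda-\lambda_\sigma=-\sigma\langle\psi,V\psi\rangle-\sigma\|P\psi_\sigma\|^{-1}\langle V\psi,\bar P\psi_\sigma\rangle .
\end{equation*}
Into the last term substitute $\bar P\psi_\sigma=-\sigma(H-\lambda_\sigma-\i0)^{-1}\bar PV\psi_\sigma$, split $V\psi_\sigma=VP\psi_\sigma+V\bar P\psi_\sigma$ with $VP\psi_\sigma=\|P\psi_\sigma\|\,V\psi\in\D(A)$, and apply Theorem \ref{thm:reduc-limit-absorpt2}; the cross term containing $V\bar P\psi_\sigma$ is of order $|\sigma|^2\,\|\langle A\rangle^s V\psi\|\,\|\langle A\rangle^s V\bar P\psi_\sigma\|=O(|\sigma|^{5/2})$, once one knows $\|\langle A\rangle^s V\bar P\psi_\sigma\|=O(|\sigma|^{1/2})$. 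This produces
\begin{equation*}
\lambda-\lambda_\sigma=-\sigma\langle\psi,V\psi\rangle+\sigma^2\langle V\psi,(H-\lambda_\sigma-\i0)^{-1}\bar PV\psi\rangle+O(|\sigma|^{5/2}),
\end{equation*}
and replacing $\lambda_\sigma$ by $\lambda$ in the resolvent costs, by the Hölder estimate \eqref{eq:13c_} of Theorem \ref{thm:reduc-limit-absorpt2} with $s=1$ and $|\lambda_\sigma-\lambda|=O(|\sigma|)$, at most $C|\sigma|^2|\lambda_\sigma-\lambda|^{1/2}=O(|\sigma|^{5/2})$; this is exactly \eqref{eq:2nd_order_lambdab}.

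The main obstacle is the control, to order $|\sigma|^{5/2}$, of the terms involving $\bar P\psi_\sigma$ — both the $\mathcal H$-norm bound $\|\bar P\psi_\sigma\|=O(|\sigma|^{1/2})$ and the estimate $\|\langle A\rangle^s V\bar P\psi_\sigma\|=O(|\sigma|^{1/2})$ needed for the cross term — which rest on a uniform regularity of the perturbed eigenstates $\psi_\sigma$ (the analogue for $H_\sigma$ of Conditions \ref{cond:perturbation2b}–\ref{cond:perturbation2c}): $\psi_\sigma\in\D(A)$ with $\|A\psi_\sigma\|\le C$, and enough of $\D(M)$ for the virial identity to apply in the state $\bar P\psi_\sigma$. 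In this $\mathrm C^2$-setting this is precisely where the hypothesis $V\in\mathcal V_2$ (together with Condition \ref{cond:perturbation2c} on $H$) is essential: one commutes $V$ past $(\bar H_\sigma-z)^{-1}$ using $V\in\mathrm C^1(A_\mathcal G;A_{\mathcal G^*})$ (so that $\langle A\rangle^s V(\bar H_\sigma-z)^{-1}\langle A\rangle^{-s}$ is uniformly bounded for $s\le1$) and feeds back the identity $\bar P\psi_\sigma=\sigma(\bar H_\sigma-\lambda_\sigma-\i0)^{-1}(PV-VP)\psi_\sigma$, turning the estimates into a geometric series in $\sigma$. Establishing this uniform regularity, either via the abstract method of \cite{FMS} applied to the family $H_\sigma$ or directly from the singular Mourre calculus of Section \ref{Reduced limiting absorption principle at an eigenvalue}, is the step requiring the most care; once it is in place the remaining computations are the routine bookkeeping sketched above.
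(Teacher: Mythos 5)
Your approach has a genuine gap, and it is one you already half-admit in your closing paragraph. You reduce the $|\sigma|^{5/2}$ bound to two estimates on the $\bar P$-component of the perturbed eigenstate, namely $\|\bar P\psi_\sigma\|=O(|\sigma|^{1/2})$ and $\|\langle A\rangle^{s}V\bar P\psi_\sigma\|=O(|\sigma|^{1/2})$. The first requires a virial argument in the state $\bar P\psi_\sigma$, which in turn requires $\psi_\sigma\in\D(M^{1/2})\cap\D(A)$ with a uniform bound on $\|A\psi_\sigma\|$; the second requires, in addition, a weighted commutation of $\langle A\rangle^{s}$ past $V$. None of this is available under the hypotheses of Theorem \ref{thm:general}: Condition \ref{cond:perturbation2c} constrains eigenstates of $H$, not of $H_\sigma$, and Condition \ref{cond:perturbation2} (the only condition that supplies a priori regularity of \emph{perturbed} eigenstates) is deliberately not assumed here. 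Asserting that this regularity ``can be established via \cite{FMS} or directly from the singular Mourre calculus'' is exactly the missing step, and it is not a routine one; the whole point of the alternative \ref{item:a}) in Theorem \ref{thm:fermi-golden-rule} is to dispense with such information.

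The paper's proof avoids this entirely by never touching $\bar P\phi_\sigma$. It works with the fixed-point identity $P\phi_\sigma=\alpha_J\,P(\bar H_\sigma-\lambda_\sigma-\i 0)^{-1}P\phi_\sigma$, expands the reduced resolvent twice by the second resolvent equation, and uses only the reduced Limiting Absorption Principle of Theorem \ref{thm:reduc-limit-absorpt3}. All weighted estimates are applied to vectors in $\Ran(VP)\subseteq\D(A)$, which are controlled by Condition \ref{cond:perturbation2c} on the \emph{unperturbed} eigenstates. The $|\sigma|^{5/2}$ remainder does not come from a smallness estimate on $\bar P\phi_\sigma$ at all; it comes from the H\"older continuity of order $1/2$ (with $s=1$) in $\sigma$ and $z$ of the map $(\sigma,z)\mapsto\langle A\rangle^{-1}(\bar H_\sigma-z)^{-1}\langle A\rangle^{-1}$, used to replace $(\bar H_\sigma-\lambda_\sigma-\i 0)^{-1}$ by $(\bar H-\lambda-\i 0)^{-1}$ in the second-order term. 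You do correctly identify the last replacement step and the use of \eqref{eq:13c_}, but that is the cheap part; the substantive difference is the treatment of the second-order term, where you need a uniform $\D(A)$-bound on $\psi_\sigma$ and the paper does not. If you want to pursue your decomposition, you would first have to prove the analogue of Condition \ref{cond:perturbation2} from $V\in\mathcal V_2$ and Condition \ref{cond:perturbation2c} (which is precisely what the companion paper \cite{FMS} addresses, but under different hypotheses than those of Theorem \ref{thm:general}); the paper's route is the one that does not require it.
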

 \begin{remarks}\label{rk:general}  \begin{enumerate}[\quad\normalfont 1)]
\item In the simple case, $P = | \psi \rangle \langle \psi |$, \eqref{eq:2nd_order_lambdab} is stronger than \eqref{eq:2nd_order_lambda}.
  \item We do not have an analogue of \eqref{eq:2nd_order_phi} under
    the conditions of Theorem \ref{thm:general}, even if we assume in
    addition $\dim \mathrm{Ran}(P) = 1$. Similarly, cf. Remark
    \ref{rk:Fermi} \ref{item:fermi2}), we do not have upper semicontinuity of point spectrum at $\lambda$ even if $\dim \mathrm{Ran}(P) = 1$.
  \end{enumerate}
 \end{remarks}

\noindent{\bf Proof of Theorem \ref{thm:general}} \,
We can argue in a way similar to the proofs of Proposition 5.2 and Lemma 5.3 in \cite{AHS}. For $\sigma = 0$, there is nothing to prove. Let $\sigma \neq 0$.

As in the proof of Theorem \ref{thm:reduc-limit-absorpt3},  we set $\bar H = H + \alpha_J P$ with $\alpha_J > \sup J - \inf J$, and $\bar H_\sigma = \bar H + \sigma V$. Assume that $\lambda_\sigma \in \sigma_{\rm pp}(H_\sigma)$ and let $\phi_\sigma$ be such that $(H_\sigma - \lambda_\sigma)\phi_\sigma = 0$, $\| \phi_\sigma \|=1$. Hence
\begin{equation}\label{eq:barHsigma-lambdasigma}
( \bar{H}_\sigma - \lambda_\sigma ) \phi_\sigma = \alpha_J P \phi_\sigma.
\end{equation}
By Theorem \ref{thm:reduc-limit-absorpt3}, $\lambda_\sigma \notin \sigma_{\mathrm{pp}}( \bar{H}_\sigma )$, and hence in particular $P\phi_\sigma \neq 0$. Moreover, it follows from \eqref{eq:barHsigma-lambdasigma} that, for any $\epsilon > 0$,
\begin{equation}
P \phi_\sigma = \alpha_J P \big ( \bar{H}_\sigma - \lambda_\sigma - \i \epsilon \big )^{-1} P \phi_\sigma - \i \epsilon \alpha_J P \big ( \bar{H}_\sigma - \lambda_\sigma - \i \epsilon \big )^{-1} \phi_\sigma.
\end{equation}
Letting $\epsilon \to 0$, since $\lambda_\sigma \notin \sigma_{ \mathrm{pp} }( \bar{H}_\sigma )$, we obtain
\begin{equation}\label{eq:Pphi=Qphi}
P \phi_\sigma = \alpha_J P \big ( \bar{H}_\sigma - \lambda_\sigma - \i 0 \big )^{-1} P \phi_\sigma.
\end{equation}
Note that the right-hand-side of \eqref{eq:Pphi=Qphi} is well-defined by Theorem \ref{thm:reduc-limit-absorpt3} since, by Condition \ref{cond:perturbation2c}, $\mathrm{Ran}(P) \subseteq \D (A)$.

Let $\beta := \alpha_J + \lambda - \lambda_\sigma$. Hence $P \big ( \bar{H} - \lambda_\sigma \big ) P = \beta P$. Using twice the second resolvent equation, one easily verifies that, for any $\epsilon > 0$,
\begin{align}
& P \big ( \bar{H}_\sigma - \lambda_\sigma - \i \epsilon \big )^{-1} P \notag \\
& = ( \beta - \i \epsilon )^{-1} P - ( \beta - \i \epsilon )^{-2} \sigma P V P + ( \beta - \i \epsilon )^{-2} \sigma^2 P V \big ( \bar{H}_\sigma - \lambda_\sigma - \i \epsilon \big )^{-1} V P.
\end{align}
Letting $\epsilon \to 0$ and using Theorem \ref{thm:reduc-limit-absorpt3} with $s=1$, this yields
\begin{align}
& P \big ( \bar{H}_\sigma - \lambda_\sigma - \i 0 \big )^{-1} P \notag \\
& = \beta^{-1} P - \beta^{-2} \sigma P V P + \beta^{-2} \sigma^2 P V \big ( \bar{H} - \lambda_\sigma - \i 0 \big )^{-1} V P + R_1, \label{eq:exprQ}
\end{align}
where $R_1$ is a bounded operator on $\mathrm{Ran}(P)$ satisfying $\| R_1 \| \le C_1 | \sigma |^{5/2}$. Note that the right-hand-side of \eqref{eq:exprQ} is well-defined by Theorem \ref{thm:reduc-limit-absorpt3} and Remark \ref{rk:Fermi} \ref{item:fermi2}).

Now let $\psi := \| P \phi_\sigma \|^{-1} P \phi_\sigma$. Multiplying \eqref{eq:exprQ} by $\alpha_J \beta$ and taking the expectation in $\psi$, we obtain thanks to \eqref{eq:Pphi=Qphi}:
\begin{align}
\lambda - \lambda_\sigma =& - \alpha_J \beta^{-1} \sigma \langle \psi , V \psi \rangle \notag \\
&+ \alpha_J \beta^{-1} \sigma^2 \langle V \psi , \big ( \bar{H} - \lambda_\sigma - \i 0 \big )^{-1} V \psi \rangle + \langle \psi , R_1 \psi \rangle.
\end{align}
Using again Theorem \ref{thm:reduc-limit-absorpt3} with $s=1$, this implies
\begin{align}
\lambda - \lambda_\sigma =& - \alpha_J \beta^{-1} \sigma \langle \psi , V \psi \rangle \notag \\
&+ \alpha_J \beta^{-1} \sigma^2 \langle V \psi , \big ( \bar{H} - \lambda - \i 0 \big )^{-1} V \psi \rangle + \langle \psi , R_2 \psi \rangle, \label{eq:lambda-lambdasigma_12}
\end{align}
where $R_2$ is a bounded operator on $\mathrm{Ran}(P)$ satisfying $\| R_2 \| \le C_2 | \sigma |^{5/2}$. In particular, $| \lambda-\lambda_\sigma | \le C_3 |\sigma| $. We then obtain from \eqref{eq:Pphi=Qphi} and \eqref{eq:exprQ} that
\begin{align}
\frac{ \lambda - \lambda_\sigma }{ \sigma } \psi &= - \alpha_J  \beta^{-1} P V P \psi + \alpha_J  \beta^{-1} \sigma P V \big ( \bar{H} - \lambda - \i 0 \big )^{-1} V P \psi + \sigma^{-1} R_2 \psi \notag \\
&= ( - PVP + \sigma R_3 )\psi, \phantom{ \big )^{-1} }
\end{align}
where $R_3$ is an operator on the finite dimensional space $\mathrm{Ran}(P)$ uniformly bounded in $\sigma$. It follows from the usual perturbation theory (see \cite{Ka}) that $\psi$ can be written as $\psi = \psi_1 + \sigma \psi_2$ where $\psi_1$ is an eigenstate of $-PVP$ and $\psi_2 \in \mathrm{Ran}(P)$. Now, multiplying \eqref{eq:lambda-lambdasigma_12} by $\alpha_J^{-1} \beta$ gives
\begin{align}
( \lambda - \lambda_\sigma ) \alpha_J^{-1} \beta &= - \sigma \inp { \psi , V \psi }
+ \sigma^2 \inp { V \psi , ( \bar H - \lambda - \i 0 )^{-1} V \psi }  +
\alpha_J^{-1} \beta \langle \psi , R_2 \psi \rangle \nonumber \\
&= - \sigma \inp { \psi , V \psi } +  \alpha_J^{-1} \sigma^2 \inp { V \psi , P V \psi } + \sigma^2 \inp { V \psi , ( H - \lambda- \i 0 )^{-1} \bar P V \psi } \nonumber \\
&\quad+ \alpha_J^{-1} \beta \langle \psi , R_2 \psi \rangle. \label{eq:general1}
\end{align}
By \eqref{eq:lambda-lambdasigma_12}, we can write 
\begin{equation}
\lambda - \lambda_\sigma = - \sigma \langle \psi , V \psi \rangle + \langle \psi , R_4 \psi \rangle,
\end{equation}
with $\| R_4 \| \le C_4 \sigma^2$, and hence
\begin{align}
( \lambda - \lambda_\sigma ) \alpha_J^{-1} \beta &= (\lambda - \lambda_\sigma) + \alpha_J^{-1} ( \lambda - \lambda_\sigma )^2 \notag \\
& = (\lambda - \lambda_\sigma) + \alpha_J^{-1} \sigma^2 \langle \psi , V \psi \rangle^2 + O( | \sigma |^3 ). \label{eq:(lambda-lambdasigma)2}
\end{align}
Since $\psi = \psi_1 + \sigma \psi_2$ where $\psi_1$ is an eigenstate of $-PVP$, we have
\begin{equation}\label{eq:general2_1}
\inp {\psi , V \psi}^2 - \| PV \psi \|^2 = O( |\sigma| ).
\end{equation}
Therefore,
\begin{equation}\label{eq:general2}
\alpha_J^{-1} \sigma^2 \inp { V \psi , P V \psi } - \alpha_J^{-1} \sigma^2 \langle \psi , V \psi \rangle^2 = O( | \sigma |^3 ).
\end{equation}
Combining Equations \eqref{eq:general1}, \eqref{eq:(lambda-lambdasigma)2} and \eqref{eq:general2}, the statement of the theorem follows.
\qed \newline

We come now to the proof of Theorem \ref{thm:fermi-golden-rule} on the absence of eigenvalues of  the perturbed Hamiltonian $H_\sigma=H+\sigma V$, generalizing Corollary \ref{cor:fermi_simple}: \\

\noindent{\bf Proof of Theorem \ref{thm:fermi-golden-rule}} \,
  Suppose first that Condition \ref{cond:perturbation2c} holds and that $V \in \mathcal V_2$. By Theorem \ref{thm:general}, there exists $\sigma_0>0$ such that if $\lambda_\sigma$ is an eigenvalue of $H_\sigma$ with $|\sigma| \le \sigma_0$, then \eqref{eq:2nd_order_lambdab} is satisfied. Taking the imaginary part of \eqref{eq:2nd_order_lambdab} contradicts \eqref{eq:18}.
  
Suppose now Condition \ref{cond:perturbation2} and that $V \in \mathcal B_{1,\gamma}$. Assume by contradiction that (\ref{eq:21}) is false. Then the second alternative
    \ref{item:2sf})  of  Corollary
    \ref{cor:upper-semic-pointii} holds. Hence we consider a sequence of
    normalized eigenstates
    $\psi_n\to \psi_\infty\in \Ran(P)$ of  a sequence of Hamiltonians
    $H_n:=H_{\sigma_n}$ given in terms of  a certain sequence of
    coupling constants $\sigma_n\to
    0,\;\sigma_n\neq 0$. Let
    $P_n=|\psi_n\rangle \langle \psi_n|$. 
As in the proof of Theorem \ref{thm:simple}, the equation $(H_n-\lambda_n)P_n=0$ is equivalent to \eqref{eq:22}. We notice that 
\begin{align}
  &\Im \parb { P_n V\bar PP_n}=-\Im \parb { P_n V PP_n}=\tfrac{\lambda-\lambda_n}{\sigma_n}\Im \parb { P_n PP_n}=0,\label{eq:23}
\end{align} due to the first equation of (\ref{eq:22}).
Next we apply $P_nV(H-\lambda-\i 0)^{-1}\bar P$ from  the left
in 
the second equation of (\ref{eq:22}), take the imaginary part and use (\ref{eq:23}) yielding
\begin{align}
  &\sigma_nP_n V\Im \parb { (H-\lambda-\i 0)^{-1}\bar P}VP_n\nonumber\\&=(\lambda_n-\lambda)\Im \parb { P_n V (H-\lambda-\i 0)^{-1}\bar PP_n}.\label{eq:23i}
\end{align}

Now we take the expectation of  (\ref{eq:23i}) in  the state $\psi_\infty$,
use the first equation of 
(\ref{eq:22})  and divide by $\sigma_n$ yielding 
\begin{align}
&  \Im  \inp{ (H-\lambda-\i 0)^{-1}\bar
    P}_{VP_n\psi_\infty} \notag \\
 &=\Im 
  \inp{P_nV(H-\lambda-\i 0)^{-1}\bar P
    P_nV}_{\psi_\infty}.
\end{align}
Again, using Condition \ref{cond:perturbation2}, we have that $\| \langle A \rangle^s \bar P
    P_n\|\to 0$ for $1/2<s<1$. We then conclude by letting $n\to \infty$ in the above
    identity, using Theorem \ref{thm:reduc-limit-absorpt2}, which yields
\begin{equation}\label{eq:25}
  \Im  \inp{ (H-\lambda-\i 0)^{-1}\bar
    P}_{V\psi_\infty}=0.
\end{equation} 
Clearly  (\ref{eq:25}) contradicts  (\ref{eq:18}).
\qed


\end{document}